\documentclass[journal,12pt,onecolumn,draftclsnofoot]{IEEEtran}

\usepackage{graphicx}%
\usepackage{subcaption}
\usepackage{multirow}%
\usepackage{amsmath,amssymb,amsfonts}%
\usepackage{amsthm}%
\usepackage{mathrsfs}%
\usepackage{xcolor}%
\usepackage{textcomp}%
\usepackage{manyfoot}%
\usepackage{booktabs}%
\usepackage{algorithm}%
\usepackage{algorithmicx}%
\usepackage{algpseudocode}%
\usepackage{listings}%
\usepackage{makecell}
\usepackage{comment}

\theoremstyle{thmstyleone}%
\newtheorem{theorem}{Theorem}

\newtheorem{proposition}[theorem]{Proposition}%

\newtheorem{lemma}[theorem]{Lemma}

\theoremstyle{thmstyletwo}%

\theoremstyle{thmstylethree}%
\newtheorem{definition}{Definition}%

\begin{document}

\title{Probabilistic Risk Sensitivity and Loss Aversion in Cumulative Prospect Theory}

\author{Symeon~Vaidanis~and~Marios~Kountouris
\thanks{S. Vaidanis is with the Communication Systems Department of EURECOM, Sophia Antipolis, 06410 France, and M. Kountouris is with the Communication Systems Department of EURECOM, Sophia Antipolis, 06410 France and Department of Computer Science and Artificial Intelligence of University of Granada, C/ Periodista Daniel Saucedo Aranda, Granada, 18071 Spain. E-mail: symeon.vaidanis@eurecom.fr, marios.kountouris@eurecom.fr, mariosk@ugr.es.}}

\maketitle

\begin{abstract}
This paper develops a binary-gamble framework for characterizing risk sensitivity and loss aversion in Cumulative Prospect Theory (CPT). The proposed probabilistic risk-sensitivity metric is defined as a probability-threshold ratio that determines acceptance and preference thresholds in choice problems involving either a certain outcome and a binary gamble or two binary gambles. We show how standard notions of symmetric and non-symmetric bet aversion can be recovered within this framework, and we compare the resulting threshold-based conditions with utility premia, probability premia, and Arrow--Pratt curvature measures. The analysis clarifies when these criteria coincide and when they diverge, particularly for increasing aversion conditions, binary gambles with unequal probability distributions, and settings involving probability weighting functions. We also identify technical restrictions that arise when CPT-utility functions are used to represent loss aversion at the reference point. The resulting framework provides a decision-theoretic interpretation of risk sensitivity that is directly tied to probability thresholds and complements existing premium-based approaches.
\end{abstract}

\begin{IEEEkeywords}
Cumulative Prospect Theory, Risk Sensitivity, Loss Aversion, Utility Function
\end{IEEEkeywords}

\section{Introduction}\label{Section1}

Decision making is a fundamental and ubiquitous aspect of human life, manifesting in a wide range of contexts, from social interactions and everyday choices to high-stakes situations involving financial decisions or health-related judgments. Individuals are continually required to evaluate alternatives, assess potential outcomes, and navigate trade-offs between gains and losses, often under conditions of stochastic environments, subjective evaluation of decision criteria, and incomplete information. The study of such decision processes, shaped not only by objective probabilities but also by subjective perceptions and psychological factors, has long held a central place in economics, psychology, and behavioral sciences, motivating the development of models aimed at interpreting the complex nature of human choice.

Building on the foundational work of D. Bernoulli, J. von Neumann, O. Morgenstern, F. P. Ramsey, and L. J. Savage, among others, Expected Utility Theory (EUT) \cite{EUT} has profoundly shaped modern economic theory, statistical reasoning, and decision science by providing a normative framework for describing rational choice under a subjective perspective among alternatives in a stochastic environment. Despite its elegance and conceptual appeal, EUT has faced significant empirical challenges, as numerous studies have shown that individuals often deviate systematically from its predictions. 
In response, Prospect Theory (PT) was introduced by \cite{Kahneman_Tversky_1979} as a descriptive model that accounts for key behavioral patterns, offering a more psychologically grounded perspective on how people make decisions among alternatives in a stochastic environment. PT extends EUT by introducing a more general framework that accounts for both gains and losses relative to a reference point. More formally, PT partitions the domain of the utility function into gains and losses, assumes risk aversion in the gain domain and risk seeking in the loss domain, and applies a non-linear transformation to outcomes.  Moreover, PT introduces probability weighting functions (PWFs), which distort objective probabilities to reflect the subjective perception of event occurrence. These functions capture the empirically observed tendencies of individuals to overweight low-probability events and underweight high-probability ones, behaviors that traditional models fail to explain. 
To address certain limitations of the original formulation, particularly in handling choices involving ranked probabilities, Cumulative Prospect Theory (CPT) was later developed by \cite{Kahneman_Tversky_1992} and further refined through the contributions of \cite{Kobberling_Wakker}, \cite{Prelec} and \cite{Neilson_2002_A}. CPT extends PT by introducing cumulative probability weighting, which is a critical modification of the PT framework in order not to violate first-order stochastic dominance.
As a result, CPT has become a central model in behavioral decision theory, with widespread applications in economics, finance, and psychology for explaining systematic deviations from rational choice. Beyond these traditional domains, CPT has also been applied in areas such as reinforcement learning \cite{CPT_RL}, control theory \cite{CPT_Control}, and telecommunications \cite{CPT_Alouini}. More recently, CPT has been leveraged in the emerging field of goal-oriented semantic information theory \cite{kountouris2021semantics}, where concepts such as risk sensitivity and subjective outcome evaluation are used to characterize the semantic relevance and perceived value of information or entities of the telecommunication system, \cite{vaidanis2025ICC, vaidanis2025SPAWC}.

In this work, we revisit core components of CPT by developing a threshold-based account of risk sensitivity and loss aversion at the reference point. Inspired by the graphical representation of risk premia in \cite{utility_premium_fundamental}, we use binary gambles to define a probability-threshold metric of risk sensitivity. This metric, which we call probabilistic risk sensitivity, provides a structured way to describe subjective risk preferences beyond the standard premium-based formulations. By reformulating symmetric and non-symmetric bet aversion, together with their increasing versions, we extend the analysis to nested binary gambles with equal and unequal probability distributions. We also analyze the classical EUT (Bernoulli) utility function through this binary-gamble framework, clarify the relation with probability premia, and show how cumulative probability weighting changes the relation between them. 

Our analysis shows that binary gambles serve not merely as illustrative tools, but as a robust methodological foundation capable of recovering and extending core CPT constructs.  Taken together, these contributions enrich the theoretical foundations of CPT by introducing structurally grounded and behaviorally coherent extensions, advancing risk-sensitive decision theory and offering a model that better reflects the complexities of human judgment under uncertainty.

The remainder of the paper is organized as follows. Section~\ref{Section2} reviews the CPT framework and fixes notation. Section~\ref{Section3} introduces probabilistic risk sensitivity for binary gambles and relates it to standard definitions of risk sensitivity and loss aversion. Section~\ref{Section4} compares utility premia, probability premia, Arrow--Pratt measures, and probabilistic risk sensitivity under a mean-value decision criterion. Section~\ref{Section5} concludes.

\section{Preliminaries on Cumulative Prospect Theory} \label{Section2}

In this section, we provide an overview of the mathematical framework underlying Cumulative Prospect Theory (CPT) \cite{Kahneman_Tversky_1992,WakkerTve93,ChaWak99}. Our goal is to introduce the key components and foundations that define CPT, including utility functions, probability weighting functions, and decision weights. 
This exposition lays the theoretical foundation for the subsequent analysis in order to provide the reader with a clear and rigorous understanding of the fundamental components of the model.

CPT offers a compelling alternative and can be viewed as a generalization of EUT for modeling decision-making under risk. CPT provides a framework for evaluating choices among a finite set of prospects, represented as a collection of real-valued random variables $\mathcal{G}$. For each gamble $R \in \mathcal{G}$, the agent assigns a deterministic valuation $V(R)$ based on the CPT framework. The decision rule is to choose the prospect with the highest subjective valuation: $R^* = \arg\max_{R \in \mathcal{G}} V(R)$.

CPT introduces the following key features that depart from the assumptions of EUT: (i) \textit{reference dependence}, (ii) \textit{loss aversion}, and (iii) \textit{rank dependence and cumulative probability weighting}. Moreover, CPT incorporates the following two subjective or behavioral features:
\begin{itemize}
\item \textit{Diminishing sensitivity to outcomes}: sensitivity to changes in outcomes diminishes as the distance from the reference point increases, for both gains and losses. As we detail below, this implies that the utility function is concave for gains and convex for losses, reflecting decreasing marginal sensitivity in both domains. This property, grounded in a quantitative analysis of the utility (value) function, is consistent with the probabilistic (prospect-based) analysis that relies on Jensen's inequality. In the original work of Kahneman and Tversky \cite{Kahneman_Tversky_1979}, this probabilistic (prospect-based) analysis was used to justify the concavity within each subdomain.
\item \textit{Probabilistic sensitivity}: individuals do not perceive and respond to objective probabilities in a linear manner. Instead, they tend to overweight small probabilities and underweight large ones. This behavior is captured by the probability weighting function in CPT, which reflects systematic distortions in the perception of probability.
\end{itemize}
It is important to note that \textit{diminishing sensitivity} is conceptually related to, but distinct from, \textit{loss aversion}: while loss aversion refers to the asymmetric valuation of losses relative to gains, diminishing sensitivity pertains to the curvature of the utility function within each domain. However, in this work, we demonstrate that non-linear sensitivity constitutes the fundamental mechanism of CPT, since the principle of saturation effect of diminishing sensitivity is inconsistent with Neilson’s formulation of aversion, which definition will be related in following section with the non-symmetric binary gambles. On the other hand, probabilistic sensitivity is directly tied to the probability weighting function, highlighting the departure from the linear treatment of probabilities assumed in EUT.

A decision maker, or more generally, an agent, is characterized by a reference point $x_0 \in \mathbb{R}$, a corresponding utility function $u: \mathbb{R} \to \mathbb{R}$, and a pair of probability weighting functions $w^{+}$, $w^{-}: [0, 1] \to [0, 1]$, corresponding to gains and losses, respectively. These components jointly capture how the agent evaluates outcomes and perceives uncertainty. We refer to the triple $(x_0, u, w^{\pm})$ as the CPT features of the agent.

\subsection{Utility function and Reference Dependence}

In CPT, value is perceived and preferences are shaped by deviations from a specified \emph{reference point} $x_0$. This reference point represents a benchmark level, such as an acquired or expected outcome, status quo, operating level, or satisfaction level, against which gains and losses are evaluated. The choice of $x_0$ may vary across individuals, contexts, and application scenarios, depending on psychological or situational factors.

A CPT-utility function is a piecewise-defined mapping $u:\mathbb{R} \to \mathbb{R}$  given by:
\[
u(x) =
\begin{cases}
u^+(x - x_0), & \text{if } x \geq x_0, \\
- u^-(x_0 - x), & \text{if } x < x_0,
\end{cases}
\]
where \( u^+: \mathbb{R}_{\geq 0} \to \mathbb{R}_{\geq 0} \) is the gain component, \( u^-: \mathbb{R}_{> 0} \to \mathbb{R}_{\geq 0} \) is the loss component, both \( u^+ \) and \( u^- \) are strictly increasing, continuous, and \( u^-(x) \geq u^+(x) \) for all \( x > 0 \), modeling loss aversion. In CPT, the utility function is typically concave over gains and convex over losses, reflecting diminishing sensitivity in both domains (with risk aversion for gains and risk seeking for losses around the reference point).

The domain of the value (utility) function is partitioned relative to $x_0$ into two regions: the loss domain $x < x_0$ and the gain domain $x \geq x_0$. In the loss domain, outcomes are assigned negative value, i.e., $u(x) < 0, \forall x < x_0$, and typically $\lim _{x \to x_0^{-}}u(x) \leq 0$. In the gain domain, outcomes are valued positively, i.e., $u(x)>0, \forall x > x_0$, and typically $\lim _{x \to x_0^{+}}u(x) \geq 0$. 
In general, the value at the reference point $u(x_0)$ can lie anywhere in the interval $\lim _{x \to x_0^{-}}u(x) \leq u(x_0) \leq \lim _{x \to x_0^{+}}u(x)$, exhibiting a discontinuity at $x_0$, i.e., $\lim_{x \to x_0^-} u(x) \neq \lim_{x \to x_0^+} u(x)$. This could make the utility function $u$ non-differentiable or non-smooth at $x_0$. However, in standard formulations of CPT, it is customary to normalize the reference point such that $\lim _{x \to x_0^{-}}u(x) = \lim _{x \to x_0^{+}}u(x) = u(x_0) = 0$, which simplifies analysis and reflects the idea that the reference point represents a neutral benchmark, neither a gain nor a loss.

This reference-dependent evaluation is a key departure from EUT, where utility is typically defined over absolute outcomes and rationality. In contrast, CPT’s distinction between gains and losses introduces systematic asymmetries in valuation, which are essential for capturing a wide range of empirically observed, even irrational, decision-making behaviors under risk.

\subsection{Risk Sensitivity}

In the context of monotonicity, the utility function is assumed to be strictly increasing over its entire domain, i.e., $u(x_1) < u(x_2), \; \forall x_1 < x_2$. As a result, the first derivative of the utility function satisfies:
\begin{equation}
\frac{\partial{u}}{\partial{x}} > 0, \quad \forall \; x \in \mathbb{R} \setminus \{x_0\}.
\end{equation}
The curvature of the utility function (marginal utility), reflected in its second derivative, characterizes the agent’s attitude toward risk (risk sensitivity) and captures how the agent perceives changes of varying magnitudes within each subdomain (gains or losses).
In the gain domain ($x > x_0$):
\begin{itemize}
    \item A convex utility function indicates increasing marginal utility, meaning the agent perceives larger gains (further from $x_0$) as increasingly valuable.
    \item A concave utility function indicates diminishing marginal utility, meaning the agent is more sensitive to changes (gains) closer to $x_0$.   
\end{itemize} 
In the loss domain ($x < x_0$), the interpretation is reversed:
\begin{itemize}
    \item A convex utility function reflects diminishing sensitivity to larger losses (farther from $x_0$),
    \item whereas a concave function implies increasing sensitivity to such losses.
\end{itemize}
This diverges from the standard assumption in EUT, where the utility function is typically assumed to be globally concave.

Furthermore, we consider a decision-maker choosing between an $N$-outcome gamble $X$ with probability distribution $\mathcal{P}$ and a sure (stationary) outcome $x_1$, under fairness condition $\mathbb{E}_P(X) = x_1$. The outcomes of the gamble lie within a single subdomain (i.e., entirely in the gain domain or entirely in the loss domain). In this setting:
\begin{itemize}
    \item A concave utility function represents a \emph{risk-averse} attitude, since the agent prefers the sure (stationary) outcome to the gamble; by Jensen's inequality this is expressed as $\mathbb{E}_P(u(X))< u(x_1) = u(\mathbb{E}_P(X))$.
    \item A convex utility function represents \emph{risk-seeking} behavior, since the agent prefers the gamble to the sure outcome; by Jensen's inequality this is expressed as $\mathbb{E}_P(u(X)) > u(x_1) = u(\mathbb{E}_P(X))$.
\end{itemize}
If the agent’s preferences are invariant to the scale of outcomes, their utility function is linear, indicating risk neutrality. In contrast, a constant utility function models an agent who is completely insensitive to changes in outcomes, expressing indifference across all values.
Within the classical CPT framework, the utility function is typically assumed to be S-shaped, satisfying:
\begin{align*}
\frac{\partial^2 u}{\partial x^2} &\leq 0 \quad \forall, x > x_0 \quad &\text{(concave in the gain domain)}, \\
\frac{\partial^2 u}{\partial x^2} &\geq 0 \quad \forall, x < x_0 \quad &\text{(convex in the loss domain)}.
\end{align*}
This structure captures the principle of diminishing marginal sensitivity: agents are more sensitive to changes close to $x_0$ and less sensitive as outcomes move further into the gain or loss regions.

\subsection{Arrow-Pratt Analysis about Risk Sensitivity}

In the framework of risk analysis developed by Arrow and Pratt \cite{Arrow,Pratt}, the main tools are the notions of the certainty equivalent and the risk premium. The certainty equivalent of a gamble $X$ with probability distribution $P$ is defined as $x_{CE} = u^{-1}\bigl(\mathbb{E}_P[u(X)]\bigr)$, and can be interpreted as the sure amount that delivers the same level of utility as the (expected) utility of the gamble under the given utility function. The risk premium can be divided into two subcases: the absolute risk premium and the relative risk premium, which are defined as follows:
\begin{subequations}
    \begin{equation}
         \pi_A = \mathbb{E}_P(X) - x_{CE}
    \end{equation}
    \begin{equation}
        \pi_R = \frac{\pi_A}{\mathbb{E}_P(X)} = 1 - \frac{x_{CE}}{\mathbb{E}_P(X)}
    \end{equation}
\end{subequations}
Here, $\pi_A$ denotes the absolute risk premium and $\pi_R$ the relative risk premium, both expressed in terms of the certainty equivalent $x_{CE}$ and the expected value $\mathbb{E}_P(X)$ of the gamble.

In their original papers, Arrow and Pratt first apply a first-order Taylor expansion of $u$ around the certainty equivalent and a second-order Taylor expansion around the mean of the gamble. They then take expectations of the second-order expansion of $u(x)$ around the mean. In this way, the absolute and relative risk premia are obtained in terms of the local curvature of the utility function as follows:
\begin{subequations}
    \begin{equation}
        \pi_A \approx - \frac{1}{2} \cdot \frac{u''(\mathbb{E}_P(X))}{u'(\mathbb{E}_P(X))} \cdot \sigma_X^2
        = \frac{1}{2} A(\mathbb{E}_P(X)) \cdot \sigma_X^2
        \;,\;
        A(x) = - \frac{u''(x)}{u'(x)}
    \end{equation}
    \begin{equation}
        \pi_R \approx - \frac{1}{2} \cdot \frac{u''(\mathbb{E}_P(X)) \cdot \mathbb{E}_P(X)}{u'(\mathbb{E}_P(X))} \cdot \frac{\sigma_X^2}{\mathbb{E}_P^2(X)}
        = \frac{1}{2} R(\mathbb{E}_P(X)) \cdot \frac{\sigma_X^2}{\mathbb{E}_P^2(X)}
        \;,\;
        R(x) = - \frac{x\,u''(x)}{u'(x)}
    \end{equation}
\end{subequations}
where $\sigma_X^2$ is the approximation of variance of the gamble, and $A(x)$ and $R(x)$ are the Arrow-Pratt coefficients of absolute (ARA) and relative risk aversion (RRA), respectively. The function, which has constant absolute risk aversion (CARA), is any affine transformation of the exponential function, $u(x) = \mu + \kappa \cdot \exp{(\alpha \cdot x)}$ and $A(x) = -\alpha$.

In the case of fair decision setup between a stationary state $x$ and a symmetric binary gamble around it $\{x-\delta,\frac{1}{2};x+\delta,\frac{1}{2}\}$, the absolute risk premium is defined as following:
\begin{equation}
    u(x+\pi_A(x,\delta,u)) = \frac{1}{2}u(x+\delta) + \frac{1}{2}u(x-\delta)
\end{equation}
and regarding the works of \cite{utility_premium_fundamental}, the utility premium for the previous defined decision setup is defined as
\begin{equation}
    \mathcal{U}(x,\delta) = u(x) - \left( \frac{1}{2}u(x+\delta) + \frac{1}{2}u(x-\delta) \right)
\end{equation}
We have to notice that the component $\mathbb{E}(x,\delta) = \frac{1}{2}u(x+\delta) + \frac{1}{2}u(x-\delta)$ is the mean value of the gamble.
Moreover, in a case of a decision setup between a stationary state $x$ and a symmetric binary gamble around it $\{x-\delta,\frac{1}{2};x+\delta,\frac{1}{2}\}$, the probability premium $\gamma$ is the excess in winning probability over a fair condition that makes the agent indifferent between the stationary state and the binary gamble, \cite{microeconomic_theory}.
\begin{equation}
    u(x) = \left(\frac{1}{2} + \gamma(x,\delta)\right) u(x+\delta) + \left(\frac{1}{2} - \gamma(x,\delta)\right)u(x-\delta)
\end{equation}

\subsection{Loss Aversion}

The CPT utility function exhibits \textit{loss aversion}, which refers to the phenomenon by which agents are more sensitive to losses than to equivalent gains. The utility function is steeper for losses than for gains, reflecting a behavioral bias in which individuals (agents) exhibit a stronger sensitivity to losses. 

Kahneman and Tversky introduced a related behavioral condition known as \emph{symmetric bet aversion} \cite{Kahneman_Tversky_1979}. This condition captures the idea that an agent strictly prefers not to engage in a symmetric gamble that offers equal magnitude gains and losses around the reference point. It is formally defined as:
\begin{equation} \label{eq: KT_loss_aversion}
    u(x_0 + \delta) + u(x_0 -\delta) < 0 \;,\; \forall \delta>0 \;,\; u(x_0)=0.
\end{equation} 
This condition implies that all symmetric fair gambles around the reference point are strictly rejected in favor of maintaining the status quo, reflecting a preference for certainty over balanced risk. By dividing each side with $\delta$ and taking the limit $\delta \to 0^+$, the marginal utility for losses exceeds that for gains at the reference point. Formally, this is expressed as: $u'(x_0^+) < u'(x_0^-)$, where $u'(x_0^+)$ and $u'(x_0^-)$ denote the right- and left-hand derivatives of the utility function at the reference point $x_0$, respectively.

They also introduced a stronger version of this condition, known as \emph{increasing symmetric bet aversion}, which imposes an ordering on the aversion to symmetric bets of increasing magnitude. It is expressed as: 
\begin{equation} \label{eq:KT_increasing_loss_aversion_1}
    u(x_0 + \delta_1) + u(x_0 - \delta_1) < u(x_0 + \delta_2) + u(x_0 - \delta_2) \;,\; \forall 0 \leq \delta_2 < \delta_1.
\end{equation} 
This condition states that the disutility of a symmetric gamble becomes more pronounced as the stakes increase. Under suitable regularity assumptions (e.g., differentiability of $u$), it can be reformulated in differential terms as:
\begin{equation} \label{eq:KT_increasing_loss_aversion_derivative}
    \frac{\partial{u}}{\partial{x}} \bigg\vert _{x = x_0 + \delta} < \frac{\partial{u}}{\partial{x}} \bigg\vert _{x = x_0 - \delta} \;,\; \forall \delta>0 \;,\; u(x_0) = 0.
\end{equation}
This inequality reinforces the asymmetry between gains and losses, emphasizing that marginal sensitivity to losses consistently exceeds that to gains for all deviations from the reference point.
Moreover, this condition implies that the degree of aversion to symmetric fair gambles increases monotonically with the magnitude of the deviation $\delta$, i.e., the likelihood of rejection becomes stronger as the absolute size of the gamble increases.

\cite{Neilson_2002_A} extended the standard definitions by introducing two refined concepts: \emph{Neilson's weak aversion} and \emph{Neilson's strong aversion}.
Weak loss aversion is defined by the condition
\begin{equation} \label{eq:Neilson_weak}
    \frac{u(z)}{z-x_0} \leq \frac{u(y)}{y-x_0} \;,\; \forall y < x_0 < z,
\end{equation}
while strong loss aversion strengthens this notion by comparing local marginal utilities:
\begin{equation} \label{eq:Neilson_strong}
    \frac{\partial{u}}{\partial{x}} \bigg\vert _{x = z} \leq \frac{\partial{u}}{\partial{x}} \bigg\vert _{x = y} \;,\; \forall y < x_0 < z.
\end{equation} 
Neilson's weak aversion implies the weak rejection of all fair non-symmetric binary gambles around the reference point in favor of maintaining the status quo. Neilson's strong aversion implies Neilson's weak aversion. The converse requires additional regularity and tail conditions; in particular, for nonsmooth S-shaped utility functions, weak aversion requires a suitable no-saturation or tail-slope condition on the loss domain. Under such a condition, the equivalence between Neilson’s two loss-aversion definitions can be established. Most loss-aversion definitions in the literature, with the exception of Neilson's strong aversion, can be interpreted through the utility-premium approach. More precisely, for the fair non-symmetric binary gamble with outcomes $x_0-\delta_1$ and $x_0+\delta_2$, define $\mathbb{E}(x_0,\delta_1,\delta_2)=\frac{\delta_1}{\delta_1+\delta_2}u(x_0+\delta_2)+\frac{\delta_2}{\delta_1+\delta_2}u(x_0-\delta_1)$ and $\mathcal{U}(x_0,\delta_1,\delta_2)=u(x_0)-\mathbb{E}(x_0,\delta_1,\delta_2)$. The relevant conditions are:
\begin{itemize}
    \item Symmetric bet aversion: $ \mathcal{U}(x_0,\delta,\delta) > 0 \Leftrightarrow \mathbb{E}(x_0,\delta,\delta) < u(x_0)$,
    \item Increasing symmetric bet aversion: $ \frac{\partial \mathcal{U}(x_0,\delta,\delta)}{\partial \delta} > 0 \Leftrightarrow\frac{\partial \mathbb{E}(x_0,\delta,\delta) }{\partial \delta}< 0$,
    \item Neilson's weak aversion: $\mathcal{U}(x_0,\delta_1,\delta_2) > 0 \Leftrightarrow \mathbb{E}(x_0,\delta_1,\delta_2) < u(x_0)$.
\end{itemize}

\subsection{Probability Weighting Functions}

A key feature of CPT is the non-linear distortion of probabilities, where objective probabilities are transformed through a PWF to reflect how individuals subjectively perceive uncertainty. The PWF captures the empirically observed tendency for individuals to overweight small probabilities and underweight moderate to high probabilities, deviating from the linear treatment in EUT. 

Formally, the probability weighting functions are denoted by $\omega^{\pm}: [0, 1] \to [0, 1]$, where $\omega^{+}$ is applied to gains and $w^{-}$ to losses. These functions are continuous, strictly increasing, and satisfy the boundary conditions: $\omega^{\pm}(0) = 0$ and $\omega^{\pm}(1) = 1$. 
It is important to note that the functions $\omega^{+}$ and $\omega^{-}$ assign the same decision weight to the reference point outcome if and only if the following symmetry condition holds: $\omega^-(p) + \omega^+(1-p) = 1 \; \forall p \in [0, 1]$, which is equivalent to using a single, unified PWF for the entire cumulative distribution \cite{Ingersoll}. 
Moreover, a fundamental distinction between PT and CPT lies in the domain over which the PWFs are applied. In PT, the weighting is applied directly to the probability mass function (in discrete settings) or probability density function (in continuous settings). This approach can violate first-order stochastic dominance. By contrast, CPT applies the weighting to cumulative probabilities, i.e., the cumulative distribution function (CDF) in the continuous case, preserving consistency with stochastic dominance and enabling a rank-dependent evaluation of outcomes.

One of the earliest and most influential forms of the PWF was proposed in \cite{Kahneman_Tversky_1992}, based on empirical and experimental evidence, and is defined as:
\begin{equation}
    \omega(p) = \frac{p^\delta}{\left( p^\delta + (1-p)^\delta \right)^{1/\delta}} \;,\; 0 < \delta \leq 1.
\end{equation}

The log-odds distortion function $w_{p_0,\gamma}$ is suggested in \cite{LogOdd95}, which is defined as:
\begin{equation}
    \text{Lo}(\omega_{p_0,\gamma}(p)) = \text{Lo}(p) + (1 - \gamma)\, \text{Lo}(p_0), \quad \forall\, p \in (0, 1),
\end{equation}
where $\text{Lo}(p) := \ln\left( \frac{p}{1 - p} \right)$ denotes the log-odds transformation.

One of the most widely used forms of the probability weighting function is the Prelec function \cite{Prelec}, defined as:
\begin{equation}
    \omega(p) = \exp \left( -\gamma \left( -\ln(p) \right)^\theta \right)
    \label{eq:Prelec_PWF}
\end{equation}
where $p \in (0,1)$, $\theta > 0$, and $\gamma > 0$. 
The parameter $\theta$ controls the shape or curvature of the function, shaping the degree of probability distortion, while $\gamma$ determines the overall elevation of the curve, i.e., the location of the inflection point relative to the identity line $\omega(p)=p$, effectively shifting the function upward or downward.  For $0 < \theta < 1$ the function exhibits the familiar inverse-S pattern, with overweighting of small probabilities and underweighting of large probabilities. When $\theta > 1$, the weighting function becomes S-shaped instead. Table~\ref{tab:parameters_probability_weighting_function} summarizes the main shape properties for the parameter combinations considered in this paper.
\begin{table}[h]
    \centering
    \begin{tabular}{|c|c|c|c|}
        \hline
         & $0 < \gamma < 1$ & $\gamma = 1$ & $1 < \gamma$\\
        \hline
        $0 < \theta < 1$ & \makecell{inverse S-shape, \\ $\tilde{p} < \omega(\tilde{p})$}  & \makecell{inverse S-shape, \\ $\tilde{p} = \omega(\tilde{p})$} & \makecell{inverse S-shape, \\ $\tilde{p} > \omega(\tilde{p})$} \\
        \hline
        $\theta = 1$ & \makecell{strictly concave, \\ $\tilde{p} < \omega(\tilde{p})$} & \makecell{linear, \\ $\tilde{p} = \omega(\tilde{p})$} & \makecell{strictly convex, \\ $\tilde{p} > \omega(\tilde{p})$} \\
        \hline
        $\theta > 1$ & \makecell{S-shape, \\ $\tilde{p} < \omega(\tilde{p})$} & \makecell{S-shape, \\ $\tilde{p} = \omega(\tilde{p})$} & \makecell{S-shape, \\ $\tilde{p} > \omega(\tilde{p})$} \\
        \hline
    \end{tabular}
    \caption{Shape properties of the Prelec probability weighting function. The inflection point is denoted by $\tilde{p}$.}
    \label{tab:parameters_probability_weighting_function}
\end{table}

Taken together, the parameter combinations in Table~\ref{tab:parameters_probability_weighting_function} produce both inverse-S and S-shaped weighting patterns. The standard CPT pattern of overweighting small probabilities and underweighting large probabilities arises in the inverse-S case; other parameter regimes may generate different distortions relative to the identity line.

\subsection{Common Parametric CPT Utility Functions}

Two widely used utility functions within the CPT framework have been proposed in the literature. The first is the Kahneman--Tversky utility function \cite{Kahneman_Tversky_1992}, defined as follows:
\begin{equation}
    u(x) = \left\{
        \begin{array}{ll}
            \left(x - x_0 \right)^\alpha & \textrm{for} ~x \geq x_0 \\
            - \lambda\left(x_0 - x \right)^\beta & \textrm{for} ~x < x_0 \\
        \end{array} 
    \right. 
    \label{eq:KT_utility}
\end{equation}
where $\alpha,\beta\in(0,1]$ govern curvature and capture diminishing sensitivity to gains and losses, respectively, while $\lambda>0$ represents the degree of loss aversion. Despite its intuitive appeal and empirical relevance, this formulation has two notable limitations: (i) it is generally not differentiable at $x_0$; for $0<\alpha,\beta<1$, both one-sided marginal values diverge to $+\infty$, whereas for $\alpha=\beta=1$ the one-sided derivatives are finite and equal to $1$ and $\lambda$, respectively; and (ii) symmetric bet aversion is satisfied only under the parameter restrictions identified in \cite{Al-Nowaihi}, including the benchmark case $\alpha=\beta$ and $\lambda>1$.

Another widely used utility function, proposed by K\"{o}bberling and Wakker \cite{Kobberling_Wakker}, is given by:
\begin{equation}
    u(x) = \left\{
        \begin{array}{ll}
            \frac{1 - \exp\left( - \alpha (x - x_0)  \right)}{\alpha} & \text{for}~ x_0 \leq x \\
            - \lambda \frac{1 - \exp\left( - \beta (x_0 - x)  \right)}{\beta} & \text{for}~ x < x_0 \\
        \end{array} 
    \right. 
    \label{eq:KW_utility}
\end{equation}
where $\lambda$, $\alpha$, and $\beta$ are agent-specific parameters. The positivity of all parameters ensures that the function is strictly increasing and exhibits the S-shape characteristic of CPT.
This formulation allows for greater flexibility than the classical power function, particularly in controlling curvature and asymmetry. Moreover, the condition for increasing symmetric bet aversion can be satisfied if $\alpha > \beta$ and $\lambda > 1$ \cite{Dhami_book}.

\subsection{Generic Framework of Cumulative Prospect Theory}

The subjective value attributed to the prospect $R$, which is a discrete random variable, is given by
\begin{equation}
    V(R) = \sum_{i=-n}^{m} \Omega_i \cdot u(r_i),
\end{equation}
where the cumulative probability weighting under the ordering $r_{-n} \leq r_{-n+1} \leq \dots r_{0} \leq \dots \leq r_{m-1} \leq r_{m}$ is the following
\begin{subequations}
    \begin{equation}
        P_{-i} = p(R = r_{-n}) + \dots + p(R = r_{-i}) \rightarrow \Omega_{-i} = \omega^-(P_{-i}) - \omega^-(P_{-i-1})
    \end{equation}
    \begin{equation}
        P_{i}^c = p(R = r_{i}) + \dots + p(R = r_{m}) \rightarrow \Omega_{i} = \omega^+(P_{i}^c) - \omega^+(P_{i+1}^c)
    \end{equation}
\end{subequations}
Following the equal-weighting argument for the reference-point outcome in \cite{Ingersoll}, a unified weighting representation is admissible only under the symmetry condition $\omega^-(p)+\omega^+(1-p)=1$ for all $p\in[0,1]$. This condition relates the gain and loss weighting functions but does not generally imply the stronger requirement $\omega^- = \omega^+$.

For a continuous prospect $R$, the CPT value may be written in Stieltjes form as
\begin{equation}
    V(R)=\int_{-\infty}^{x_0}u(r)\,d[\omega^-(F_R(r))]-\int_{x_0}^{+\infty}u(r)\,d[\omega^+(\bar F_R(r))].
\end{equation}
The minus sign in the gain term accounts for the fact that the survival function $\bar F_R(r)=1-F_R(r)$ is decreasing in $r$. Here $F_R(r)=\mathbb{P}(R\le r)$ and $\bar F_R(r)=\mathbb{P}(R>r)$ denote the CDF and survival function, respectively. Under a valid single-weighting representation, the valuation can be written as
\begin{equation}
    V(R)=\int_{-\infty}^{+\infty}u(r)\,f_R(r)\,\frac{d\omega(F_R(r))}{dF_R(r)}\,dr,
\end{equation}
where $\omega$ is the common weighting function in the unified representation and $f_R$ is the density of $R$.

\section{Probabilistic Risk Sensitivity: A Gamble-based Approach}\label{Section3}

The curvature of the utility function within each subdomain and the definition of loss aversion in CPT are traditionally motivated by the analysis of gambles. Arrow and Pratt \cite{Arrow,Pratt} characterize risk attitudes through risk and probability premia, whereas many loss-aversion conditions are formulated through the utility premium associated with gamble rejection. Here, we develop a complementary binary-gamble framework that interprets risk behavior through probability thresholds. The construction is inspired by the graphical approach of \cite{utility_premium_fundamental}; related graphical interpretations of probability premia appear in \cite{probability_premium_graphical}, but they do not directly provide the threshold-based comparison developed below.

\subsection{Basic Concept of Probabilistic Risk Sensitivity: Binary Gambles around a Stationary State}

To start with, we consider a binary gamble in which an agent faces two options: either to remain in the current state $x$, or to engage in a gamble that leads to one of two possible outcomes: a loss to state $(x - \delta_1)$ with probability $1-\rho$ or a gain to state $(x + \delta_2)$ with probability $\rho$, where $\delta_1, \delta_2 > 0$. These outcomes may lie in the same or in different subdomains relative to $x$. Importantly, the outcomes are not assumed to be symmetric.

Assuming the utility function $u(\cdot)$ is strictly increasing (i.e., strictly monotonic), we express the utility of the current state as a weighted average of the utilities of the potential outcomes of the gamble. That is,
\begin{equation}
u(x) = \tau \cdot u(x + \delta_2) + (1 - \tau) \cdot u(x - \delta_1), \quad \tau \in [0, 1].
\end{equation}
Here, $\tau$ represents the weight (or subjective probability) assigned to the gain outcome, and $(1 - \tau)$ is the corresponding weight for the loss outcome. This formulation reflects the indifference condition under which the decision-maker is just willing to accept the gamble rather than remain at state $x$.

The expected utility of the gamble is given by:
\begin{equation}
\mathbb{E}[u(\text{gamble})] = \rho \cdot u(x + \delta_2) + (1 - \rho) \cdot u(x - \delta_1), \quad \rho \in [0, 1].
\end{equation}
Now, consider the graphical representation of the utility function. If we draw a straight line $(\epsilon)$ connecting the points $u(x - \delta_1)$ and $u(x + \delta_2)$ on the utility curve, then every point on this chord represents a convex combination of the two endpoint utilities, and in particular the point generated by the weights $(1-\rho, \rho)$ has ordinate equal to the expected utility of the gamble. This statement can be extracted by the following arguments: The straight line $(\epsilon_1)$ is given by $f(\rho) = (1 - \rho)\,(x - \delta_1) + \rho\,(x + \delta_2),$ which lies on the $x$-axis, and the point $(f(\rho),0)$ describes the expected value of the gamble under absence of risk sensitivity, i.e., $f(\rho)\equiv \mathbb{E}[\text{gamble}]$. The ratio between length of the straight lines $\left((x - \delta_1,0),(f(\rho),0)\right)$ and $\left((f(\rho),0),(x + \delta_2,0)\right)$ is equal to $\frac{(x + \delta_2) - f(\rho)}{f(\rho) - (x - \delta_1)} = \frac{1-\rho}{\rho}.$ Continuing, by using fundamental geometric properties the following triangles are similar: $\left( (x - \delta_1,u(x - \delta_1)),(f(\rho),u(x - \delta_1)),(f(\rho),y_{(\epsilon)}) \right)$ and $\left( (x + \delta_2,u(x + \delta_2)),((x + \delta_2),y_{(\epsilon)}),(f(\rho),y_{(\epsilon)}) \right)$, where the $y_{(\epsilon)}$ is the co-domain value of the $(\epsilon)$ straight line at domain point $(f(\rho),0)$. The ratio between length of the straight lines $\left( (0,u(x + \delta_2)),(0,y_{(\epsilon)}) \right)$ and $\left( (0,y_{(\epsilon)}),(0,u(x - \delta_1)) \right)$ is equal to $\frac{u(x + \delta_2) - y_{(\epsilon)}}{y_{(\epsilon)} - u(x - \delta_1)} = \frac{1-\rho}{\rho}$, and hence $y_{(\epsilon)} = (1-\rho)\,u(x - \delta_1) + \rho\,u(x + \delta_2) \equiv \mathbb{E}[u(\text{gamble})]$. Therefore, the expected utility can be read directly as the ordinate of the point on the chord $(\epsilon)$ determined by the same probability weight $\rho$. Under the fairness condition $\mathbb{E}[\text{gamble}] = x$, we have $x = (1-\rho)(x-\delta_1)+\rho(x+\delta_2)$, which implies $\rho=\frac{\delta_1}{\delta_1+\delta_2}$. Hence, the point on the chord corresponding to the current state $x$ is obtained by taking $\rho=\frac{\delta_1}{\delta_1+\delta_2}$. In this line, we can identify the point on this line that corresponds to the current state $x$. This point, denoted by $A$, lies on the chord between the endpoints and is given by the convex combination:

\begin{equation}
A = \frac{\delta_1}{\delta_1 + \delta_2} \cdot u(x + \delta_2) + \frac{\delta_2}{\delta_1 + \delta_2} \cdot u(x - \delta_1)
\end{equation}
This point $A$ represents the linear interpolation of the utility values at the two outcomes, weighted according to their relative distances from the current state. It is useful for comparing the actual utility $u(x)$ with the linear approximation to assess the curvature of the utility function (i.e., risk aversion, neutrality, or seeking behavior).

It is important to note that the function
\begin{equation}
f(z) = z \cdot u(x + \delta_2) + (1 - z) \cdot u(x - \delta_1), \quad z \in [0, 1]
\end{equation}
is strictly increasing in $z$, as it is a convex combination of strictly increasing functions. This property allows us to compare the expected utility of the gamble to the utility of the current state $x$ in a meaningful way.

Under the expected utility criterion, the decision to reject the gamble occurs when: 
\[\mathbb{E}[u(gamble)] < u(x) \Rightarrow \rho < \tau.\]
Conversely, the decision to accept the gamble is made when:
\[\mathbb{E}[u(gamble)] > u(x) \Rightarrow \tau < \rho.\]
This implies that the parameter $\tau$ represents the minimum probability of gain required for the agent to be indifferent between playing the gamble and staying at $x$. Hence, we define $\tau \equiv \mathbb{P}_{gain}(x,\delta_1,\delta_2)$ and naturally, $\mathbb{P}_{gain}(x,\delta_1,\delta_2) + \mathbb{P}_{loss}(x,\delta_1,\delta_2) = 1$, which is the minimum level of certainty regarding the gain outcome required to justify playing the game.

Thus, the utility-based indifference condition becomes:
\[
    u(x) = \mathbb{P}_{gain}(x,\delta_1,\delta_2) \cdot u(x+\delta_2) + (1 - \mathbb{P}_{gain}(x,\delta_1,\delta_2)) \cdot u(x - \delta_1). 
\]
Solving for the gain threshold probability, we obtain:
\begin{equation}
    \mathbb{P}_{gain}(x,\delta_1,\delta_2) = \frac{u(x) - u(x - \delta_1)}{u(x+\delta_2) - u(x - \delta_1)}.
\end{equation} 
Equivalently, the ratio of the required gain probability to the complementary loss probability is:
\begin{equation}
    \frac{\mathbb{P}_{gain}(x,\delta_1,\delta_2)}{\mathbb{P}_{loss}(x,\delta_1,\delta_2)} =  \frac{u(x) - u(x - \delta_1)}{u(x+\delta_2) - u(x)}.
\end{equation} 
It is evident that a higher value of $\frac{u(x) - u(x - \delta_1)}{u(x + \delta_2) - u(x - \delta_1)}$ corresponds to a higher threshold probability of gain required to accept the gamble over the stationary status quo. Hence, this ratio expresses how much more utility is needed to compensate for the loss, relative to the potential gain. Therefore, it can serve as a local measure of risk sensitivity around the status quo, which we define as:
\begin{equation}
    \mathcal{RS}_{stat}(x,\delta_1,\delta_2) := \frac{u(x) - u(x - \delta_1)}{u(x+\delta_2) - u(x)}.
\end{equation}
Next, we define the following notions of risk sensitivity based on the mean value decision criterion:
\begin{itemize}
    \item \emph{Probabilistic symmetric bet aversion}: For a decision problem between a stationary state $x$ and a binary gamble $\{x-\delta,1-p;x+\delta,p\},\delta>0$, the subjective probability threshold described by a utility function $u$ is greater than the objective probability threshold described by a linear utility function
    \begin{equation}
        \mathcal{RS}_{stat}(x,\delta) > 1
    \end{equation}
    \item \emph{Probabilistic increasing bet aversion}: For a decision problem between a stationary state $x$ and a binary gamble $\{x-\delta,1-p;x+\delta,p\},\delta>0$, the subjective probability threshold described by a utility function $u$ is greater than the objective probability threshold described by a linear utility function and also the subjective probability threshold is an increasing function of $\delta$
    \begin{equation}
        \mathcal{RS}_{stat}(x,\delta) > 1 \text{ and } \frac{\partial \mathcal{RS}_{stat}(x,\delta)}{\partial \delta} > 0
    \end{equation}
    \item \emph{Probabilistic non-symmetric bet aversion}: For a decision problem between a stationary state $x$ and a binary gamble $\{x-w\cdot\delta_1,1-p;x+w\cdot\delta_2,p\},\delta_1,\delta_2,w>0$ for a fixed direction $(\delta_1,\delta_2)\in\mathbb{R}_{++}^2$ and stake scale $w>0$, the subjective probability threshold described by a utility function $u$ is greater than the objective probability threshold described by a linear utility function
    \begin{equation}
        \mathcal{RS}_{stat}(x,\delta_1,\delta_2,w) > \frac{\delta_1}{\delta_2}
    \end{equation}
    \item \emph{Probabilistic increasing non-symmetric bet aversion}: For a decision problem between a stationary state $x$ and a binary gamble $\{x-w\cdot\delta_1,1-p;x+w\cdot\delta_2,p\},\delta_1,\delta_2,w>0$ for a fixed direction $(\delta_1,\delta_2)\in\mathbb{R}_{++}^2$ and stake scale $w>0$, the subjective probability threshold described by a utility function $u$ is greater than the objective probability threshold described by a linear utility function and also the subjective probability threshold is an increasing function of $w$
    \begin{equation}
        \mathcal{RS}_{stat}(x,\delta_1,\delta_2,w) > \frac{\delta_1}{\delta_2} \text{ and } \frac{\partial \mathcal{RS}_{stat}(x,\delta_1,\delta_2,w)}{\partial w} > 0
    \end{equation}
\end{itemize}
We have to underline that the notion \emph{increasing} can have different interpretation. More precisely, in this subsection, we use this adjective definition in the context of a decision problem between a stationary state and a binary gamble. At the following subsection, we are going to discuss the approach of \emph{increasing} notion by Kahneman-Tversky as it is described in Equation \ref{eq: KT_loss_aversion}.

\subsection{Extension of the Probabilistic Risk Sensitivity: Binary Gambles with Equal or Unequal Probability Distribution}

In the previous analysis, the reference point for evaluating the gamble was the stationary status quo, denoted by $x$. A natural extension of this framework is to consider the comparison between two binary gambles that share the same probability distribution but differ in their outcome values. This generalization allows us to assess relative risk preferences across different contexts, rather than with respect to a single fixed reference point.

In particular, consider the following two binary gambles:
\begin{equation}
    A = \{a, \rho_{\text{loss}}; \, b, \rho_{\text{gain}}\}, \quad \text{with } a < b
\end{equation}
\begin{equation}
    B = \{c, \rho_{\text{loss}}; \, d, \rho_{\text{gain}}\}, \quad \text{with } c < d
\end{equation}
where $\rho_{\text{gain}} + \rho_{\text{loss}} = 1$. Let $\mathbb{E}[A]$ and $\mathbb{E}[B]$ denote the objective expected values of gambles $A$ and $B$, respectively, and let $\mathbb{E}[u(A)]$, $\mathbb{E}[u(B)]$ denote their corresponding expected utilities.

We examine how the relationship between the outcomes and the gain/loss probabilities effects the comparison between the gambles.
\begin{itemize}
    \item If $\mathbb{E}[A] > \mathbb{E}[B]$, then:
    \begin{itemize}
        \item If $b > d$, the inequality
        \[
            \frac{c - a}{b - d} < \frac{\rho_{\text{gain}}}{\rho_{\text{loss}}}
        \]
        must hold. Furthermore, if $c \leq a$, this inequality is satisfied for all $\rho_{\text{gain}} \in (0,1)$.

        \item If $b < d$, the inequality reverses:
        \[
            \frac{c - a}{b - d} > \frac{\rho_{\text{gain}}}{\rho_{\text{loss}}}
        \]
        In this case, if $c \geq a$, there exists no $\rho_{\text{gain}} \in (0,1)$ for which the inequality holds.
    \end{itemize}

    \item If $\mathbb{E}[A] < \mathbb{E}[B]$, then:
    \begin{itemize}
        \item If $b > d$, we require:
        \[
            \frac{c - a}{b - d} > \frac{\rho_{\text{gain}}}{\rho_{\text{loss}}}
        \]
        and if $c \leq a$, the inequality cannot hold for any $\rho_{\text{gain}} \in (0,1)$.

        \item If $b < d$, the inequality becomes:
        \[
            \frac{c - a}{b - d} < \frac{\rho_{\text{gain}}}{\rho_{\text{loss}}}
        \]
        and if $c \geq a$, the inequality holds for all $\rho_{\text{gain}} \in (0,1)$.
    \end{itemize}
\end{itemize}

The non-trivial cases arise when the two gambles are nested, specifically when the outcomes of one gamble lie strictly within the range of the other. These cases include: $a < c < d < b$ and $c < a < b < d$. 
These two configurations are symmetric in structure and yield analogous results. Thus, the conclusions can be unified as follows:
\begin{itemize}
    \item If
    \[
        \frac{(\text{loss}_{\text{inner}} - \text{loss}_{\text{outer}})}{(\text{gain}_{\text{outer}} - \text{gain}_{\text{inner}})} > \frac{\rho_{\text{gain}}}{\rho_{\text{loss}}}
        \quad \Rightarrow \quad \mathbb{E}[\text{inner}] > \mathbb{E}[\text{outer}]
    \]
    
    \item If
    \[
        \frac{(\text{loss}_{\text{inner}} - \text{loss}_{\text{outer}})}{(\text{gain}_{\text{outer}} - \text{gain}_{\text{inner}})} < \frac{\rho_{\text{gain}}}{\rho_{\text{loss}}}
        \quad \Rightarrow \quad \mathbb{E}[\text{inner}] < \mathbb{E}[\text{outer}]
    \]
\end{itemize}
where the following notation is used:
\begin{itemize}
    \item \textbf{inner}: the gamble whose two outcomes are more closely spaced. For instance, in the case $a < c < d < b$, gamble $B$ is the inner gamble.
    
    \item \textbf{outer}: the gamble whose outcomes span a wider range. In the case $a < c < d < b$, gamble $A$ is the outer gamble.
    
    \item \(\text{loss}_{\text{inner}}\): the loss outcome of the inner gamble. For example, $c$ in $a < c < d < b$.
    
    \item \(\text{gain}_{\text{inner}}\): the gain outcome of the inner gamble. For example, $d$ in $a < c < d < b$.
    
    \item \(\text{loss}_{\text{outer}}\): the loss outcome of the outer gamble. For example, $a$ in $a < c < d < b$.
    
    \item \(\text{gain}_{\text{outer}}\): the gain outcome of the outer gamble. For example, $b$ in $a < c < d < b$.
\end{itemize}
This formulation captures how the structure of nested gambles interacts with the probability ratio $\frac{\rho_{\text{gain}}}{\rho_{\text{loss}}}$ to determine which gamble has the higher expected value.

With a similar line of reasoning, we can extend the analysis to the subjective case, where outcomes are evaluated through a utility function $u(\cdot)$. The expected utilities of the gambles now guide the preference comparison. The following results hold:
\begin{itemize}
    \item If
    \[
        \frac{u(\text{loss}_{\text{inner}}) - u(\text{loss}_{\text{outer}})}{u(\text{gain}_{\text{outer}}) - u(\text{gain}_{\text{inner}})} > \frac{\rho_{\text{gain}}}{\rho_{\text{loss}}}
        \quad \Rightarrow \quad \mathbb{E}[u(\text{inner})] > \mathbb{E}[u(\text{outer})]
    \]
    \item If
    \[
        \frac{u(\text{loss}_{\text{inner}}) - u(\text{loss}_{\text{outer}})}{u(\text{gain}_{\text{outer}}) - u(\text{gain}_{\text{inner}})} < \frac{\rho_{\text{gain}}}{\rho_{\text{loss}}}
        \quad \Rightarrow \quad \mathbb{E}[u(\text{inner})] < \mathbb{E}[u(\text{outer})]
    \]
\end{itemize}
These inequalities describe how the subjective ranking of the gambles depends not only on the structure of the outcomes but also on the curvature of the utility function. The analysis highlights the role of risk preferences in shaping choice when objective equality in probabilities is preserved.

It is important to emphasize that since the utility function $u$ is strictly increasing, the sign relationships between outcomes remain unchanged when transitioning from the objective to the subjective evaluation. In other words, the ordering of outcomes is preserved under the transformation by $u$. Mathematically, this implies:
\begin{equation}
    \text{loss}_{\text{inner}} > [<] \text{loss}_{\text{outer}} \;\; \Leftrightarrow \;\; u(\text{loss}_{\text{inner}}) > [<] u(\text{loss}_{\text{outer}})
\end{equation}
\begin{equation}
    \text{gain}_{\text{inner}} > [<] \text{gain}_{\text{outer}} \;\; \Leftrightarrow \;\; u(\text{gain}_{\text{inner}}) > [<] u(\text{gain}_{\text{outer}})
\end{equation}
Therefore, if the two gambles are not nested and one is preferred over the other in the \textit{objective domain} (i.e., based on expected monetary values), the same preference will hold in the \textit{subjective domain} (i.e., based on expected utilities). However, the magnitude of the preference, that is, how strongly one gamble is favored, may differ due to the shape of the utility function. In particular, if the corresponding ratio is negative, the decision becomes trivial, since expected utility will always favor the gamble with strictly better outcomes.
As a result, we define a general expression that captures risk sensitivity between binary gambles around a reference (stationary) state with equal probability distributions as:
\begin{equation}
    \mathcal{RS}_{\text{gamble}}^{\text{equal}}(x, \delta_1, \delta_2, \delta_3, \delta_4)  := 
    \frac{u(x - \delta_3) - u(x - \delta_1)}{u(x + \delta_4) - u(x + \delta_2)},
    \quad \text{with } 0 \leq \delta_3 \leq \delta_1,\; 0 \leq \delta_2 \leq \delta_4
\end{equation}
This ratio reflects how differences in losses and gains are valued under the utility function. A higher value of $\mathcal{RS}_{\text{gamble}}^{\text{equal}}$ indicates a stronger aversion to gambling with a wider spread in outcomes, assuming symmetric probabilities.

Regarding the risk aversion definitions for nested gambles with equal probability distribution, we propose the following definitions:
\begin{itemize}
    \item \emph{Probabilistic symmetric nested bet aversion with equal probability distribution}: For a decision problem around a reference (stationary) state $x$ between two binary gambles $\{x-\delta_1,1-p;x+\delta_1,p\},\{x-\delta_2,1-p;x+\delta_2,p\}$ for a fixed direction $(\delta_1,\delta_2,\delta_2,\delta_1)\in\mathbb{R}_{++++}^4$, the subjective probability threshold described by a utility function $u$ is greater than the objective probability threshold described by a linear utility function
    \begin{equation}
        \mathcal{RS}_{\text{gamble}}^{\text{equal}}(x, \delta_1, \delta_2, \delta_2, \delta_1) > \frac{(x - \delta_2) - (x - \delta_1)}{(x + \delta_1) - (x + \delta_2)}
    \end{equation}
    This definition is an equivalent gamble-based approach of \emph{increasing symmetric bet aversion}.
    \item \emph{Probabilistic increasing symmetric nested bet aversion with equal probability distribution}: For a decision problem around a reference (stationary) state $x$ between two binary gambles  $\{x-\delta_1,1-p;x+\delta_1,p\},\{x-\delta_2,1-p;x+\delta_2,p\}$ for a fixed direction $(\delta_1,\delta_2,\delta_2,\delta_1)\in\mathbb{R}_{++++}^4$, the subjective probability threshold described by a utility function $u$ is greater than the objective probability threshold described by a linear utility function and also the subjective probability threshold is an increasing function of $\delta_1$
    \begin{equation}
    \begin{split}
        & \mathcal{RS}_{\text{gamble}}^{\text{equal}}(x, \delta_1, \delta_2, \delta_2, \delta_1) > \frac{(x - \delta_2) - (x - \delta_1)}{(x + \delta_1) - (x + \delta_2)} \text{ and } \\
        & \quad \quad \quad \quad \quad \frac{\partial \mathcal{RS}_{\text{gamble}}^{\text{equal}}(x, \delta_1, \delta_2, \delta_2, \delta_1)}{\partial \delta_1} > 0
    \end{split}
    \end{equation}
    The term "increasing" refers to scaling the outer gamble while holding the inner gamble fixed. This convention is consistent with the earlier stationary-state formulation, where increasing aversion is defined by scaling the risky alternative.
    \item \emph{Probabilistic non-symmetric nested bet aversion with equal probability distribution}: For a decision problem around a reference (stationary) state $x$ between two binary gambles $\{x-\delta_1,1-p;x+\delta_4,p\},\{x-\delta_3,1-p;x+\delta_2,p\},0\leq \delta_3\leq \delta_1,\;0\leq \delta_2\leq \delta_4$, the subjective probability threshold described by a utility function $u$ is greater than the objective probability threshold described by a linear utility function
    \begin{equation}
        \mathcal{RS}_{\text{gamble}}^{\text{equal}}(x, \delta_1, \delta_2, \delta_3, \delta_4) > \frac{(x - \delta_3) - (x - \delta_1)}{(x + \delta_4) - (x + \delta_2)}
    \end{equation}
    This definition is an equivalent gamble-based approach of \emph{Neilson's strong aversion}.
\end{itemize}

This approach can be further generalized to handle comparisons between two binary gambles around a reference (stationary) state with unequal probability distributions.
Let us define the following metric:
\begin{subequations}
\begin{equation}
     \mathcal{RS}_{\text{gamble}}^{\text{unequal}}(x, \delta_1, \delta_2, \delta_3, \delta_4, p_A, p_B) := 
    \frac{u(x - \delta_3) \cdot \frac{1 - p_B}{1 - p_A} - u(x - \delta_1)}{u(x + \delta_4) - u(x + \delta_2) \cdot \frac{p_B}{p_A}}
\end{equation}
\begin{equation}
    \begin{cases}
        \mathcal{RS}_{\text{gamble}}^{\text{unequal}}(x, \ldots)  > \frac{p_A}{1 - p_A} \Rightarrow \mathbb{E}[u(\text{gamble B})] > \mathbb{E}[u(\text{gamble A})] \\
         \mathcal{RS}_{\text{gamble}}^{\text{unequal}}(x, \ldots)  < \frac{p_A}{1 - p_A} \Rightarrow \mathbb{E}[u(\text{gamble B})] < \mathbb{E}[u(\text{gamble A})]
    \end{cases}
\end{equation}
\end{subequations}
where $\text{gamble A} = \{x - \delta_1,\; 1 - p_A;\; x + \delta_4,\; p_A\}$ and $\text{gamble B} = \{x - \delta_3,\; 1 - p_B;\; x + \delta_2,\; p_B\}$.
We assume the following conditions hold:
\[
u(x - \delta_1) \cdot (1 - p_A) < u(x - \delta_3) \cdot (1 - p_B), \quad 
u(x + \delta_2) \cdot p_B < u(x + \delta_4) \cdot p_A
\]
This generalized risk sensitivity ratio evaluates how much more (or less) favorable gamble B is compared to gamble A when both outcome values and probabilities differ.

Unlike the case of equal-probability gambles, where monotonicity of $u$ guarantees that objective and subjective preferences align, the decision may differ when probabilities are unequal. This divergence arises from the non-linear weighting of gains and losses in the utility space, which can shift the preference ordering between gambles.

\subsection{Dependence of Stationary State at Risk Sensitivity Metric}

In this section, we are going to investigate the influence of the stationary state at the probability threshold of the probabilistic risk sensitivity metrics, which was proposed in the previous subsections. The motivation behind this search is the Arrow-Pratt analysis, \cite{Arrow, Pratt}, about the risk aversion metrics and how the stationary state affects them. 

In our proposed framework, the dependence of the stationary state, around which a decision problem is deployed, at the risk sensitivity metric is more complicated. To start with, we provide the following lemma about the \emph{constant probabilistic risk sensitive} (CPRS) utility function at a domain. 
\begin{lemma}
    Any affine transformation of a utility function $u(x)$, $\mu + \kappa \cdot u(x)$, which utility function has the following property at a domain $\mathcal{S}$: $u(x+y) = f_1(x) \cdot f_2(y) \; \forall x,y : (x+y) \in \mathcal{S}$ (this property does not include constant functions or real inverse multipliers of $f_1,f_2$), causes the probabilistic risk sensitivity metrics to be independent from the stationary state $x \in \mathcal{S}$: $\mathcal{RS}_{stat}(x,\delta_1,\delta_2) = \mathcal{RS}_{stat}(\delta_1,\delta_2)$, $\mathcal{RS}^{equal}_{gamble}(x,\delta_1,\delta_2,\delta_3,\delta_4) = \mathcal{RS}^{equal}_{gamble}(\delta_1,\delta_2,\delta_3,\delta_4)$, $\mathcal{RS}^{unequal}_{gamble}(x,\delta_1,\delta_2,\delta_3,\delta_4) = \mathcal{RS}^{unequal}_{gamble}(\delta_1,\delta_2,\delta_3,\delta_4)$
\end{lemma}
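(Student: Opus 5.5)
The plan is to verify the claim directly from the explicit formulas for the three metrics. Two observations suffice. First, every metric is a ratio of linear combinations of utility values in which the coefficients sum to zero. Second, the multiplicative property lets the stationary state be pulled out as a common factor.

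\textbf{Affine invariance.} Let $v=\mu+\kappa u$ with $\kappa\neq 0$. In $\mathcal{RS}_{stat}$ and $\mathcal{RS}^{equal}_{gamble}$, the numerator and denominator are each differences of two utility values. The constant $\mu$ therefore cancels, and $\kappa$ cancels between numerator and denominator, so it suffices to treat $u$ itself.

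For $\mathcal{RS}^{unequal}_{gamble}$ the coefficients are $\frac{1-p_B}{1-p_A}$ and $1$ in the numerator, and $1$ and $\frac{p_B}{p_A}$ in the denominator. These do not sum to zero unless $p_A=p_B$, so $\mu$ does not cancel. I would handle this in one of two ways. One option is to restrict the affine statement to $\mu=0$ for this metric. The other is to observe that the induced preference comparison, namely the inequality $\mathcal{RS}^{unequal}>\frac{p_A}{1-p_A}$, is equivalent to $\mathbb{E}[v(B)]>\mathbb{E}[v(A)]$, and that comparison is affine-invariant. I expect this to be the main obstacle: the literal metric is not invariant under shifts, so the proof must either state the $\mu=0$ restriction explicitly or argue at the level of the decision rule.

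\textbf{Factoring out $x$.} Write each argument $x\pm\delta$ as $x+y$ with $y=\pm\delta$, and assume all such points lie in $\mathcal{S}$. Then $u(x\pm\delta)=f_1(x)f_2(\pm\delta)$. Substituting,
\[
\mathcal{RS}_{stat}(x,\delta_1,\delta_2)=\frac{f_1(x)\bigl(f_2(0)-f_2(-\delta_1)\bigr)}{f_1(x)\bigl(f_2(\delta_2)-f_2(0)\bigr)}=\frac{f_2(0)-f_2(-\delta_1)}{f_2(\delta_2)-f_2(0)}.
\]
The same substitution works for $\mathcal{RS}^{equal}_{gamble}$. It also works for $\mathcal{RS}^{unequal}_{gamble}$ with $\mu=0$, because every term there is a multiple of a single utility value and so carries the factor $f_1(x)$. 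The cancellation requires $f_1(x)\neq 0$. This follows because $u$ is strictly increasing and hence non-constant: if $f_1(x)=0$ for some $x$, then $u\equiv 0$ on $x+\{y\}$, contradicting strict monotonicity. The excluded degenerate cases in the statement, constant factors or reciprocal rescalings, serve exactly to rule out these trivialities.

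\textbf{Connection to Arrow--Pratt.} I would add a remark that, on an interval, the functional equation $u(x+y)=f_1(x)f_2(y)$ with continuous strictly monotone $u$ forces $u(x)=c\,e^{\alpha x}$. The argument sets $y=0$ to get $u=f_1\,f_2(0)$, which yields the Cauchy equation $g(x+y)=g(x)g(y)$ for a normalized $g$. The CPRS class therefore coincides with the CARA class described in the Arrow--Pratt subsection. The lemma itself does not need this.
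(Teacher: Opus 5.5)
Your argument for $\mathcal{RS}_{stat}$ and $\mathcal{RS}^{equal}_{gamble}$ is correct. Both are ratios of differences of utility values, so $\mu$ and $\kappa$ drop out. Taking the stationary state as the first argument of the factorization then pulls out the common factor $f_1(x)$, which you rightly show is nonzero. The paper states this lemma without proof (the appendix does not cover it), so direct substitution is evidently the intended argument, and you supply it.

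Your objection to the third claim is also correct. It is a defect of the statement, not a gap in your proof. Write $v=\mu+\kappa g$ with $g(x+y)=f_1(x)f_2(y)$. Then
\[
\mathcal{RS}^{unequal}_{gamble}[v](x)=\frac{\mu\,\frac{p_A-p_B}{1-p_A}+\kappa f_1(x)\,N}{\mu\,\frac{p_A-p_B}{p_A}+\kappa f_1(x)\,D},
\]
where $N=\frac{1-p_B}{1-p_A}f_2(-\delta_3)-f_2(-\delta_1)$ and $D=f_2(\delta_4)-\frac{p_B}{p_A}f_2(\delta_2)$ do not depend on $x$. This is a M\"obius function of the non-constant quantity $f_1(x)$. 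It is therefore independent of $x$ only when $\mu=0$, when $p_A=p_B$, or when $N/D=p_A/(1-p_A)$ (exact indifference).

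Your first remedy, restricting to $\mu=0$, is narrower than it may appear. Your own non-vanishing argument shows that a product-form $g$ never vanishes on $\mathcal{S}$. Hence any branch normalized by $u(x_0)=0$ must have $\mu\neq 0$; the K\"{o}bberling--Wakker gain branch $\frac{1}{\alpha}-\frac{1}{\alpha}e^{-\alpha(x-x_0)}$ is an example. The lemma therefore fails for the paper's own motivating example. Take $\alpha=1$, $x_0=0$, $p_A=\frac12$, $p_B=\frac14$ and $(\delta_1,\delta_2,\delta_3,\delta_4)=(2,1,1,2)$. The metric is about $1.87$ at $x=2$ and about $1.32$ at $x=3$, and the paper's positivity assumptions hold at both points.

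For such utilities only your second remedy survives, and it is cleanest stated directly. We have $\mathbb{E}[v(B)]-\mathbb{E}[v(A)]=\kappa f_1(x)\,c$, with $c$ free of $x$ and $f_1$ of constant sign on $\mathcal{S}$. So the ranking does not depend on $x$. Routing this through the threshold inequality $\mathcal{RS}^{unequal}_{gamble}>\frac{p_A}{1-p_A}$ additionally needs the paper's positivity condition on the denominator, which is itself not shift-invariant.

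You should present this part as a corrected statement rather than as a proof of the lemma as written. The metric is $x$-independent when $\mu=0$, and the induced preference is $x$-independent for every $\mu$.
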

Hence, any affine transformation of an exponential utility function satisfy the crucial property, which causes \emph{constant probabilistic risk sensitive}. For reasons of completeness and better understanding, we provide the graphical representation of probability threshold $\mathbb{P}_{gain}$ of choosing the binary gamble instead of the stationary state for each branch of K\"{o}bberling-Wakker utility function, which are exponential based. In the gain domain, the probability is given by: 
\[
    \mathbb{P}_{gain}(x,\delta) = \frac{\exp{(\alpha \cdot \delta)} - 1}{2 \cdot \sinh{(\alpha \cdot \delta)}}
\]
where $\alpha > 0$ is the curvature parameter for gains. In the loss domain, the corresponding expression is:
\[
    \mathbb{P}_{gain}(x,\delta) = \frac{1 - \exp{(- \beta \cdot \delta)}}{2 \cdot \sinh{(\beta \cdot \delta)}}
\]
where $\beta > 0$ is the curvature parameter for losses.

\begin{figure}[H]
\centering
\begin{subfigure}{.5\textwidth}
    \centering
    \includegraphics[width=\textwidth]{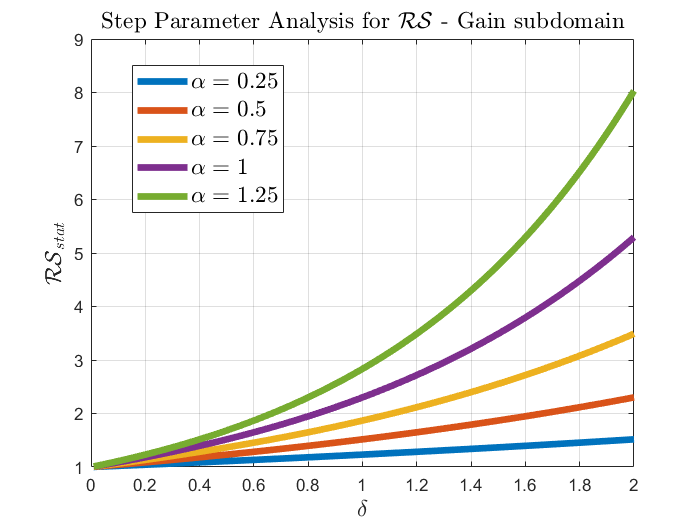}
    \caption{Gain subdomain}
    \label{fig:probability_cpt_P_gain_gain_subdomain}
\end{subfigure}%
\begin{subfigure}{.5\textwidth}
    \centering
    \includegraphics[width=\textwidth]{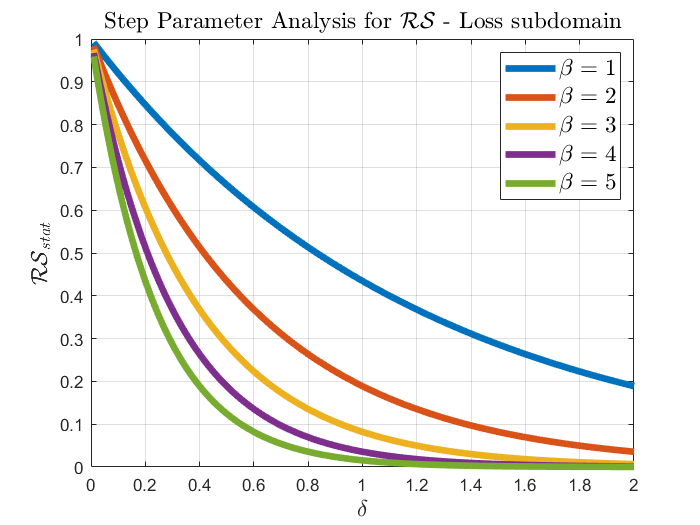}
    \caption{Loss subdomain}
    \label{fig:probability_cpt_P_gain_loss_subdomain}
\end{subfigure}
\caption[short]{Probability thresholds of choosing the binary gamble instead of the stationary state for each branch of K\"{o}bberling and Wakker utility function}
\end{figure}

\subsection{Risk Sensitivity Analysis of the EUT (Bernoulli) Utility Function Using Binary-Gamble Framework}

In this section, we apply the risk sensitivity framework of binary gambles to the classic Bernoulli utility function of EUT. This step may appear straightforward, but it provides a more detailed view of the risk behavior implied by this utility function. Moreover, this analysis yields a clearer interpretation of the risk-sensitivity metric for binary gambles with unequal probability distributions. Indeed, although the mathematical form of $\mathcal{RS}_{\text{gamble}}^{\text{unequal}}$ can be viewed as a first-order generalization of $\mathcal{RS}_{\text{gamble}}^{\text{equal}}$, it does not possess the same interpretative clarity. This limitation arises because the probabilities $p_A$ and $p_B$ are not disentangled in the expression. To obtain a more insightful interpretation, we examine how a concave utility function, as in the EUT framework, effects the probability threshold at which preference switches between two binary gambles with unequal probability distributions. Analogous results hold for the case of a convex utility function.

To start with, the EUT utility function is concave and therefore satisfies Jensen’s inequality: for any fair $N$-outcome gamble $X$ with $\mathbb{E}_{\mathcal P}[X]=x_1$, we have $\mathbb{E}_{\mathcal P}[u(X)] \le u(x_1)=u(\mathbb{E}_{\mathcal P}[X])$ (with strict inequality under strict concavity). This implies \emph{symmetric bet aversion} and \emph{Neilson's weak aversion} for symmetric and non-symmetric binary gambles respectively around $x_1$. The \emph{increasing symmetric bet aversion} and the \emph{Neilson's strong aversion} are satisfied by the concavity of the utility function. We have to underline that the \emph{symmetric bet aversion}, \emph{increasing symmetric bet aversion} and \emph{Neilson's weak aversion} are referred to the quantitative intensity approach based on the utility premium.

Regarding the proposed metrics for risk sensitivity in this work, the EUT utility function satisfies the following lemmas (for a concave $u$ and for parameters chosen so that the denominators below are positive, e.g., $\delta_3 < \delta_1$ and $\delta_2 < \delta_4$):
\begin{itemize}
    \item 
    \begin{lemma} \label{EUT_symmetric}
        $\mathcal{RS}_{stat}(x,\delta) > 1$ and $\mathcal{RS}_{stat}(x,\delta)$ is (weakly) increasing in $\delta$ (strictly increasing under strict concavity).
    \end{lemma}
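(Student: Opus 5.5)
The plan is to write $\mathcal{RS}_{stat}(x,\delta)=\frac{u(x)-u(x-\delta)}{u(x+\delta)-u(x)}$ as a ratio of two chord slopes, and then use the standard monotonicity of chord slopes for concave functions. Define
\[
L(\delta)=\frac{u(x)-u(x-\delta)}{\delta},\qquad G(\delta)=\frac{u(x+\delta)-u(x)}{\delta},
\]
so that $\mathcal{RS}_{stat}(x,\delta)=L(\delta)/G(\delta)$. Strict monotonicity of $u$ gives $G(\delta)>0$ and $L(\delta)>0$, so the ratio is well defined and positive.

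The first step is the inequality $\mathcal{RS}_{stat}(x,\delta)\ge 1$. Apply the three-chord lemma for a concave function to the points $x-\delta<x<x+\delta$. It gives $L(\delta)\ge G(\delta)$, with strict inequality under strict concavity. This is equivalent to Jensen's inequality for the fair symmetric gamble, which the paper already invokes: $\tfrac12 u(x+\delta)+\tfrac12 u(x-\delta)\le u(x)$. Rearranging that inequality directly yields $u(x)-u(x-\delta)\ge u(x+\delta)-u(x)$, hence $\mathcal{RS}_{stat}\ge 1$, and $\mathcal{RS}_{stat}>1$ under strict concavity. The strict claim $\mathcal{RS}_{stat}>1$ in the statement should accordingly be read under strict concavity, which is the EUT setting.

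The second step is monotonicity in $\delta$. For a concave $u$, the slope $s(a,b)=\frac{u(b)-u(a)}{b-a}$ is nonincreasing in each argument. Take $0<\delta<\delta'$.
\begin{itemize}
\item $G(\delta)=s(x,x+\delta)$ is nonincreasing in $\delta$, since the right endpoint moves right.
\item $L(\delta)=s(x-\delta,x)$ is nondecreasing in $\delta$, since the left endpoint moves left.
\end{itemize}
So $L(\delta)\le L(\delta')$ and $G(\delta)\ge G(\delta')>0$. Combining these gives $L(\delta)/G(\delta)\le L(\delta')/G(\delta')$, so $\mathcal{RS}_{stat}$ is weakly increasing. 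Under strict concavity both chord-slope monotonicities are strict, so the ratio is strictly increasing. If one prefers the derivative form used in the paper's definition, differentiability of $u$ lets one compute $\partial_\delta \mathcal{RS}_{stat}$. The sign of its numerator is $u'(x-\delta)\,[u(x+\delta)-u(x)]-u'(x+\delta)\,[u(x)-u(x-\delta)]$. Using $u'(x-\delta)\ge L(\delta)$ and $u'(x+\delta)\le G(\delta)$ from concavity, this numerator is at least $L G-G L=0$.

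I expect the main obstacle to be only the strictness bookkeeping. To get strict monotonicity I need strict chord-slope inequalities, which strict concavity provides. The derivative version instead needs differentiability, so I would present the chord-slope argument as primary, because it requires no smoothness.
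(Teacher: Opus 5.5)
Your proof is correct, and your main argument takes a different route from the paper's.

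\textbf{The paper's route.} The paper works entirely with derivatives. It applies the Mean Value Theorem on $[x-\delta,x]$ and $[x,x+\delta]$ to write the two chord slopes as $u'(p)$ and $u'(\xi)$ with $p<x<\xi$. It then gets $\mathcal{RS}_{stat}(x,\delta)>1$ from $u'(p)>u'(\xi)$. For monotonicity it reduces $\partial_\delta\mathcal{RS}_{stat}>0$ to $u'(x-\delta)/u'(x+\delta)>\mathcal{RS}_{stat}(x,\delta)$, and closes the argument with $u'(x-\delta)>u'(p)$ and $u'(x+\delta)<u'(\xi)$. That is exactly your secondary derivative variant: the Mean Value Theorem step together with monotone $u'$ is what yields your bounds $u'(x-\delta)\ge L(\delta)$ and $u'(x+\delta)\le G(\delta)$.

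\textbf{Your route.} Your primary argument instead compares finite differences at two stakes $\delta<\delta'$, using the three-chord lemma and the monotonicity of chord slopes in each endpoint. This buys two things.
\begin{itemize}
\item It needs no differentiability. The paper tacitly assumes $u$ is differentiable, which is not automatic for a concave function.
\item It makes the weak/strict bookkeeping explicit. You rightly note that $\mathcal{RS}_{stat}>1$ requires strict concavity, since an affine $u$ gives $\mathcal{RS}_{stat}\equiv 1$. The paper's step ``due to the curvature of $u$, $u'(p)>u'(\xi)$'' uses strict concavity silently.
\end{itemize}

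\textbf{What the paper's route buys.} It gives the derivative form $\partial_\delta\mathcal{RS}_{stat}>0$, which is the form used in the paper's definition of probabilistic increasing bet aversion. This is marginally stronger than strict monotonicity. Your derivative variant also delivers it once you observe that, for strictly concave differentiable $u$, the bounds $u'(x-\delta)>L(\delta)$ and $u'(x+\delta)<G(\delta)$ are strict.

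\textbf{Small slip.} In that variant the numerator is $u'(x-\delta)\,\delta G(\delta)-u'(x+\delta)\,\delta L(\delta)\ge \delta\,(LG-GL)=0$. You dropped the factor $\delta$, which is harmless.
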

    \begin{proof}
        See Appendix A.
    \end{proof}
    \item 
    \begin{lemma} \label{EUT_non_symmetric}
        $\mathcal{RS}_{stat}(x,\delta_1,\delta_2,w) > \frac{\delta_1}{\delta_2}$ and $\mathcal{RS}_{stat}(x,\delta_1,\delta_2,w)$ is (weakly) increasing in $w$ (strictly increasing under strict concavity).
    \end{lemma}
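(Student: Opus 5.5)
The plan is to write the metric explicitly as a ratio of difference quotients. With stakes $w\delta_1$ and $w\delta_2$,
\[
\mathcal{RS}_{stat}(x,\delta_1,\delta_2,w)=\frac{u(x)-u(x-w\delta_1)}{u(x+w\delta_2)-u(x)}
=\frac{\delta_1}{\delta_2}\cdot\frac{L(w)}{G(w)},
\]
where
\[
L(w):=\frac{u(x)-u(x-w\delta_1)}{w\delta_1},\qquad G(w):=\frac{u(x+w\delta_2)-u(x)}{w\delta_2}.
\]
Here $L(w)$ is the slope of the chord of $u$ from $x-w\delta_1$ to $x$, and $G(w)$ is the slope of the chord from $x$ to $x+w\delta_2$. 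Both are positive because $u$ is strictly increasing. The statement therefore reduces to showing that $L(w)/G(w)>1$ and that $L(w)/G(w)$ is increasing in $w$. This is the same mechanism as in Lemma~\ref{EUT_symmetric}, which is the special case $\delta_1=\delta_2$, $w=\delta$. I would either mimic that argument or, more cleanly, prove the present statement directly so that Lemma~\ref{EUT_symmetric} becomes a corollary.

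\textbf{Step 1 (threshold inequality).} I will use the three-chord lemma for concave functions: for $a<b<c$,
\[
\frac{u(b)-u(a)}{b-a}\ \ge\ \frac{u(c)-u(a)}{c-a}\ \ge\ \frac{u(c)-u(b)}{c-b},
\]
with strict inequalities under strict concavity. Taking $a=x-w\delta_1$, $b=x$, $c=x+w\delta_2$ gives $L(w)\ge G(w)$, and hence $L(w)>G(w)$ for strictly concave $u$. This yields $\mathcal{RS}_{stat}>\delta_1/\delta_2$. Equivalently, this is Jensen's inequality for the fair gamble, as already invoked in the text for Neilson's weak aversion, rewritten in threshold form. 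If $u$ is merely concave and affine on the relevant interval, the inequality is only weak. I would state this caveat: the strict claim $>$ needs strict concavity, or at least non-affinity on $[x-w\delta_1,x+w\delta_2]$.

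\textbf{Step 2 (monotonicity in $w$).} The key fact is that chord slopes of a concave function anchored at a fixed point are nonincreasing in the other endpoint. Since $x+w\delta_2$ moves right as $w$ grows, $G(w)$ is nonincreasing in $w$. Since $x-w\delta_1$ moves left, the chord slope to $x$ increases, so $L(w)$ is nondecreasing in $w$. Both facts again follow from the three-chord lemma, applied to the points $x<x+w\delta_2<x+w'\delta_2$ and $x-w'\delta_1<x-w\delta_1<x$ with $w<w'$. The ratio of a positive nondecreasing function to a positive nonincreasing function is nondecreasing, so $\mathcal{RS}_{stat}$ is weakly increasing in $w$. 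Under strict concavity both chord comparisons are strict, so the ratio is strictly increasing. If differentiability is desired, to match the $\partial/\partial w$ formulation of the definition, one can instead differentiate the quotient. Nonnegativity of the derivative then reduces to
\[
u'(x-w\delta_1)\ge L(w)\qquad\text{and}\qquad u'(x+w\delta_2)\le G(w),
\]
which are the standard tangent-versus-chord inequalities for concave $u$.

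\textbf{Main obstacle.} The only delicate point is strictness and regularity. Concavity alone gives the weak statements, and the lemma's ``$>$'' requires strict concavity. A derivative-based statement needs differentiability, whereas the monotone-ratio argument avoids it entirely. I would therefore present the chord-slope argument as the main proof and note the derivative version as a remark.
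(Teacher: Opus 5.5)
Your argument is correct, but its main line differs from the paper's. The paper works through the Mean Value Theorem and derivatives.
\begin{itemize}
\item For the threshold inequality it writes $L(w)=u'(p)$ and $G(w)=u'(\xi)$ with $p<x<\xi$, and uses that $u'$ is strictly decreasing.
\item For monotonicity it differentiates the quotient and reduces $\partial\mathcal{RS}_{stat}/\partial w>0$ to $u'(x-w\delta_1)/u'(x+w\delta_2)>L(w)/G(w)$. It closes this with $u'(x-w\delta_1)>u'(p)$ and $u'(x+w\delta_2)<u'(\xi)$, which are exactly the tangent-versus-chord inequalities in your closing remark.
\end{itemize}

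Your three-chord lemma, combined with the observation that $L$ is positive nondecreasing and $G$ is positive nonincreasing, avoids differentiability altogether. It therefore covers kinked concave utilities, where the MVT step is not available. It also cleanly separates the weak claim under concavity from the strict claim under strict concavity; the paper handles that distinction only implicitly, via ``the curvature of $u$''.

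The paper's derivative route buys something else. It delivers the pointwise condition $\partial\mathcal{RS}_{stat}/\partial w>0$ in the exact form used in the definition of probabilistic increasing non-symmetric bet aversion. Its intermediate inequality $u'(x-w\delta_1)/u'(x+w\delta_2)>\mathcal{RS}_{stat}\,\delta_2/\delta_1$ is precisely what Section~\ref{Section4} reuses to relate $\mathcal{RS}_{stat}$ to the monotonicity of the utility premium $\mathcal{U}$.

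A strictly increasing function can have vanishing derivative at isolated points. So if you want that derivative form, you do need your remark, with the strict tangent--chord inequalities that strict concavity provides for differentiable $u$.
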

    \begin{proof}
        See Appendix A.
    \end{proof}
    \item 
    \begin{lemma} \label{EUT_Neilson}
        $\mathcal{RS}_{\text{gamble}}^{\text{equal}}(x, \delta_1, \delta_2, \delta_3, \delta_4) > \frac{(x - \delta_3) - (x - \delta_1)}{(x + \delta_4) - (x + \delta_2)}$ and $\frac{\partial \mathcal{RS}_{\text{gamble}}^{\text{equal}}(x, \delta_1, \delta_2, \delta_2, \delta_1,w)}{\partial \delta_1}>0$
    \end{lemma}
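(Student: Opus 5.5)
The plan rests on the chord-slope (three-slope) inequality for concave functions. For a concave $u$ and points $s_1<s_2<s_3<s_4$, slopes of chords are non-increasing as the chord moves to the right:
\[
\frac{u(s_2)-u(s_1)}{s_2-s_1}\ \ge\ \frac{u(s_4)-u(s_3)}{s_4-s_3},
\]
with strict inequality under strict concavity. We take this to hold whenever the two intervals do not overlap, and assume $\delta_3<\delta_1$ and $\delta_2<\delta_4$, as stipulated before the lemmas.

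\textbf{First claim.} Set $s_1=x-\delta_1$, $s_2=x-\delta_3$, $s_3=x+\delta_2$ and $s_4=x+\delta_4$. Since $\delta_3,\delta_2\ge 0$, we have $s_2\le x\le s_3$, so the two chords lie on non-overlapping intervals. Because $u$ is increasing, the denominator $u(x+\delta_4)-u(x+\delta_2)$ is positive, and the inequality rearranges to
\[
\mathcal{RS}_{\text{gamble}}^{\text{equal}}(x,\delta_1,\delta_2,\delta_3,\delta_4)=\frac{u(x-\delta_3)-u(x-\delta_1)}{u(x+\delta_4)-u(x+\delta_2)}\ \ge\ \frac{\delta_1-\delta_3}{\delta_4-\delta_2}.
\]
This is exactly the stated comparison with $\frac{(x-\delta_3)-(x-\delta_1)}{(x+\delta_4)-(x+\delta_2)}$. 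The inequality is strict under strict concavity, which is needed to obtain the strict ``$>$''. This is the same argument as Lemma \ref{EUT_non_symmetric} with the status-quo point split into two points $s_2,s_3$.

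\textbf{Second claim.} Specialize to $\delta_3=\delta_2$ and $\delta_4=\delta_1$ with $\delta_2<\delta_1$, and write
\[
R(\delta_1)=\frac{N(\delta_1)}{D(\delta_1)},\qquad N=u(x-\delta_2)-u(x-\delta_1),\qquad D=u(x+\delta_1)-u(x+\delta_2).
\]
Assuming differentiability, the quotient rule gives
\[
\operatorname{sign} R'=\operatorname{sign}\bigl(u'(x-\delta_1)\,D-u'(x+\delta_1)\,N\bigr).
\]
It therefore suffices to show
\[
\frac{u'(x-\delta_1)}{N}>\frac{u'(x+\delta_1)}{D}.
\]
Concavity bounds the left derivative at the left end of the loss chord from below by that chord's slope, and bounds the derivative at the right end of the gain chord from above by that chord's slope:
\[
u'(x-\delta_1)\ \ge\ \frac{N}{\delta_1-\delta_2},\qquad u'(x+\delta_1)\ \le\ \frac{D}{\delta_1-\delta_2}.
\]
Hence $u'(x-\delta_1)/N\ge 1/(\delta_1-\delta_2)\ge u'(x+\delta_1)/D$. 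Strictness follows from strict concavity, since the tangent lies strictly off the chord. (If $u$ is not differentiable, the same argument can be run with one-sided derivatives, or via a discrete increment argument comparing $R$ at two values of $\delta_1$ using the three-slope inequality.)

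\textbf{Main obstacle.} The key point is identifying that $R'>0$ reduces to comparing endpoint derivatives with chord slopes. A direct comparison of $N$ and $D$ does not suffice, because $N\ge D$ alone does not control the derivative ratio; the normalization by the common length $\delta_1-\delta_2$ is what makes the two bounds meet at $1/(\delta_1-\delta_2)$. A second point to check is that the stray ``$w$'' argument in the statement plays no role: with the gambles parametrized directly by $\delta_1$, monotonicity in $\delta_1$ is exactly the scaling of the outer gamble described in the definition of increasing nested aversion.
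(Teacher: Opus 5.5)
Your proposal is correct and follows essentially the same route as the paper. For the first claim, the paper compares the two chord slopes using the Mean Value Theorem and the strict monotonicity of $u'$, which is your three-slope inequality. For the second, it reduces $\partial\mathcal{RS}/\partial\delta_1>0$ to $u'(x-\delta_1)/u'(x+\delta_1)$ exceeding the ratio of chord slopes over the common length $\delta_1-\delta_2$, then uses $u'(x-\delta_1)>u'(p)$ and $u'(x+\delta_1)<u'(\xi)$ for the MVT points $p,\xi$, just as your tangent--chord bounds do; your version only drops the MVT, so the first claim needs no differentiability.
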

    \begin{proof}
        See Appendix A.
    \end{proof}
\end{itemize}

Next, we investigate binary gambles with unequal probability distributions. Consider two gambles
$A:\{a,1-p_A;\, b,p_A\}$ and $B:\{c,1-p_B;\, d,p_B\}$. Without loss of generality, we take gamble $A$ as the benchmark. In the absence of utility curvature (i.e., under expected-value comparison), we have
\[
\mathbb{E}[A] \gtrless \mathbb{E}[B]
\ \Leftrightarrow\
\mathbb{E}[A] \gtrless c + p_B(d-c)
\ \Leftrightarrow\
\frac{\mathbb{E}[A]-c}{d-c} \gtrless p_B,
\]
where $\mathbb{E}[A]=a(1-p_A)+bp_A$ is the expected value of gamble $A$. To ensure that the threshold has a probabilistic meaning (i.e., lies in $(0,1)$), we require $d>c$ and $c<\mathbb{E}[A]<d$ (if $d<c$, the inequality reverses when dividing by $d-c$).

Under the utility function, the corresponding comparison is
\[
\mathbb{E}[u(A)] \gtrless \mathbb{E}[u(B)]
\ \Leftrightarrow\
\mathbb{E}[u(A)] \gtrless u(c) + p_B\bigl(u(d)-u(c)\bigr)
\ \Leftrightarrow\
\frac{\mathbb{E}[u(A)]-u(c)}{u(d)-u(c)} \gtrless p_B,
\]
where $\mathbb{E}[u(A)]=u(a)(1-p_A)+u(b)p_A$ is the expected utility of gamble $A$. For the threshold to lie in $(0,1)$ we require $u(d)>u(c)$ and $u(c)<\mathbb{E}[u(A)]<u(d)$ (again, if $u(d)<u(c)$ the inequality reverses).

We investigate the relation between the two thresholds, since their comparison reveals the risk sensitivity induced by the utility function. For the geometric interpretation, let $(\epsilon_1)$ denote the chord connecting $(a,u(a))$ and $(b,u(b))$ (the locus of expected utilities of a binary gamble with outcomes $a$ and $b$ as its probability varies), and let $(\epsilon_2)$ denote the chord connecting $(c,u(c))$ and $(d,u(d))$.

There are four major cases:
\begin{itemize}
\item \textbf{Case $c < a < b < d$}: In this case, for the same value of the domain $x$, the co-domain values of the straight lines $(\epsilon_1)$ and $(\epsilon_2)$ satisfy $y_{(\epsilon_2)}(x) < y_{(\epsilon_1)}(x)$ for all $x \in [a,b]$ (with strict inequality for $x \in (a,b)$ under strict concavity). Consequently, by comparing the slopes of the segments from $c$ to $\mathbb{E}[A]$ and from $c$ to $d$, namely $\frac{\mathbb{E}[u(A)] - u(c)}{\mathbb{E}[A] - c}$ and $\frac{u(d) - u(c)}{d - c}$, we obtain
\[
    \frac{\mathbb{E}[u(A)] - u(c)}{u(d) - u(c)} > \frac{\mathbb{E}[A] - c}{d - c}
\]
for every probability distribution of gamble $A$.
    
\item \textbf{Case $a < c < d < b$}: In this case, for the same value of the domain $x$, the co-domain values for the straight lines $(\epsilon_1)$ and $(\epsilon_2)$ have the relation $y_{(\epsilon_1)}(x) < y_{(\epsilon_2)}(x)$ on the interval $[c,d]$. First, for those probability distributions of gamble $A$ such that $a < \mathbb{E}[A] < c$, we have $u(a) < \mathbb{E}[u(A)] < u(c)$, and hence gamble $B$ is preferred. Second, there exists a region with $c < \mathbb{E}[A] < d$ but $\mathbb{E}[u(A)] < u(c)$. This means that, despite the fact that for some probability distributions gamble $A$ is preferred in the risk-insensitive (expected-value) comparison, the decision maker continues to prefer gamble $B$ once risk sensitivity (via the concave EUT Bernoulli utility) is taken into account. We refer to this threshold mismatch as a risk-sensitive switching region: over this region, the preference ranking induced by a risk-sensitive utility function differs from the ranking induced by the risk-insensitive expected-value criterion. More precisely, the risk-sensitive utility function exhibits a form of inertia, in the sense that it tends to preserve the probabilistic behavior associated with the previous region, whereas the risk-insensitive criterion has already shifted to a different preference regime. Furthermore, when $c < \mathbb{E}[A] < d$ and $u(c) < \mathbb{E}[u(A)] < u(d)$, comparing the corresponding thresholds yields
\[
    \frac{\mathbb{E}[u(A)] - u(c)}{u(d) - u(c)} < \frac{\mathbb{E}[A] - c}{d - c},
\]
which indicates a tendency to prefer gamble $B$ in the risk-sensitive case relative to the risk-insensitive one. Moreover, when $d < \mathbb{E}[A]$, there exist probability distributions for which $\mathbb{E}[u(A)] < u(d)$, so that risk-sensitive switching region arises again: gamble $B$ remains preferred for some probability distributions even though gamble $A$ is strictly preferred in the risk-insensitive case.
    
\item \textbf{Case $c < a < d < b$:} In this case, for the same value of the domain $x$, the co-domain values for the straight lines $(\epsilon_1)$ and $(\epsilon_2)$ satisfy \textbf{$y_{(\epsilon_2)}(x) < y_{(\epsilon_1)}(x)$ for $\mathbb{E}[A] \in [a,\, \mathbb{E}_{\rho_0}[A))$} and \textbf{$y_{(\epsilon_1)}(x) < y_{(\epsilon_2)}(x)$ for $\mathbb{E}[A] \in (\mathbb{E}_{\rho_0}[A],\, b]$}. In the first region, the analysis is analogous to the case $c < a < b < d$, which means that gamble $A$ is preferred more strongly in the risk-sensitive case than in the risk-insensitive case. In the second region, the analysis is analogous to the case $a < c < d < b$, thus presenting a risk-sensitive switching region. Indeed, due to the linear behavior of the mean value, the probability threshold for preferring gamble $B$ increases more rapidly in the risk-insensitive case than in the risk-sensitive case, thereby creating a region where gamble $B$ remains preferred under risk sensitivity even though gamble $A$ is preferred under risk neutrality.

\item \textbf{Case $a < c < b < d$:} In this case, for the same value of the domain $x$, the co-domain values for the straight lines $(\epsilon_1)$ and $(\epsilon_2)$ satisfy $y_{(\epsilon_1)}(x) < y_{(\epsilon_2)}(x)$ for $\mathbb{E}[A] \in [a,\, \mathbb{E}_{\rho_0}[A))$ and \textbf{$y_{(\epsilon_2)}(x) < y_{(\epsilon_1)}(x)$ for $\mathbb{E}[A] \in (\mathbb{E}_{\rho_0}[A],\, b]$}. In the first region, the analysis is analogous to the case $a < c < d < b$, thus exhibiting a risk-sensitive switching region. Indeed, due to the linear behavior of the mean value, gamble $B$ appears more attractive than gamble $A$ over this interval. In the second region, the rate at which the probability threshold increases in the risk-sensitive case exceeds that of the risk-insensitive case, reflecting a stronger preference for gamble $A$ under the concave utility function.

\end{itemize}

To sum up, in the case of nested gambles, there is a tendency to prefer the nested inner gamble over the outer one, which is a property inherited from \emph{Neilson’s strong aversion}. In the cases of non-nested gambles, there is a tendency to prefer the right-shifted gamble over the left-shifted one for low values of $\mathbb{E}[A]$, with this tendency reversing for high values of $\mathbb{E}[A]$. This behavior can be interpreted as follows: the decision maker is more willing to accept the smaller gain, which appears more attainable, while effectively disregarding the more negatively impactful loss. This observation along with the risk-sensitive switching regions highlight inherent limitations in using a single functional form to fully capture risk behavior across different regions.

\subsection{Curvature of the Utility Function through the Binary Gambles Framework}

The following propositions provide an intuitive interpretation of risk sensitivity by relating it to the curvature of the utility function. Specifically, the decision maker's subjective probability threshold for accepting a binary gamble reveals whether the utility function is concave, convex, or linear in the neighborhood of the current state.

\begin{theorem} \label{concave_equivalent}
    Let $I\subseteq\mathbb{R}$ be an interval and let $u:I\to\mathbb{R}$ be strictly increasing. Then $u$ is strictly concave on $I$ if and only if, for every $x\in I$ and every $\delta_1,\delta_2>0$ such that $[x-\delta_1,x+\delta_2]\subset I$,
    \[
        \mathbb{P}_{\text{gain}}(x,\delta_1,\delta_2)>\frac{\delta_1}{\delta_1+\delta_2}.
    \]
\end{theorem}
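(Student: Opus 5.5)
The plan is to reduce both directions to the three-point chord characterization of strict concavity. Recall that
\[
\mathbb{P}_{gain}(x,\delta_1,\delta_2)=\frac{u(x)-u(x-\delta_1)}{u(x+\delta_2)-u(x-\delta_1)},
\]
and the denominator is positive because $u$ is strictly increasing. Set $y=x-\delta_1$, $z=x+\delta_2$ and $\lambda=\frac{\delta_1}{\delta_1+\delta_2}\in(0,1)$. Then $x=\lambda z+(1-\lambda)y$. Multiplying the inequality $\mathbb{P}_{gain}>\lambda$ by the positive denominator shows it is equivalent to
\[
u(x)-u(y)>\lambda\bigl(u(z)-u(y)\bigr),
\]
that is,
\[
u\bigl(\lambda z+(1-\lambda)y\bigr)>\lambda u(z)+(1-\lambda)u(y).
\]
This is the strict concavity inequality for the pair $y<z$ with weight $\lambda$, and it is exactly the comparison of $u(x)$ with the chord point $A$ introduced earlier.

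For the forward direction, I assume $u$ is strictly concave on $I$. Given any admissible $x,\delta_1,\delta_2$, the points $y<z$ lie in $I$ and $\lambda\in(0,1)$, so the displayed inequality holds by definition of strict concavity. Dividing back by the denominator gives $\mathbb{P}_{gain}>\frac{\delta_1}{\delta_1+\delta_2}$.

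For the converse, I take arbitrary $y<z$ in $I$ and $\lambda\in(0,1)$, and define $x=\lambda z+(1-\lambda)y$, $\delta_1=x-y=\lambda(z-y)>0$ and $\delta_2=z-x=(1-\lambda)(z-y)>0$. Then $[x-\delta_1,x+\delta_2]=[y,z]\subset I$, because $I$ is an interval. These choices give $\frac{\delta_1}{\delta_1+\delta_2}=\lambda$, so the hypothesis yields the strict chord inequality above. Since $y$, $z$ and $\lambda$ were arbitrary, this is the definition of strict concavity.

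The main obstacle is bookkeeping rather than substance. One must check that the map $(y,z,\lambda)\leftrightarrow(x,\delta_1,\delta_2)$ is a bijection between admissible triples, and that multiplying by the denominator preserves the direction of the inequality, which is where strict monotonicity of $u$ is used. If $I$ is not open, the condition $[x-\delta_1,x+\delta_2]\subset I$ should be read as allowing the endpoints $y$ and $z$ to be endpoints of $I$, so that concavity is tested on all of $I$. No differentiability is needed, and I would add a remark that the same argument with reversed or equal inequalities gives the convex and linear cases.
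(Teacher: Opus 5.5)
Your proposal is correct and follows essentially the same route as the paper: set $a=x-\delta_1$, $b=x+\delta_2$, use $u(b)-u(a)>0$ to rewrite $\mathbb{P}_{\text{gain}}(x,\delta_1,\delta_2)>\frac{\delta_1}{\delta_1+\delta_2}$ as the strict chord inequality at $x$, and observe that every triple $a<x<b$ in $I$ arises from some admissible $(x,\delta_1,\delta_2)$. Your explicit parametrization by $(y,z,\lambda)$ and the remark about endpoints of $I$ only make the paper's one-line bijection argument more careful.
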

\begin{proof}
    See Appendix A.
\end{proof}
\begin{theorem} \label{convex_equivalent}
    Let $I\subseteq\mathbb{R}$ be an interval and let $u:I\to\mathbb{R}$ be strictly increasing. Then $u$ is strictly convex on $I$ if and only if, for every $x\in I$ and every $\delta_1,\delta_2>0$ such that $[x-\delta_1,x+\delta_2]\subset I$,
    \[
        \mathbb{P}_{\text{gain}}(x,\delta_1,\delta_2)<\frac{\delta_1}{\delta_1+\delta_2}.
    \]
\end{theorem}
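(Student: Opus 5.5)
The plan is to reduce the statement to the three-point chord characterization of strict convexity, mirroring the argument for Theorem~\ref{concave_equivalent}. For $x\in I$ and $\delta_1,\delta_2>0$ with $[x-\delta_1,x+\delta_2]\subset I$, set $y=x-\delta_1$, $z=x+\delta_2$ and $\lambda=\frac{\delta_1}{\delta_1+\delta_2}\in(0,1)$. Then $x=\lambda z+(1-\lambda)y$, and every triple $y<x<z$ in $I$ arises in this way. Because $u$ is strictly increasing, $u(z)-u(y)>0$, so
\[
\mathbb{P}_{\text{gain}}(x,\delta_1,\delta_2)=\frac{u(x)-u(y)}{u(z)-u(y)}
\]
is well defined. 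The inequality $\mathbb{P}_{\text{gain}}<\lambda$ can therefore be multiplied through by the positive denominator without changing direction.

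Cross-multiplying gives $u(x)-u(y)<\lambda\,(u(z)-u(y))$, which is equivalent to
\[
u\bigl(\lambda z+(1-\lambda)y\bigr)<\lambda u(z)+(1-\lambda)u(y).
\]
This is exactly the strict convexity inequality for the pair $y<z$ with weight $\lambda$. The condition in the statement is thus equivalent to this inequality holding for all $y<z$ in $I$ and all $\lambda\in(0,1)$. That is the definition of strict convexity on $I$, since the cases $\lambda\in\{0,1\}$ and $y=z$ are trivial equalities and the case $y>z$ follows by relabeling.

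The direction ``strictly convex $\Rightarrow$ threshold inequality'' is then immediate. For the converse, the only point needing care is that the correspondence between $(x,\delta_1,\delta_2)$ and $(y,z,\lambda)$ is a bijection onto all admissible triples. Given any $y<z$ in $I$ and $\lambda\in(0,1)$, take $x=\lambda z+(1-\lambda)y$, $\delta_1=x-y$ and $\delta_2=z-x$. Then $\delta_1,\delta_2>0$, $[y,z]\subset I$ because $I$ is an interval, and $\delta_1/(\delta_1+\delta_2)=\lambda$.

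I expect no serious obstacle; the main care point is this surjectivity check together with the sign of the denominator. Alternatively, one can derive the result from Theorem~\ref{concave_equivalent} by applying it to $v(t)=-u(-t)$, which is strictly increasing and strictly concave exactly when $u$ is strictly convex. One then checks that $\mathbb{P}^{v}_{\text{gain}}(-x,\delta_2,\delta_1)=1-\mathbb{P}^{u}_{\text{gain}}(x,\delta_1,\delta_2)$, which transforms the threshold inequality for $v$ into the one stated for $u$.
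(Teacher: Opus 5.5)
Your proof is correct and follows essentially the same route as the paper. Both use $u(z)-u(y)>0$ to rewrite $\mathbb{P}_{\text{gain}}<\frac{\delta_1}{\delta_1+\delta_2}$ as the strict chord inequality $u(x)<\lambda u(z)+(1-\lambda)u(y)$ at $x=\lambda z+(1-\lambda)y$, and note that every triple $y<x<z$ in $I$ comes from admissible $(x,\delta_1,\delta_2)$. Your reflection via $v(t)=-u(-t)$, with $\mathbb{P}^{v}_{\text{gain}}(-x,\delta_2,\delta_1)=1-\mathbb{P}^{u}_{\text{gain}}(x,\delta_1,\delta_2)$, is also valid but not needed.
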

\begin{proof}
    See Appendix A.
\end{proof}
\begin{theorem} \label{linear_equivalent}
    Let $I\subseteq\mathbb{R}$ be an interval and let $u:I\to\mathbb{R}$ be strictly increasing. Then $u$ is affine on $I$ if and only if, for every $x\in I$ and every $\delta_1,\delta_2>0$ such that $[x-\delta_1,x+\delta_2]\subset I$,
    \[
        \mathbb{P}_{\text{gain}}(x,\delta_1,\delta_2)=\frac{\delta_1}{\delta_1+\delta_2}.
    \]
\end{theorem}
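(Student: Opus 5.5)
The plan is to rewrite the threshold condition as a chord condition and then use the fact that a function lying exactly on each of its chords is affine. Since $u$ is strictly increasing, for $\delta_1,\delta_2>0$ we have $u(x+\delta_2)-u(x-\delta_1)>0$, so $\mathbb{P}_{gain}$ is well defined. Moreover,
\[
\mathbb{P}_{gain}(x,\delta_1,\delta_2)=\frac{\delta_1}{\delta_1+\delta_2}
\iff
u(x)=\frac{\delta_1}{\delta_1+\delta_2}\,u(x+\delta_2)+\frac{\delta_2}{\delta_1+\delta_2}\,u(x-\delta_1).
\]
The second equation says that $u(x)$ equals the ordinate of the chord through $(x-\delta_1,u(x-\delta_1))$ and $(x+\delta_2,u(x+\delta_2))$, evaluated at $x$. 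This is the same chord identity used to define the point $A$ earlier in this section. Every point $x$ lying strictly between two points $y<z$ of $I$ can be written as $x=\lambda z+(1-\lambda)y$ with $\lambda=\frac{x-y}{z-y}\in(0,1)$, by taking $\delta_1=x-y$ and $\delta_2=z-x$. So the threshold condition, required for all admissible $(x,\delta_1,\delta_2)$, is exactly the statement
\[
u(\lambda z+(1-\lambda)y)=\lambda u(z)+(1-\lambda)u(y)\quad\text{for all } y<z \text{ in } I \text{ and all } \lambda\in(0,1).
\]

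For the forward direction, assume $u(t)=mt+q$ on $I$. Substituting into the formula for $\mathbb{P}_{gain}$ gives
\[
\mathbb{P}_{gain}=\frac{m\delta_1}{m(\delta_1+\delta_2)}=\frac{\delta_1}{\delta_1+\delta_2},
\]
and $m>0$ because $u$ is strictly increasing.

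For the converse, fix two points $y<z$ in $I$ and apply the displayed identity with these endpoints. For every $t\in(y,z)$ it gives
\[
u(t)=u(y)+\frac{u(z)-u(y)}{z-y}\,(t-y).
\]
So $u$ coincides with one affine function $\ell_{y,z}$ on $[y,z]$. The remaining step, which I expect to be the only real obstacle, is to glue these local affine pieces into a single global one. Two affine functions that agree at two distinct points are identical. For any two compact subintervals $[y,z]$ and $[y',z']$ of $I$, the interval $[\min(y,y'),\max(z,z')]$ lies in $I$ because $I$ is an interval, and it contains both. Hence $\ell_{y,z}$ and $\ell_{y',z'}$ both agree with the affine function on this larger interval, so they are all the same function. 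When $I$ is open or unbounded, I exhaust it by an increasing sequence of compact subintervals.

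An alternative route is to read off both inequalities, $u$ weakly concave and weakly convex, from Theorems~\ref{concave_equivalent} and~\ref{convex_equivalent}. Those theorems are stated only for the strict versions, however, so the direct chord argument above is cleaner and I would use it.

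The degenerate case in which $I$ is a single point is vacuous, since any $u$ on a point is affine.
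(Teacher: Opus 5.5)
Your proposal is correct and follows the paper's proof. Both rewrite the threshold equality as the chord identity $u(x)=\frac{\delta_2}{\delta_1+\delta_2}u(x-\delta_1)+\frac{\delta_1}{\delta_1+\delta_2}u(x+\delta_2)$, check affine functions by substitution, and then deduce affinity from chord-linearity. Your gluing argument (two affine functions agreeing at two distinct points coincide, and the hull of two subintervals stays in $I$) supplies the step the paper only asserts; the exhaustion by compact subintervals is unnecessary, since every point of a nondegenerate $I$ already lies in some nondegenerate compact subinterval.
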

\begin{proof}
    See Appendix A.
\end{proof}

These results imply that a \textit{risk-averse} individual is characterized by a gain acceptance threshold $\mathbb{P}_{\text{gain}}$ that exceeds the risk-neutral benchmark $\frac{\delta_1}{\delta_1 + \delta_2}$, while a \textit{risk-seeking} individual has a threshold below this benchmark. A \textit{risk-neutral} decision-maker will accept the gamble precisely when $\mathbb{P}_{\text{gain}} = \frac{\delta_1}{\delta_1 + \delta_2}$, which corresponds to a linear utility function over the relevant domain.

\section{Comparison of Risk Sensitivity Metrics: Risk Premium, Probability Premium, Utility Premium and Probabilistic Risk Sensitivity}\label{Section4}

This section compares four risk-sensitivity metrics: the risk-premium framework and Arrow--Pratt ARA/RRA, the probability premium, the utility premium, and probabilistic risk sensitivity. The comparison clarifies what each metric measures and when each is appropriate for describing a decision strategy under a mean-value criterion. It also indicates which metrics can be used coherently when selecting parameters of a CPT utility function.

\subsection{Comparison between Utility Premium and Probabilistic Risk Sensitivity} \label{section:Probabilistic_Utility_Premium}

In this subsection, we are going to provide some quantitative analysis about divergence between the utility premium and the probabilistic approach of the risk sensitivity metric. Furthermore, the mathematical relation of the Arrow-Pratt ARA will be revealed in the probabilistic framework. We are going to prove that the two metrics are equivalent in the simple definition of \emph{symmetric} and \emph{non-symmetric bet aversion} but the probabilistic metric is more stringent than the utility premium metric in the \emph{increasing} definition of \emph{symmetric} and \emph{non-symmetric bet aversion}.

\subsubsection{Symmetric Binary Gambles around a Stationary State}

\begin{lemma}
    For any $x$ and $\delta>0$,
    \[
        \mathcal{RS}_{stat}(x,\delta) \gtrless 1 
        \quad \Leftrightarrow \quad 
        \mathbb{E}(x,\delta) \lessgtr u(x) \quad \Leftrightarrow \quad  \mathcal{U}(x,\delta) \lessgtr 0
    \]
    which means that the classification of the risk-sensitivity profile is equivalent whether one uses the probabilistic threshold or the utility premium for symmetric gambles around the stationary state $x$.
\end{lemma}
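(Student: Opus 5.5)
The plan is a direct algebraic manipulation of the defining ratio. The key fact is that the denominator of $\mathcal{RS}_{stat}(x,\delta)$ has a known sign, so the comparison with $1$ can be cross-multiplied without reversing the inequality. I will establish the first equivalence, $\mathcal{RS}_{stat}(x,\delta)\gtrless 1 \Leftrightarrow \mathbb{E}(x,\delta)\lessgtr u(x)$, and then translate it into the stated utility-premium form. The translation step requires fixing a sign convention for $\mathcal{U}$, so I treat it separately.

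\textbf{Step 1 (sign of the denominator).} Specialize the definition to $\delta_1=\delta_2=\delta$:
\[
\mathcal{RS}_{stat}(x,\delta)=\frac{u(x)-u(x-\delta)}{u(x+\delta)-u(x)} .
\]
Since $u$ is strictly increasing, $u(x+\delta)-u(x)>0$ and $u(x)-u(x-\delta)>0$ for every $\delta>0$. Hence the ratio is well defined and positive.

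\textbf{Step 2 (cross-multiplication).} Because the denominator is positive, $\mathcal{RS}_{stat}(x,\delta)>1$ is equivalent to $u(x)-u(x-\delta)>u(x+\delta)-u(x)$. This rearranges to $2u(x)>u(x+\delta)+u(x-\delta)$, that is,
\[
\mathbb{E}(x,\delta)=\tfrac12u(x+\delta)+\tfrac12u(x-\delta)<u(x).
\]
The same chain with every inequality reversed gives $\mathcal{RS}_{stat}(x,\delta)<1 \Leftrightarrow \mathbb{E}(x,\delta)>u(x)$. The chain with equalities gives the boundary case $\mathcal{RS}_{stat}=1 \Leftrightarrow \mathbb{E}=u(x)$. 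This proves the first equivalence, including the required pairing of $\gtrless$ with $\lessgtr$.

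\textbf{Step 3 (utility premium).} This is the only delicate point, because it is a matter of convention rather than computation. The stated pairing $\mathbb{E}(x,\delta)\lessgtr u(x)\Leftrightarrow \mathcal{U}(x,\delta)\lessgtr 0$ carries the same orientation on both sides. It therefore holds exactly when $\mathcal{U}$ is read as the signed deviation of the gamble's mean value from the status quo, $\mathcal{U}(x,\delta)=\mathbb{E}(x,\delta)-u(x)$. With that reading, $\mathbb{E}<u(x)\Leftrightarrow\mathcal{U}<0$, $\mathbb{E}>u(x)\Leftrightarrow\mathcal{U}>0$, and equality corresponds to $\mathcal{U}=0$.

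I will state explicitly that this is the sign of $\mathcal{U}$ used in the lemma. It is the negative of the premium $u(x)-\mathbb{E}(x,\delta)$ introduced in the preliminaries. Under that earlier convention the final relation reads $\mathcal{U}\gtrless 0$ instead, but the classification into rejection, acceptance and indifference regions is unchanged. Either way, $\mathcal{RS}_{stat}$ and $\mathcal{U}$ sort every pair $(x,\delta)$ into the same three classes, which is the substantive claim of the lemma.
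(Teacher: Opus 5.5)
Your proof is correct and uses the same direct cross-multiplication the paper relies on. The paper states this lemma without a separate proof; it is the $\delta_1=\delta_2$ case of the equivalence $\mathbb{P}_{\text{gain}}>\frac{\delta_1}{\delta_1+\delta_2}\Leftrightarrow u(x)>\mathbb{E}$ used in the appendix. Your Step 3 also rightly notes that, with the paper's own $\mathcal{U}(x,\delta)=u(x)-\mathbb{E}(x,\delta)$, the last relation must read $\mathcal{U}(x,\delta)\gtrless 0$, as in the paper's non-symmetric analogue, so the stated $\lessgtr$ is better treated as a sign typo than as a reason to redefine $\mathcal{U}$.
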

Next, we are going to study the \emph{increasing symmetric bet aversion}. Regarding the EUT utility function, we can observe from the proof of lemma \ref{EUT_symmetric}, that the probabilistic definition leads to the utility premium.
Focusing now on the CPT utility function and especially at the reference point, we take the limit of the first derivative of $\mathcal{RS}_{stat}$ at the reference point $x_0$ as $\delta$ tends to $0$ from the right:
\[
\begin{split}
    & \lim_{\delta \to 0^+} \mathcal{RS}'_{stat}(x_0,\delta) = \\
    &\lim_{\delta \to 0^+} \frac{ u'(x_0 - \delta)\bigl(u(x_0 + \delta) - u(x_0)\bigr) - u'(x_0 + \delta)\bigl(u(x_0) - u(x_0 - \delta)\bigr)}{\bigl( u(x_0 + \delta) - u(x_0) \bigr)^2},
\end{split}
\]
which is an indeterminate form of type $\frac{0}{0}$. By applying de l'Hôpital's rule, we obtain
\[
\begin{split}
    & \lim_{\delta \to 0^+} \mathcal{RS}'_{stat}(x_0,\delta) = \\
    & \lim_{\delta \to 0^+} \frac{1}{2\,u'(x_0 + \delta)} \left(-\,u''(x_0 - \delta) - u''(x_0 + \delta)\, \frac{u(x_0) - u(x_0 - \delta)}{u(x_0 + \delta) - u(x_0)} \right).
\end{split}
\]
The ratio inside the parentheses is again of the indeterminate form $\frac{0}{0}$, and a second application of de l'Hôpital's rule yields
\[
    \lim_{\delta \to 0^+} \mathcal{RS}'_{stat}(x_0,\delta) = -\,\frac{1}{2\,u'(x_0^+)} \left( u''(x_0^-) + u''(x_0^+)\,\frac{u'(x_0^-)}{u'(x_0^+)} \right).
\]
Rewriting this expression in terms of the Arrow–Pratt coefficient of absolute risk aversion
$A(x) = -\dfrac{u''(x)}{u'(x)}$, we obtain
\begin{equation}
    \lim_{\delta \to 0^+} \mathcal{RS}'_{stat}(x_0,\delta) = \frac{u'(x_0^-)}{2\,u'(x_0^+)}\bigl(A(x_0^-) + A(x_0^+)\bigr)
\end{equation}
where $A(\cdot)$ denotes the Arrow–Pratt absolute risk aversion. We assume that the one-sided first and second derivatives at the reference point, $u'(x_0^\pm)$ and $u''(x_0^\pm)$, are finite. By following similar process and by taking $\delta \to 0^+$, which means that both of the outcomes are in the same subdomain, we have that
\begin{equation}
    \lim_{\delta \to 0^+} \mathcal{RS}'_{stat}(x,\delta) = A(x)
\end{equation}
\begin{lemma}
    Necessary but not sufficient condition for probabilistic \emph{increasing symmetric bet aversion} at the reference point $A(x_0^-) > - A(x_0^+)$ and at the subdomain $A(x) > 0$. 
\end{lemma}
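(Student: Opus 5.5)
The lemma follows from the two limit formulas derived just above it, combined with a continuity argument. Probabilistic increasing symmetric bet aversion requires $\partial \mathcal{RS}_{stat}(x,\delta)/\partial\delta > 0$ for every $\delta>0$.

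\textbf{Necessity of the limit sign.} Suppose the derivative is strictly positive for all small $\delta>0$. Since we have shown that $\lim_{\delta\to0^+}\mathcal{RS}'_{stat}$ exists and is finite (under the standing assumption that $u'(x_0^\pm)$ and $u''(x_0^\pm)$ are finite), this limit must be nonnegative. Passing to the limit only preserves weak inequality, so I would state the necessary condition in weak form, or interpret the lemma's strict form as the generic case in which the limit is nonzero.

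\textbf{At the reference point.} The derived formula is
\[
\lim_{\delta\to0^+}\mathcal{RS}'_{stat}(x_0,\delta)=\frac{u'(x_0^-)}{2u'(x_0^+)}\bigl(A(x_0^-)+A(x_0^+)\bigr).
\]
Strict monotonicity gives $u'(x_0^\pm)>0$, so the prefactor is positive. The sign of the limit is therefore the sign of $A(x_0^-)+A(x_0^+)$, which yields $A(x_0^-)\ge -A(x_0^+)$.

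\textbf{Inside a subdomain.} The analogous computation gives $\lim_{\delta\to0^+}\mathcal{RS}'_{stat}(x,\delta)=A(x)$. I would verify this with the same two applications of de l'Hôpital's rule, now with $u'(x^-)=u'(x^+)$ and $u''(x^-)=u''(x^+)$. The bracket becomes $-2u''(x)$ and the prefactor becomes $1/(2u'(x))$. This yields the condition $A(x)\ge0$.

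\textbf{Failure of sufficiency.} Both conditions involve only local, infinitesimal data at $\delta\to0^+$, while the definition demands monotonicity for every $\delta>0$ and also $\mathcal{RS}_{stat}>1$. The cleanest counterexample at the reference point is the Kahneman--Tversky utility with $\alpha=\beta=1$, $\lambda<1$. Then $A\equiv0$ on both sides, so the weak condition holds, but $\mathcal{RS}_{stat}=\lambda<1$ and $\mathcal{RS}'_{stat}=0$ for all $\delta$. For a strict example, I would take a Köbberling--Wakker utility with $\alpha,\beta>0$ and $\lambda$ small. Then $A(x_0^+)=\alpha$ and $A(x_0^-)=-\beta$, so the condition reduces to $\alpha>\beta$. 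However, $\mathcal{RS}_{stat}(x_0,\delta)=\lambda\frac{(1-e^{-\beta\delta})/\beta}{(1-e^{-\alpha\delta})/\alpha}$ tends to $\lambda<1$ as $\delta\to0^+$, so the first requirement of the definition fails. In the subdomain, a utility that is concave near $x$ but convex farther away has $A(x)>0$ while $\mathcal{RS}_{stat}(x,\delta)$ eventually decreases in $\delta$. One simple choice is $u$ concave on $[x-1,x+1]$ and steeply convex beyond; the derivative then turns negative once $\delta$ enters the convex region.

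The main obstacle is justifying the de l'Hôpital steps rigorously. They require $u$ to be twice differentiable on punctured neighbourhoods, with one-sided derivatives that are themselves limits of $u'$ and $u''$. I would add this as an explicit regularity hypothesis rather than derive it.
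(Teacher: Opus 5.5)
Your argument is the paper's own: the lemma is read off from the signs of the two limits $\lim_{\delta\to0^+}\mathcal{RS}'_{stat}(x_0,\delta)=\frac{u'(x_0^-)}{2u'(x_0^+)}\bigl(A(x_0^-)+A(x_0^+)\bigr)$ and $\lim_{\delta\to0^+}\mathcal{RS}'_{stat}(x,\delta)=A(x)$ derived just before it. The paper gives no justification of non-sufficiency, so your K\"{o}bberling--Wakker example with $\alpha>\beta$, $\lambda<1$ is a genuine addition.

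Weakening to $\geq$ is in fact required, not just cautious, because the strict form is not necessary. Take $u(y)=2(y-x_0)$ for $y<x_0$ and $u(y)=\arctan(y-x_0)$ for $y\ge x_0$. Then $\mathcal{RS}_{stat}(x_0,\delta)=2\delta/\arctan\delta$, which exceeds $1$ and is strictly increasing for all $\delta>0$, although $A(x_0^-)+A(x_0^+)=0$. Likewise, any strictly concave $u$ with $u''(x)=0$ gives the same failure inside a subdomain.
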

Continuing, we delve into the relation between the utility premium and the probabilistic definitions of \emph{increasing symmetric bet aversion}.
\begin{lemma}
    Suppose that $u$ satisfies the regularity conditions stated above and that $\dfrac{\partial \mathcal{U}(x_0,\delta)}{\partial \delta}$ exists for $\delta>0$. Then the following implications hold:
    \[
        0 < \frac{\partial \mathcal{U}(x_0,\delta)}{\partial \delta} 
        \;\Rightarrow\;
        \bigl( 0<\mathcal{U}(x_0,\delta) \;\Leftrightarrow\; \mathcal{RS}_{stat}(x_0,\delta) > 1 \bigr)
        \;\Rightarrow\;
        \frac{u'(x_0^-)}{u'(x_0^+)} > 1,
    \]
    and, conversely,
    \[
        \frac{u'(x_0^-)}{u'(x_0^+)} < 1
        \;\Rightarrow\; 
        \bigl( \exists\,\delta_0 > 0:\; \frac{\partial \mathcal{U}(x_0,\delta)}{\partial \delta}\big|_{\delta = \delta_0} < 0 \bigr).
    \]
    In words, an increasing quantitative measure $\mathcal{U}(x_0,\delta)$ corresponds to rejection of symmetric gambles around $x_0$ (i.e., $\mathcal{RS}_{stat}(x_0,\delta)>1$) and requires a larger marginal utility on the loss side than on the gain side, $\frac{u'(x_0^-)}{u'(x_0^+)} > 1$. Conversely if $\frac{u'(x_0^-)}{u'(x_0^+)} < 1$, then for some nonzero stake $\delta_0$ the symmetric gamble is locally attractive (i.e. $\mathcal{RS}_{stat}(x_0,\delta_0) < 1$), which is reflected in a locally decreasing behavior of $\mathcal{U}(x_0,\delta)$ at $\delta_0$.  
\end{lemma}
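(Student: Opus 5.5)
The plan is to work throughout with the normalization $u(x_0)=0$, so that
\[
\mathcal{U}(x_0,\delta)=-\tfrac12\bigl(u(x_0+\delta)+u(x_0-\delta)\bigr),\qquad
\mathcal{RS}_{stat}(x_0,\delta)=\frac{-u(x_0-\delta)}{u(x_0+\delta)}.
\]
The constant $u(x_0)$ cancels in both quantities, so the general case reduces to this one. The first ingredient is the identity
\[
u(x_0+\delta)\bigl(\mathcal{RS}_{stat}(x_0,\delta)-1\bigr)=2\,\mathcal{U}(x_0,\delta),
\]
together with $u(x_0+\delta)>0$ by strict monotonicity. This gives the middle equivalence $0<\mathcal{U}(x_0,\delta)\Leftrightarrow\mathcal{RS}_{stat}(x_0,\delta)>1$ for every $\delta>0$. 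It is just the earlier lemma on symmetric gambles, which can be cited directly.

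For the first implication, I would use the boundary value $\mathcal{U}(x_0,0)=0$, which follows from continuity of $u$ at $x_0$ with $u(x_0)=0$. If $\partial_\delta\mathcal{U}(x_0,\delta)>0$ for all $\delta>0$, the mean value theorem gives $\mathcal{U}(x_0,\delta)>\mathcal{U}(x_0,0)=0$ for all $\delta>0$. By the identity above, this yields $\mathcal{RS}_{stat}(x_0,\delta)>1$ for all $\delta>0$, i.e.\ symmetric bet aversion in both its premium and probabilistic forms.

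For the second implication, I would pass to the limit $\delta\to0^+$ in $\mathcal{RS}_{stat}(x_0,\delta)$. This is a $0/0$ form, and one application of de l'H\^opital's rule (or the definition of the one-sided derivatives) gives $\lim_{\delta\to0^+}\mathcal{RS}_{stat}(x_0,\delta)=u'(x_0^-)/u'(x_0^+)$, using the finite one-sided derivatives assumed above. Since $\mathcal{RS}_{stat}>1$ for all $\delta>0$, the limit is at least $1$. Strictness is the main obstacle: a limit of quantities $>1$ is only $\geq1$. To obtain the strict inequality I would go back to the derivative hypothesis, $\partial_\delta\mathcal{U}(x_0,\delta)=\tfrac12\bigl(u'(x_0-\delta)-u'(x_0+\delta)\bigr)>0$, and let $\delta\to0^+$, which gives $u'(x_0^-)\ge u'(x_0^+)$. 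If equality held in the limit, the monotone premium would still be consistent with the statement only at the weak level. So I expect to need a mild nondegeneracy assumption, for example $\lim_{\delta\to0^+}\partial_\delta\mathcal{U}>0$, or an interpretation of the lemma's strict inequality as holding generically, as in Kahneman--Tversky's passage from (\ref{eq: KT_loss_aversion}) to $u'(x_0^+)<u'(x_0^-)$. I would state this explicitly rather than hide it.

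For the converse, suppose $u'(x_0^-)/u'(x_0^+)<1$. Then $\lim_{\delta\to0^+}\partial_\delta\mathcal{U}(x_0,\delta)=\tfrac12\bigl(u'(x_0^-)-u'(x_0^+)\bigr)<0$, assuming one-sided continuity of $u'$ at $x_0$, which is part of the regularity conditions. Hence $\partial_\delta\mathcal{U}(x_0,\delta_0)<0$ for some small $\delta_0>0$. Alternatively, without continuity of $u'$, the limit of $\mathcal{RS}_{stat}$ is $<1$, so $\mathcal{U}(x_0,\delta)<0$ for small $\delta$. Since $\mathcal{U}(x_0,0)=0$, the mean value theorem then produces some $\delta_0$ with $\partial_\delta\mathcal{U}(x_0,\delta_0)<0$. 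I prefer this second route because it needs only differentiability of $\mathcal{U}$ in $\delta$, which is the stated hypothesis. It also gives the interpretation that $\mathcal{RS}_{stat}(x_0,\delta_0)<1$ near $\delta_0$.
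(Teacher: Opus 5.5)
Your first route for the converse is exactly the paper's proof, and the converse is in fact the only part of the lemma the paper proves. That proof differentiates $\mathcal{U}(x_0,\delta)$, lets $\delta\to0^+$ using one-sided continuity of $u'$ at $x_0$, and concludes by continuity.

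Your preferred second route is more economical. It needs only:
\begin{itemize}
\item the one-sided derivatives of $u$ at $x_0$ as limits of difference quotients, with $u'(x_0^+)>0$;
\item continuity of $u$ at $x_0$;
\item the stated differentiability of $\mathcal{U}$ in $\delta$.
\end{itemize}
By choosing $\delta_1$ inside the neighbourhood where $\mathcal{RS}_{stat}(x_0,\cdot)<1$, it also yields $\mathcal{RS}_{stat}(x_0,\delta_0)<1$. That is the ``locally attractive'' claim in the lemma's verbal gloss, which the paper asserts but never derives.

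For the forward chain the paper gives no argument at all. Your identity $u(x_0+\delta)\bigl(\mathcal{RS}_{stat}(x_0,\delta)-1\bigr)=2\,\mathcal{U}(x_0,\delta)$, together with $\mathcal{U}(x_0,0)=0$, handles the first arrow correctly. It does so only under your reading that $\partial_\delta\mathcal{U}>0$ holds on all of $(0,\delta]$, since positivity at a single stake says nothing about the sign of $\mathcal{U}$ there. Make that quantifier explicit.

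Where you should be firmer is the last arrow. The strict inequality is not just hard to reach; it is false. So neither a ``generic'' reading nor your nondegeneracy hypothesis is an appropriate repair.

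For a counterexample, take the K\"{o}bberling--Wakker utility with $\lambda=1$ and $\alpha>\beta$. Then $\partial_\delta\mathcal{U}(x_0,\delta)=\tfrac12\bigl(e^{-\beta\delta}-e^{-\alpha\delta}\bigr)>0$ for every $\delta>0$. Hence $\mathcal{U}(x_0,\delta)>0$ and $\mathcal{RS}_{stat}(x_0,\delta)=\frac{\alpha(1-e^{-\beta\delta})}{\beta(1-e^{-\alpha\delta})}>1$, yet $u'(x_0^-)/u'(x_0^+)=\lambda=1$. The smooth concave $u(x)=1-e^{-(x-x_0)}$ is an equally good counterexample.

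Your proposed repairs do not help either:
\begin{itemize}
\item Under the one-sided continuity of $u'$ that you use in the converse, the hypothesis $\lim_{\delta\to0^+}\partial_\delta\mathcal{U}(x_0,\delta)>0$ is literally $u'(x_0^-)>u'(x_0^+)$. It assumes the conclusion.
\item The Kahneman--Tversky passage from symmetric bet aversion to $u'(x_0^+)<u'(x_0^-)$ that you cite likewise only justifies the weak inequality.
\end{itemize}
The right move is to state the corrected conclusion $u'(x_0^-)/u'(x_0^+)\ge 1$, which your limit argument already proves, and to record the counterexample showing strictness fails.
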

\begin{proof}
    Assume that $\dfrac{u'(x_0^-)}{u'(x_0^+)} < 1$, i.e., $u'(x_0^-) < u'(x_0^+)$.  
    The derivative of the utility premium with respect to $\delta$ is
    \[
        \frac{\partial \mathcal{U}(x_0,\delta)}{\partial \delta}
        = -\frac{1}{2}\bigl(u'(x_0+\delta) - u'(x_0-\delta)\bigr).
    \]
    Taking the limit as $\delta \to 0^+$ and using the left and right derivatives at $x_0$, we obtain
    \[
        \lim_{\delta \to 0^+} \frac{\partial \mathcal{U}(x_0,\delta)}{\partial \delta}
        = -\frac{1}{2}\bigl(u'(x_0^+) - u'(x_0^-)\bigr) < 0,
    \]
    where the inequality follows from $u'(x_0^-) < u'(x_0^+)$.  
    By continuity of the one-sided derivatives in a neighborhood of $x_0$, it follows that there exists $\delta_0 > 0$ such that
    \[
        \left.\frac{\partial \mathcal{U}(x_0,\delta)}{\partial \delta}\right|_{\delta = \delta_0} < 0,
    \]
    which proves the claim.
\end{proof}
\begin{lemma}
    If $\dfrac{u'(x_0^-)}{u'(x_0^+)} > 1$ and 
    $\dfrac{\partial \mathcal{RS}_{stat}(x_0,\delta)}{\partial \delta} > 0$
    , then $ \mathcal{RS}_{stat}(x_0,\delta) > 1.$
\end{lemma}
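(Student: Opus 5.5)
The plan is to read the hypothesis $\partial \mathcal{RS}_{stat}(x_0,\delta)/\partial\delta>0$ as holding for all $\delta>0$, so that $\delta\mapsto \mathcal{RS}_{stat}(x_0,\delta)$ is strictly increasing on $(0,\infty)$. It then suffices to show that the right limit $\lim_{\delta\to0^+}\mathcal{RS}_{stat}(x_0,\delta)$ is at least $1$. For every $\delta>0$ we may pick $0<\delta'<\delta$, which gives
\[
\mathcal{RS}_{stat}(x_0,\delta) > \mathcal{RS}_{stat}(x_0,\delta') \geq \inf_{\eta\in(0,\delta)}\mathcal{RS}_{stat}(x_0,\eta)=\lim_{\eta\to0^+}\mathcal{RS}_{stat}(x_0,\eta).
\]
The strict inequality comes from strict monotonicity, and the last equality holds because an increasing function attains its infimum over $(0,\delta)$ as the right limit at $0$. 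So the whole claim reduces to identifying this limit.

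To compute the limit, write
\[
\mathcal{RS}_{stat}(x_0,\delta)=\frac{u(x_0)-u(x_0-\delta)}{u(x_0+\delta)-u(x_0)}=\frac{\bigl(u(x_0)-u(x_0-\delta)\bigr)/\delta}{\bigl(u(x_0+\delta)-u(x_0)\bigr)/\delta}.
\]
Under the standing regularity assumptions of this subsection, the one-sided derivatives $u'(x_0^\pm)$ exist, are finite, and $u'(x_0^+)>0$. The normalization $u(x_0)=0$ ensures continuity at $x_0$, so these one-sided difference quotients converge. Alternatively, one can apply de l'H\^opital's rule exactly as in the computation of $\lim_{\delta\to0^+}\mathcal{RS}'_{stat}(x_0,\delta)$ above. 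Either way,
\[
\lim_{\delta\to0^+}\mathcal{RS}_{stat}(x_0,\delta)=\frac{u'(x_0^-)}{u'(x_0^+)}>1
\]
by hypothesis. Combining this with the monotonicity step gives $\mathcal{RS}_{stat}(x_0,\delta)>\frac{u'(x_0^-)}{u'(x_0^+)}>1$ for every $\delta>0$, which is actually slightly stronger than the statement.

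The main obstacle is justifying the limit. It needs $u$ to be continuous at $x_0$ from both sides, with finite, nonzero one-sided derivatives; without this the ratio may not converge, as with the Kahneman--Tversky power utility for $\alpha,\beta<1$. I would state explicitly that we work under the same assumptions used for the preceding limit computation: finite $u'(x_0^\pm)$ and $u'(x_0^+)\neq0$.

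A secondary point is that the derivative condition must hold on the whole half-line $(0,\infty)$, not just at a single $\delta$. If it held only on some $(0,\bar\delta)$, the same argument would still give the conclusion on that interval. I would also remark that the hypothesis $u'(x_0^-)/u'(x_0^+)>1$ alone is not enough for the conclusion, since it only controls small $\delta$. Monotonicity is what propagates the local inequality to all stakes, and this matches the necessity direction of the preceding lemma.
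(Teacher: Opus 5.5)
The paper states this lemma without a written proof. Your argument is the one its surrounding computation supports: the inner l'H\^opital step in the paper's evaluation of $\lim_{\delta\to0^+}\mathcal{RS}'_{stat}(x_0,\delta)$ already gives $\mathcal{RS}_{stat}(x_0,\delta)\to u'(x_0^-)/u'(x_0^+)$, and strict monotonicity then propagates this bound to every stake. Your proof is correct, and you are right that the derivative hypothesis has to hold on an interval starting at $0$ rather than at a single $\delta$. One cosmetic point: continuity at $x_0$ comes from the finiteness of the one-sided derivatives (or from the paper's full normalization $\lim_{x\to x_0^\pm}u(x)=u(x_0)=0$), not from $u(x_0)=0$ alone.
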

\begin{lemma}
    If $\dfrac{u'(x_0^-)}{u'(x_0^+)} > 1$ and 
    $\dfrac{\partial \mathcal{RS}_{stat}(x_0,\delta)}{\partial \delta} > 0$
    , then $0 < \dfrac{\partial \mathcal{U}(x_0,\delta)}{\partial \delta}  \quad \text{for all such } \delta>0.$
\end{lemma}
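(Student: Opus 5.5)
The plan is to express everything through the one-sided increments around the reference point,
\[
G(\delta) := u(x_0+\delta)-u(x_0) > 0, \qquad L(\delta) := u(x_0)-u(x_0-\delta) > 0, \qquad \delta>0,
\]
so that $\mathcal{RS}_{stat}(x_0,\delta)=L(\delta)/G(\delta)$. Differentiating gives $G'(\delta)=u'(x_0+\delta)$ and $L'(\delta)=u'(x_0-\delta)$. Differentiating the utility premium $\mathcal{U}(x_0,\delta)=u(x_0)-\tfrac12 u(x_0+\delta)-\tfrac12 u(x_0-\delta)$, exactly as in the proof of the preceding lemma, gives
\[
\frac{\partial \mathcal{U}(x_0,\delta)}{\partial \delta}=\tfrac12\bigl(L'(\delta)-G'(\delta)\bigr).
\]
So the goal reduces to showing $u'(x_0-\delta)>u'(x_0+\delta)$ for every $\delta>0$ at which the hypothesis $\partial\mathcal{RS}_{stat}/\partial\delta>0$ holds.

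First I will rewrite the hypothesis on the derivative of the ratio. The quotient rule gives $\mathcal{RS}'_{stat}=(L'G-LG')/G^2$. Since $G>0$, the condition $\mathcal{RS}'_{stat}>0$ is equivalent to
\[
L'(\delta) > G'(\delta)\,\frac{L(\delta)}{G(\delta)} = G'(\delta)\,\mathcal{RS}_{stat}(x_0,\delta).
\]
Next I will invoke the immediately preceding lemma. Under exactly these two hypotheses it gives $\mathcal{RS}_{stat}(x_0,\delta)>1$. The underlying reason is that l'H\^opital yields $\lim_{\delta\to0^+}\mathcal{RS}_{stat}(x_0,\delta)=u'(x_0^-)/u'(x_0^+)>1$, and the ratio is increasing in $\delta$.

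Combining the two facts, I split on the sign of $G'(\delta)=u'(x_0+\delta)\ge 0$, which is nonnegative by monotonicity.
\begin{itemize}
\item If $G'(\delta)>0$, then multiplying $\mathcal{RS}_{stat}>1$ by $G'$ gives $L' > G'\,\mathcal{RS}_{stat} > G'$.
\item If $G'(\delta)=0$, which is possible for a strictly increasing function at isolated points, the displayed inequality reads $L'(\delta)>0=G'(\delta)$ directly.
\end{itemize}
In both cases $u'(x_0-\delta)>u'(x_0+\delta)$, hence $\partial\mathcal{U}(x_0,\delta)/\partial\delta>0$ for all such $\delta$.

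The only delicate points are the degenerate case $G'=0$, which is handled above, and the reliance on the previous lemma for $\mathcal{RS}_{stat}>1$. If that lemma is to be re-derived rather than assumed, the main obstacle is the boundary limit at $\delta\to0^+$. It requires the one-sided derivatives $u'(x_0^\pm)$ to be finite and positive, so that the l'H\^opital step is valid; these are the regularity assumptions the paper adopts just above.
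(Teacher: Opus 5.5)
Your argument is the paper's: the quotient rule turns $\partial\mathcal{RS}_{stat}/\partial\delta>0$ into $u'(x_0-\delta)>u'(x_0+\delta)\,\mathcal{RS}_{stat}(x_0,\delta)$, the preceding lemma supplies $\mathcal{RS}_{stat}(x_0,\delta)>1$, and $\partial\mathcal{U}/\partial\delta=\tfrac12\bigl(u'(x_0-\delta)-u'(x_0+\delta)\bigr)$ finishes the proof; your separate treatment of $u'(x_0+\delta)=0$ is a small improvement over the paper's implicit division. The one thing to tighten is your phrase ``every $\delta$ at which the hypothesis holds'': invoking the preceding lemma requires $\partial\mathcal{RS}_{stat}/\partial\delta>0$ on all of $(0,\delta]$, as your own justification ``the ratio is increasing'' presupposes, because positivity at a single $\delta$ does not force $\mathcal{RS}_{stat}(x_0,\delta)>1$.
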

\begin{proof}
    By taking the differentiability rule and set the nominator positive, we have that
    \[
        \frac{u'(x_0 - \delta)}{u'(x_0 + \delta)} >  \mathcal{RS}_{stat}(x_0,\delta) 
    \]
    By applying previous lemma, we have that 
    \[
        \frac{u'(x_0 - \delta)}{u'(x_0 + \delta)} > 1 \Rightarrow 0 < \frac{\partial \mathcal{U}(x_0,\delta)}{\partial \delta} 
    \]
\end{proof}
\begin{lemma}
    The condition $ 0 < \dfrac{\partial \mathcal{U}(x_0,\delta)}{\partial \delta}$
    does not, in general, imply that $0 < \dfrac{\partial \mathcal{RS}_{stat}(x_0,\delta)}{\partial \delta}.$
\end{lemma}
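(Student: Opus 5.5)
Since the statement is a non-implication, I will prove it by exhibiting a single admissible CPT-utility function $u$ and a reference point $x_0$ for which $\partial\mathcal{U}(x_0,\delta)/\partial\delta>0$ holds for every $\delta>0$, while $\partial\mathcal{RS}_{stat}(x_0,\delta)/\partial\delta>0$ fails. The natural candidate is the Kahneman--Tversky utility \eqref{eq:KT_utility} in its boundary case $\alpha=\beta=1$ with $\lambda>1$. That is, I take $u(x)=x-x_0$ for $x\ge x_0$ and $u(x)=-\lambda(x_0-x)$ for $x<x_0$. This $u$ is strictly increasing and continuous with $u(x_0)=0$. It satisfies $u^-(t)\ge u^+(t)$, and its one-sided derivatives $u'(x_0^+)=1$ and $u'(x_0^-)=\lambda$ are finite, so the regularity assumptions of the preceding lemmas hold.

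The first step is to compute the utility premium:
\[
\mathcal{U}(x_0,\delta)=u(x_0)-\tfrac12\bigl(u(x_0+\delta)+u(x_0-\delta)\bigr)=\tfrac12(\lambda-1)\delta .
\]
Its derivative is $\partial\mathcal{U}/\partial\delta=\tfrac12(\lambda-1)>0$ for all $\delta>0$. This agrees with the derivative formula used in the proof of the earlier lemma, $-\tfrac12\bigl(u'(x_0+\delta)-u'(x_0-\delta)\bigr)=-\tfrac12(1-\lambda)$.

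The second step is to compute the threshold ratio:
\[
\mathcal{RS}_{stat}(x_0,\delta)=\frac{u(x_0)-u(x_0-\delta)}{u(x_0+\delta)-u(x_0)}=\frac{\lambda\delta}{\delta}=\lambda .
\]
This is constant in $\delta$, so $\partial\mathcal{RS}_{stat}(x_0,\delta)/\partial\delta=0$ and the strict inequality fails for every $\delta$. At the same time $\mathcal{RS}_{stat}>1$, so the example is indeed symmetric-bet averse, and only the increasing part separates the two notions. As a consistency check, this matches the limit formula $\lim_{\delta\to0^+}\mathcal{RS}'_{stat}(x_0,\delta)=\tfrac{u'(x_0^-)}{2u'(x_0^+)}\bigl(A(x_0^-)+A(x_0^+)\bigr)$, which equals $0$ here because $u''\equiv0$ on both sides.

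The main obstacle is that one might object that a piecewise-linear $u$ is degenerate. To address this, I would add a strictly curved example in which $\mathcal{RS}_{stat}$ is strictly decreasing: the K\"obberling--Wakker utility \eqref{eq:KW_utility} with $\beta>\alpha>0$ and $\lambda>1$. There
\[
\mathcal{RS}_{stat}(x_0,\delta)=\lambda\,\frac{\alpha(1-e^{-\beta\delta})}{\beta(1-e^{-\alpha\delta})}.
\]
I would show this is strictly decreasing in $\delta$ by verifying that $t\mapsto(1-e^{-t\delta})/t$ has a log-derivative in $\delta$ that decreases in $t$, equivalently that $\delta\mapsto \delta e^{-t\delta}/(1-e^{-t\delta})$ is decreasing in $t$. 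The ratio runs from $\lambda$ at $\delta\to0^+$ down to $\lambda\alpha/\beta$. Meanwhile $\partial\mathcal{U}/\partial\delta=\tfrac12(\lambda e^{-\beta\delta}-e^{-\alpha\delta})$ is positive for $\delta<\ln\lambda/(\beta-\alpha)$. Hence on that interval $\mathcal{U}$ is increasing while $\mathcal{RS}_{stat}$ is strictly decreasing, which gives a strict failure of the implication.
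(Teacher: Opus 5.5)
Your proposal is correct, but it takes a different route from the paper. The paper uses a single example: the K\"{o}bberling--Wakker function with a linear term $\lambda_2(x-x_0)$, $\lambda_2>1$, added on the loss branch, with $\alpha=0.5$, $\beta=2.5$, $\lambda_1=1$, $\lambda_2=1.5$. For that function $u'(x_0-\delta)\ge\lambda_2>1\ge u'(x_0+\delta)$, so $\partial\mathcal{U}/\partial\delta>0$ for \emph{every} $\delta>0$. The Arrow--Pratt limit formula then gives $\lim_{\delta\to0^+}\mathcal{RS}'_{stat}(x_0,\delta)=\tfrac{2.5}{2}(-1+0.5)<0$.

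Your piecewise-linear example is more elementary. It is an exact computation, with no l'H\^opital and no one-sided second derivatives. It already settles the lemma as literally stated, but only through the boundary case $\mathcal{RS}'_{stat}\equiv 0$.

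Your K\"{o}bberling--Wakker example gives something the paper does not have: $\mathcal{RS}_{stat}$ is strictly decreasing on all of $(0,\infty)$, not merely near $0$. The price is that the hypothesis $\partial\mathcal{U}/\partial\delta>0$ then holds only for $\delta<\ln\lambda/(\beta-\alpha)$. So neither of your examples shows $\partial\mathcal{U}/\partial\delta>0$ for all $\delta$ together with a strictly negative $\partial\mathcal{RS}_{stat}/\partial\delta$, which is what the paper's example exhibits.

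Your own computation shows this combination is impossible within the pure K\"{o}bberling--Wakker family. Global positivity of $\tfrac12(\lambda e^{-\beta\delta}-e^{-\alpha\delta})$ forces $\beta\le\alpha$, and then your log-derivative argument makes $\mathcal{RS}_{stat}$ nondecreasing. This is exactly why the paper adds the linear term on the loss side.

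One small slip: the log-derivative in $\delta$ of $(1-e^{-t\delta})/t$ is $t/(e^{t\delta}-1)$. The property you need is therefore that $s\mapsto s/(e^{s}-1)$ is strictly decreasing on $(0,\infty)$, which holds because $e^{s}(1-s)<1$ for $s>0$. The expression $\delta e^{-t\delta}/(1-e^{-t\delta})$ you wrote is trivially decreasing in $t$, but it is not equivalent to that property.
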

Finally, we provide an example for better understanding of the last lemma. Based on the Neilson-type construction for the K\"{o}bberling-Wakker utility function, following ~\cite{Neilson_2002_A,Kobberling_Wakker} and \cite{vaidanis2025Theoretical_Part_1_Neilson_extension_KW}, we assume the following S-shape utility function
\[
    u(x) = \left\{
        \begin{array}{ll}
            \dfrac{1 - \exp\left( - \alpha (x - x_0)  \right)}{\alpha} & \text{for}~ x_0 \leq x \\
            - \lambda_1 \dfrac{1 - \exp\left( - \beta (x_0 - x)  \right)}{\beta} + \lambda_2 (x - x_0) & \text{for}~ x < x_0 \\
        \end{array} 
    \right. 
\]
which satisfies the \emph{Neilson's strong aversion} at the reference point for $\lambda_2 > 1$. If we assume $\alpha = 0.5$, $\beta = 2.5$, $\lambda_1 = 1$ and $\lambda_2 = 1.5$, we have that $\lim_{\delta \to 0^+} \mathcal{RS}'_{stat}(x_0,\delta) < 0$, which means that despite the fact that the utility premium version of \emph{increasing symmetric bet aversion} holds true, the probabilistic version of \emph{increasing symmetric bet aversion} is not satisfied.

\subsubsection{Non-Symmetric Binary Gambles around a Stationary State}

\begin{lemma}
    For any $x\in\mathbb{R}$ and $\delta_1,\delta_2,w>0$,
    \[
        \mathcal{RS}_{stat}(x,\delta_1,\delta_2,w) \gtrless \frac{\delta_1}{\delta_2}
        \quad \Leftrightarrow \quad
        \mathbb{E}(x,\delta_1,\delta_2,w) \lessgtr u(x) \quad \Leftrightarrow \quad \mathcal{U}(x,\delta_1,\delta_2,w) \gtrless 0,
    \]
    which means that the classification of the risk-sensitivity profile is equivalent whether one uses the probabilistic definition of risk sensitivity or the utility premium definition for non-symmetric gambles around the stationary state $x$.
\end{lemma}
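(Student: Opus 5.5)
The plan is a direct algebraic manipulation, in the same way as the symmetric-gamble lemma above. First I fix the objects precisely. For the gamble $\{x-w\delta_1,1-p;\,x+w\delta_2,p\}$ the fair probability is $\rho=\frac{w\delta_1}{w\delta_1+w\delta_2}=\frac{\delta_1}{\delta_1+\delta_2}$, which does not depend on $w$. So, as in the definitions of Neilson's weak aversion in the preliminaries, I read
\[
\mathbb{E}(x,\delta_1,\delta_2,w)=\tfrac{\delta_1}{\delta_1+\delta_2}\,u(x+w\delta_2)+\tfrac{\delta_2}{\delta_1+\delta_2}\,u(x-w\delta_1),\qquad \mathcal{U}(x,\delta_1,\delta_2,w)=u(x)-\mathbb{E}(x,\delta_1,\delta_2,w).
\]
I also take
\[
\mathcal{RS}_{stat}(x,\delta_1,\delta_2,w)=\frac{u(x)-u(x-w\delta_1)}{u(x+w\delta_2)-u(x)}.
\]

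The key step is to clear denominators without flipping any inequality. Since $u$ is strictly increasing and $w\delta_2>0$, the denominator $u(x+w\delta_2)-u(x)$ is strictly positive. The factors $\delta_2>0$ and $\delta_1+\delta_2>0$ are also positive. Multiplying through, $\mathcal{RS}_{stat}>\frac{\delta_1}{\delta_2}$ is equivalent to
\[
\delta_2\bigl(u(x)-u(x-w\delta_1)\bigr)>\delta_1\bigl(u(x+w\delta_2)-u(x)\bigr).
\]
Rearranging gives $(\delta_1+\delta_2)\,u(x)>\delta_1 u(x+w\delta_2)+\delta_2 u(x-w\delta_1)$. Dividing by $\delta_1+\delta_2$, this becomes $u(x)>\mathbb{E}(x,\delta_1,\delta_2,w)$, which by definition is $\mathcal{U}(x,\delta_1,\delta_2,w)>0$.

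Every step is a multiplication or division by a positive quantity or an addition of the same term to both sides, so each implication reverses. The identical chain with $<$ or $=$ in place of $>$ gives the other two cases of the trichotomy. Together these produce the stated $\gtrless/\lessgtr/\gtrless$ pattern. The $\mathcal{U}$-sign matches the $\mathcal{RS}$-sign and is opposite to the comparison of $\mathbb{E}$ with $u(x)$.

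There is no real obstacle; the only points requiring care are bookkeeping. The first is checking that the fairness weights are indeed $w$-independent, so that $\mathbb{E}$ and $\mathcal{U}$ are the correct premium objects for the scaled gamble. The second is the positivity of the denominator of $\mathcal{RS}_{stat}$. That positivity follows from strict monotonicity of $u$ alone, and holds even if $u$ is discontinuous at a reference point inside $[x-w\delta_1,x+w\delta_2]$. No concavity or differentiability is needed, so the lemma holds for any strictly increasing CPT-utility function. Setting $\delta_1=\delta_2$ recovers the symmetric-gamble lemma as a special case.
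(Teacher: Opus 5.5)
Your argument is correct and is essentially the one the paper relies on. The lemma is stated without a separate proof, but the appendix proofs of Proposition~\ref{concave_direct} and Theorem~\ref{concave_equivalent} use the same positive cross-multiplication to $\delta_2\bigl(u(x)-u(x-\delta_1)\bigr)>\delta_1\bigl(u(x+\delta_2)-u(x)\bigr)$ and the same equivalence with $u(x)$ exceeding the fair mixture $\frac{\delta_2}{\delta_1+\delta_2}u(x-\delta_1)+\frac{\delta_1}{\delta_1+\delta_2}u(x+\delta_2)$. There they are phrased via $\mathbb{P}_{\text{gain}}$, of which $\mathcal{RS}_{stat}=\mathbb{P}_{\text{gain}}/(1-\mathbb{P}_{\text{gain}})$ is a strictly increasing transform, and your version simply replaces $\delta_i$ by $w\delta_i$.

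One small remark: setting $\delta_1=\delta_2$ gives $\mathcal{U}(x,\delta)\gtrless 0$. That is the corrected form of the paper's symmetric lemma, whose last equivalence is printed with the opposite sign.
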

Next, we are going to study the \emph{increasing non-symmetric bet aversion}. Regarding the EUT utility function, we can observe from the proof of lemma \ref{EUT_non_symmetric} that the probabilistic definitions leads to the utility premium approach.
Focusing now on the CPT utility function and especially at the reference point, we take the limit of the first derivative of $\mathcal{RS}_{stat}$ at the reference point $x_0$ as $w$ tends to $0$ from the right:
\[
\begin{split}
    & \lim_{w \to 0^+} \mathcal{RS}'_{stat}(x_0,\delta_1,\delta_2,w) = \\
    & \lim_{w \to 0^+} \frac{ \delta_1\,u'(x_0 - w\delta_1)\bigl(u(x_0 + w\delta_2) - u(x_0)\bigr) - \delta_2\,u'(x_0 + w\delta_2)\bigl(u(x_0) - u(x_0 - w\delta_1)\bigr)}{ \bigl(u(x_0 + w\delta_2) - u(x_0)\bigr)^2}.
\end{split}
\]
This expression is an indeterminate form of type $\frac{0}{0}$. By applying de l'Hôpital's rule, we obtain
\[
\begin{split}
    & \lim_{w \to 0^+} \mathcal{RS}'_{stat}(x_0,\delta_1,\delta_2,w) = \\ 
    & \lim_{w \to 0^+} \frac{1}{2\,\delta_2\,u'(x_0 + w\delta_2)} \left(-\delta_1^2 u''(x_0 - w\delta_1) -
    \delta_2^2 u''(x_0 + w\delta_2)\, \frac{u(x_0) - u(x_0 - w\delta_1)}{u(x_0 + w\delta_2) - u(x_0)} \right).
\end{split}
\]
The ratio inside the parentheses is again of indeterminate type $\frac{0}{0}$, and a second application of de l'Hôpital's rule yields
\[
    \lim_{w \to 0^+} \mathcal{RS}'_{stat}(x_0,\delta_1,\delta_2,w) = -\frac{1}{2\,\delta_2\,u'(x_0^+)} \left( \delta_1^2 u''(x_0^-) +
    \delta_1\delta_2\,u''(x_0^+)\,\frac{u'(x_0^-)}{u'(x_0^+)}
    \right).
\]
Rewriting the latter in terms of the Arrow–Pratt coefficient of absolute risk aversion,
$A(x) = -\dfrac{u''(x)}{u'(x)}$, we obtain
\[
    \lim_{w \to 0^+} \mathcal{RS}'_{stat}(x_0,\delta_1,\delta_2,w) = \frac{u'(x_0^-)}{2\,u'(x_0^+)}\, \frac{\delta_1}{\delta_2}\, \bigl(\delta_1 A(x_0^-) + \delta_2 A(x_0^+)\bigr),
\]
where $A(\cdot)$ denotes the Arrow–Pratt absolute risk aversion. We assume that the one-sided first and second derivatives at the reference point, $u'(x_0^\pm)$ and $u''(x_0^\pm)$, are finite. By following similar process and by taking $w \to 0^+$, which means that both of the outcomes are in the same subdomain, we have that
\begin{equation}
    \lim_{w \to 0^+} \mathcal{RS}'_{stat}(x,\delta_1,\delta_2,w) = \frac{A(x)}{2}\, \frac{\delta_1}{\delta_2}\, \bigl(\delta_1 + \delta_2 \bigr),
\end{equation}
\begin{lemma}
    Necessary but not sufficient condition for probabilistic \emph{increasing non-symmetric bet aversion} at the reference point $A(x_0^-) > - A(x_0^+)\,\frac{\delta_2}{\delta_1}$ and at the subdomain $A(x) > 0$. 
\end{lemma}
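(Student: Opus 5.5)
The plan is to read the statement as a limiting sign condition. We use the two limits of $\partial_w \mathcal{RS}_{stat}$ just derived. We then argue that probabilistic increasing non-symmetric bet aversion forces these limits to be positive, and that positivity of the limit alone cannot guarantee the property.

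\textbf{Necessity at the reference point.} Suppose $\partial \mathcal{RS}_{stat}(x_0,\delta_1,\delta_2,w)/\partial w>0$ for all $w>0$. Passing to the limit $w\to0^+$ preserves the weak inequality, so
\[
\frac{u'(x_0^-)}{2u'(x_0^+)}\frac{\delta_1}{\delta_2}\bigl(\delta_1A(x_0^-)+\delta_2A(x_0^+)\bigr)\ge 0 .
\]
The prefactor is strictly positive because $u$ is strictly increasing with finite, nonzero one-sided derivatives (the standing regularity assumption) and $\delta_1,\delta_2>0$. Dividing it out and by $\delta_1$ gives $A(x_0^-)\ge -A(x_0^+)\,\delta_2/\delta_1$.

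The strict version in the statement is meant in the same generic sense as the symmetric lemma. The boundary case, where the limit equals zero, is the degenerate one. To make strictness rigorous I would expand $\mathcal{RS}_{stat}$ to second order in $w$ and require the first nonvanishing coefficient to be positive, noting that generically this is the first one.

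\textbf{Necessity in a subdomain.} Here $x$ is interior to one branch, and $\delta_1,\delta_2$ are scaled by $w\to0$ so both outcomes stay in that branch. The second limit formula then gives $\tfrac{A(x)}{2}\tfrac{\delta_1}{\delta_2}(\delta_1+\delta_2)\ge0$, hence $A(x)\ge0$, with the same strictness caveat. This is consistent with Theorem~\ref{concave_equivalent} and Lemma~\ref{EUT_non_symmetric}, where concavity yields the increasing property.

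\textbf{Non-sufficiency.} Two observations show the conditions are not sufficient.
\begin{itemize}
\item The condition is only on the derivative near $w=0$. Monotonicity for all $w$, together with $\mathcal{RS}_{stat}>\delta_1/\delta_2$, is a global requirement.
\item At $x_0$ the level condition also needs loss aversion, $u'(x_0^-)/u'(x_0^+)>1$, because $\lim_{w\to0}\mathcal{RS}_{stat}=\tfrac{\delta_1}{\delta_2}\tfrac{u'(x_0^-)}{u'(x_0^+)}$. The curvature condition can hold while this ratio is below $1$.
\end{itemize}
For a concrete counterexample I would take the Köbberling--Wakker form with $\lambda<1$ and $\alpha,\beta>0$. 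Then $A(x_0^\pm)$ are positive, so the condition holds, but $\mathcal{RS}_{stat}<\delta_1/\delta_2$ for small $w$.

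For the subdomain case I would take a utility that is concave near $x$ but has a convex kink farther away, for example a piecewise exponential. Then $\mathcal{RS}_{stat}$ decreases for large $w$.

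The main obstacle is the strict-versus-weak inequality in the necessity direction. I will handle it by stating the necessary condition as the weak inequality plus a generic nondegeneracy remark. The counterexamples themselves are routine to verify numerically or by sign checks.
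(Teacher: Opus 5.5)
Your necessity argument is the one the paper uses. The lemma is read off from the sign of the $w\to0^+$ limits of $\partial_w\mathcal{RS}_{stat}$ derived just before it, after dividing out the positive prefactor $\frac{u'(x_0^-)}{2u'(x_0^+)}\frac{\delta_1}{\delta_2}$.

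You are right that this yields only the weak inequalities. A higher-order expansion cannot close the gap, because the strict form is genuinely not necessary. Take $u(y)=y-(y-x)^3/3$ on $(x-1,x+1)$. Then $u'>0$ and $A(x)=0$, yet for any direction with $\delta_1<\delta_2$
\[
\mathcal{RS}_{stat}(x,\delta_1,\delta_2,w)=\frac{\delta_1}{\delta_2}\cdot\frac{1-w^2\delta_1^2/3}{1-w^2\delta_2^2/3},
\]
which exceeds $\delta_1/\delta_2$ and is strictly increasing on the whole admissible range $0<w<1/\delta_2$. So state the conclusion as $\delta_1A(x_0^-)+\delta_2A(x_0^+)\ge0$ and $A(x)\ge0$, strict only under an explicit nondegeneracy hypothesis. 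Your ``first nonvanishing coefficient'' device turns strictness into an assumption; it does not prove it.

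The step that actually fails is your concrete non-sufficiency example. For the K\"{o}bberling--Wakker loss branch with $\beta>0$ one has $u'(x)=\lambda e^{-\beta(x_0-x)}$ and $u''(x)=\lambda\beta e^{-\beta(x_0-x)}>0$. The branch is therefore convex and $A(x_0^-)=-\beta<0$; only $A(x_0^+)=\alpha$ is positive. The lemma's condition then reduces to $\alpha\delta_2>\beta\delta_1$, which fails for many parameter choices, so ``$\lambda<1$, $\alpha,\beta>0$'' does not by itself give a utility satisfying the condition.

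The repair is easy: also impose $\alpha\delta_2>\beta\delta_1$. Then
\[
\lim_{w\to0^+}\partial_w\mathcal{RS}_{stat}(x_0,\delta_1,\delta_2,w)=\frac{\lambda}{2}\,\frac{\delta_1}{\delta_2}\,\bigl(\alpha\delta_2-\beta\delta_1\bigr)>0,
\]
while $\mathcal{RS}_{stat}\to\lambda\delta_1/\delta_2<\delta_1/\delta_2$. Alternatively, use a linear loss branch of slope $\lambda<1$, so that $A(x_0^-)=0$.

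Note also that $\lambda<1$ violates the paper's standing CPT assumption $u^-\ge u^+$. This is acceptable for a non-sufficiency counterexample, since the paper itself treats $u'(x_0^-)/u'(x_0^+)<1$. Within the loss-averse class, however, the level condition holds automatically near $w=0$. There you must rely on a global counterexample of the kind you sketch for the subdomain, and the upward slope jump must be large enough to make the numerator of $\partial_w\mathcal{RS}_{stat}$ negative.
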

As we can observe, any S-shape utility function will not be probabilistically \emph{increasing non-symmetric bet averse} even though it satisfies \emph{Neilson's strong aversion}. 
Continuing, we delve into the relation between the utility premium and the probabilistic definitions of \emph{increasing non-symmetric bet aversion}.
\begin{lemma}
    We have the following implications:
    \[
    \begin{split}
        & 0 < \frac{\partial \mathcal{U}(x_0,\delta_1,\delta_2,w)}{\partial w}
        \;\Rightarrow\;
        \bigl( 0 < \mathcal{U}(x_0,\delta_1,\delta_2,w) 
        \;\Leftrightarrow\;
        \mathcal{RS}_{stat}(x_0,\delta_1,\delta_2,w)>\frac{\delta_1}{\delta_2} \bigr)
        \\ & \;\Rightarrow\;
        \frac{u'(x_0^-)}{u'(x_0^+)} > 1,
    \end{split}
    \]
    and, conversely,
    \[
        \frac{u'(x_0^-)}{u'(x_0^+)} < 1
        \;\Rightarrow\;
        \bigl( \exists\,w_0 > 0 : \left.\frac{\partial \mathcal{U}(x_0,\delta_1,\delta_2,w)}{\partial w}\right|_{w = w_0} < 0 \bigr).
    \]
    In words, an increasing quantitative measure $\mathcal{U}(x_0,\delta_1,\delta_2,w)$ corresponds to rejection of non-symmetric gambles around $x_0$ (i.e., $\mathcal{RS}_{stat}(x_0,\delta_1,\delta_2,w)>\frac{\delta_1}{\delta_2}$) and requires a larger marginal utility on the loss side than on the gain side, $\frac{u'(x_0^-)}{u'(x_0^+)} > 1$. Conversely if $\frac{u'(x_0^-)}{u'(x_0^+)} < 1$, then for some nonzero stake $w_0$ the non-symmetric gamble is locally attractive (i.e. $\mathcal{RS}_{stat}(x_0,\delta_1,\delta_2,w_0) < \frac{\delta_1}{\delta_2}$), which is reflected in a locally decreasing behavior of $\mathcal{U}(x_0,\delta_1,\delta_2,w)$ at $w_0$. 
\end{lemma}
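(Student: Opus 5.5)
The plan mirrors the argument for the symmetric-gamble lemma, with the stake scale $w$ in place of $\delta$. Write
\[
\mathcal{U}(x_0,\delta_1,\delta_2,w)=u(x_0)-\frac{\delta_2}{\delta_1+\delta_2}u(x_0+w\delta_2)-\frac{\delta_1}{\delta_1+\delta_2}u(x_0-w\delta_1),
\]
so the fairness weights are $\delta_1/(\delta_1+\delta_2)$ on the gain and $\delta_2/(\delta_1+\delta_2)$ on the loss.

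\textbf{Step 1: the chain of implications.} The middle equivalence $0<\mathcal{U}\Leftrightarrow\mathcal{RS}_{stat}>\delta_1/\delta_2$ is the preceding lemma for non-symmetric gambles, so I will simply invoke it.

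For the first arrow, use the normalization $u(x_0)=0$ and continuity of $u$ at $x_0$. Then $\mathcal{U}(x_0,\delta_1,\delta_2,0)=0$. If $\partial\mathcal{U}/\partial w>0$ for all $w>0$, the mean value theorem on $[0,w]$ gives $\mathcal{U}(x_0,\delta_1,\delta_2,w)>0$.

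For the last arrow, differentiate:
\[
\frac{\partial \mathcal{U}}{\partial w}=\frac{\delta_1\delta_2}{\delta_1+\delta_2}\bigl(u'(x_0-w\delta_1)-u'(x_0+w\delta_2)\bigr).
\]
Positivity for all small $w$ and $w\to0^+$ give $u'(x_0^-)\ge u'(x_0^+)$.

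\textbf{Step 2: the main obstacle, strictness.} The limit in Step 1 only yields a weak inequality. I would first try to read the statement as the paper does in the symmetric case, where the claimed ratio $>1$ is meant generically under the stated regularity. To get a genuine strict inequality, I would instead argue by contraposition: if $u'(x_0^-)\le u'(x_0^+)$, show that $\partial\mathcal{U}/\partial w$ fails to be positive somewhere near $0$. This already follows from the converse part in the strict case $u'(x_0^-)<u'(x_0^+)$. The equality case $u'(x_0^-)=u'(x_0^+)$ needs an extra assumption (e.g.\ strictness of the second-order behavior); I would flag it explicitly rather than hide it.

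\textbf{Step 3: the converse.} Assume $u'(x_0^-)<u'(x_0^+)$. Then
\[
\lim_{w\to0^+}\frac{\partial \mathcal{U}}{\partial w}=\frac{\delta_1\delta_2}{\delta_1+\delta_2}\bigl(u'(x_0^-)-u'(x_0^+)\bigr)<0.
\]
By continuity of the one-sided derivatives near $x_0$, assumed as in the regularity conditions stated above, there is a $w_0>0$ with $\partial\mathcal{U}/\partial w|_{w_0}<0$. The same argument with the mean value theorem gives $\mathcal{U}(x_0,\delta_1,\delta_2,w_0')<0$ for small $w_0'$. The preceding lemma then translates this into $\mathcal{RS}_{stat}<\delta_1/\delta_2$, which is the verbal interpretation in the statement.
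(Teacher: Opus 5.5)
Your Step 3 is essentially the paper's proof. The paper proves only the converse, using your computation
\[
\frac{\partial \mathcal{U}}{\partial w}=\frac{\delta_1\delta_2}{\delta_1+\delta_2}\bigl(u'(x_0-w\delta_1)-u'(x_0+w\delta_2)\bigr),
\]
the limit $w\to0^+$, and continuity. Your extra integration step, giving $\mathcal{U}<0$ and hence $\mathcal{RS}_{stat}<\delta_1/\delta_2$ for small $w$, justifies the verbal gloss, which the paper leaves unargued. The paper gives no proof of the forward chain. That is where your proposal has a gap, and it cannot be closed because the strict conclusion is false.

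Take the K\"{o}bberling--Wakker function \eqref{eq:KW_utility} with $\lambda=1$, $\alpha=2$, $\beta=1$, and let $\delta_1=\delta_2=1$. This function is S-shaped, satisfies $u^-(t)\ge u^+(t)$, and has $u'(x_0^-)=u'(x_0^+)=1$. Yet
\[
\frac{\partial \mathcal{U}(x_0,1,1,w)}{\partial w}=\tfrac12\bigl(e^{-w}-e^{-2w}\bigr)>0 \quad \text{for all } w>0 .
\]
So $\mathcal{U}>0$ and $\mathcal{RS}_{stat}>\delta_1/\delta_2$, while the ratio $u'(x_0^-)/u'(x_0^+)$ equals $1$. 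Any $u$ that is differentiable and strictly concave at $x_0$ behaves the same way.

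The equality case therefore cannot be removed by an extra ``strictness of second-order behavior'' assumption: this example has strict curvature on both sides. Your Step 2 contraposition (show $\partial\mathcal{U}/\partial w$ fails to be positive near $0$ whenever $u'(x_0^-)\le u'(x_0^+)$) fails exactly in this case. What your Step 1 proves, $u'(x_0^-)/u'(x_0^+)\ge 1$, is the correct forward conclusion. Strictness follows only if you assume $u'(x_0^-)\neq u'(x_0^+)$ outright. State the weak inequality and give the counterexample, rather than appealing to a ``generic'' reading.

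Two smaller points:

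\begin{itemize}
\item Your displayed formula for $\mathcal{U}$ puts $\delta_2/(\delta_1+\delta_2)$ on the gain, which contradicts your next sentence. Your derivative is the one for the correct weights ($\delta_1/(\delta_1+\delta_2)$ on the gain), so fix the display.
\item If the chain is read link by link, the last arrow should start from $\mathcal{U}>0$, not from $\partial\mathcal{U}/\partial w>0$. This version also holds: divide $\mathcal{U}(x_0,\delta_1,\delta_2,w)$ by $w$ and let $w\to0^+$ to get $\frac{\delta_1\delta_2}{\delta_1+\delta_2}\bigl(u'(x_0^-)-u'(x_0^+)\bigr)\ge 0$, with $u'(x_0^\pm)$ the one-sided derivatives.
\end{itemize}
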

\begin{proof}
    Assume that $\dfrac{u'(x_0^-)}{u'(x_0^+)} < 1$, i.e., $u'(x_0^-) < u'(x_0^+)$. The derivative of the utility premium with respect to $w$ gives
    \[
        \frac{\partial \mathcal{U}(x_0,\delta_1,\delta_2,w)}{\partial w}
        = - \frac{\delta_1\delta_2}{\delta_1 + \delta_2}
        \bigl(u'(x_0 + w\delta_2) - u'(x_0 - w\delta_1)\bigr).
    \]
    Letting $w \to 0^+$ and using the one-sided derivatives at $x_0$, we obtain
    \[
        \lim_{w \to 0^+} \frac{\partial \mathcal{U}(x_0,\delta_1,\delta_2,w)}{\partial w}
        = - \frac{\delta_1\delta_2}{\delta_1 + \delta_2}\bigl(u'(x_0^+) - u'(x_0^-)\bigr) < 0,
    \]
    where the inequality follows from $u'(x_0^-) < u'(x_0^+)$. By continuity of the derivative in a neighborhood of $w=0$, there exists $w_0 > 0$ such that
    \[
        \frac{\partial \mathcal{U}(x_0,\delta_1,\delta_2,w)}{\partial w}\Big|_{w = w_0} < 0,
    \]
    which proves the claim.
\end{proof}
\begin{lemma}
    If $\dfrac{u'(x_0^-)}{u'(x_0^+)} > 1
        \quad \text{and} \quad
        \dfrac{\partial \mathcal{RS}_{stat}(x_0,\delta_1,\delta_2,w)}{\partial w} > 0
        ,$
    then $\mathcal{RS}_{stat}(x_0,\delta_1,\delta_2,w) > \frac{\delta_1}{\delta_2}$
\end{lemma}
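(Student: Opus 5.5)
The plan is to mirror the symmetric-case argument, using the stake scale $w$ as the free variable and tracking the limit as $w\to 0^+$. Write
\[
N(w)=u(x_0)-u(x_0-w\delta_1),\qquad D(w)=u(x_0+w\delta_2)-u(x_0),
\]
so that $\mathcal{RS}_{stat}(x_0,\delta_1,\delta_2,w)=N(w)/D(w)$. Both $N$ and $D$ are positive for $w>0$ and tend to $0$ as $w\to0^+$, because $u$ is strictly increasing and continuous at $x_0$.

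\textbf{Step 1: the limit at $w\to0^+$.} Apply de l'H\^opital's rule to $N/D$. Using the finite one-sided derivatives assumed above,
\[
\lim_{w\to0^+}\mathcal{RS}_{stat}(x_0,\delta_1,\delta_2,w)=\frac{\delta_1\,u'(x_0^-)}{\delta_2\,u'(x_0^+)}.
\]
The hypothesis $\frac{u'(x_0^-)}{u'(x_0^+)}>1$ then makes this limit strictly greater than $\frac{\delta_1}{\delta_2}$.

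\textbf{Step 2: monotonicity.} The hypothesis $\frac{\partial \mathcal{RS}_{stat}}{\partial w}>0$ for all $w>0$ makes $w\mapsto\mathcal{RS}_{stat}(x_0,\delta_1,\delta_2,w)$ strictly increasing on $(0,\infty)$. So for any $w>0$ and any $0<w'<w$ we have $\mathcal{RS}_{stat}(w)>\mathcal{RS}_{stat}(w')$. Letting $w'\to0^+$ gives
\[
\mathcal{RS}_{stat}(w)\ge\lim_{w'\to0^+}\mathcal{RS}_{stat}(w')>\frac{\delta_1}{\delta_2}.
\]
Strictness survives because the limit itself already exceeds $\frac{\delta_1}{\delta_2}$; alternatively, use $\mathcal{RS}_{stat}(w)>\mathcal{RS}_{stat}(w/2)\ge\lim$.

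\textbf{Main obstacle.} The delicate point is justifying Step 1. De l'H\^opital needs $u$ differentiable on one-sided punctured neighbourhoods of $x_0$, with $u'(x_0\pm t)\to u'(x_0^\pm)$ finite and $u'(x_0^+)>0$. These are exactly the regularity conditions the paper has been assuming (finite one-sided first derivatives). If only one-sided derivatives at $x_0$ exist, without continuity of $u'$ nearby, I would instead take the quotient directly: $N(w)/w\to\delta_1u'(x_0^-)$ and $D(w)/w\to\delta_2u'(x_0^+)$ by the definition of one-sided derivatives. This gives the same limit with no l'H\^opital at all, and it is the route I would actually use.

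The case $\delta_1=\delta_2$ reproduces the preceding symmetric lemma, which serves as a consistency check.
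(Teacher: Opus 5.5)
Your proof is correct. It follows the argument the paper leaves implicit for this lemma: the limit $\lim_{w\to0^+}\mathcal{RS}_{stat}(x_0,\delta_1,\delta_2,w)=\frac{\delta_1u'(x_0^-)}{\delta_2u'(x_0^+)}>\frac{\delta_1}{\delta_2}$ (the same limit the paper uses in its second l'H\^opital step for $\lim_{w\to0^+}\mathcal{RS}'_{stat}$), combined with strict monotonicity in $w$, which you rightly read as holding on all of $(0,w]$. Your direct difference-quotient justification of that limit is the cleaner route under the paper's finite one-sided derivative assumption.
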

\begin{lemma}
    If $\dfrac{u'(x_0^-)}{u'(x_0^+)} > 1
        \quad \text{and} \quad
        \dfrac{\partial \mathcal{RS}_{stat}(x_0,\delta_1,\delta_2,w)}{\partial w} > 0
        ,$
    then $\dfrac{\partial \mathcal{U}(x_0,\delta_1,\delta_2,w)}{\partial w} > 0$
    for all such $w>0.$
\end{lemma}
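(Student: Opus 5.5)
The plan mirrors the proof of the corresponding symmetric-gamble lemma: show that $\partial\mathcal{RS}_{stat}/\partial w>0$, combined with the preceding lemma, forces the marginal utility at the loss outcome to exceed the marginal utility at the gain outcome. Then read off the sign of $\partial\mathcal{U}/\partial w$ from its explicit formula. Throughout, I work with the same regularity as the preceding results: $u$ is differentiable away from $x_0$, and the one-sided derivatives $u'(x_0^\pm)$ are finite and positive.

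First I write $\mathcal{RS}_{stat}(x_0,\delta_1,\delta_2,w)=N(w)/D(w)$, where
\[
N(w)=u(x_0)-u(x_0-w\delta_1),\qquad D(w)=u(x_0+w\delta_2)-u(x_0).
\]
Strict monotonicity of $u$ gives $N(w),D(w)>0$ for $w>0$. Differentiating with respect to $w$,
\[
\frac{\partial \mathcal{RS}_{stat}}{\partial w}=\frac{\delta_1 u'(x_0-w\delta_1)\,D(w)-\delta_2 u'(x_0+w\delta_2)\,N(w)}{D(w)^2}.
\]
Since $D(w)^2>0$, the hypothesis $\partial\mathcal{RS}_{stat}/\partial w>0$ says the numerator is positive. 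Dividing by $\delta_2 u'(x_0+w\delta_2)D(w)>0$ gives
\[
\frac{\delta_1\,u'(x_0-w\delta_1)}{\delta_2\,u'(x_0+w\delta_2)}>\mathcal{RS}_{stat}(x_0,\delta_1,\delta_2,w).
\]

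Next I invoke the preceding lemma, which under the same two hypotheses gives $\mathcal{RS}_{stat}(x_0,\delta_1,\delta_2,w)>\delta_1/\delta_2$. Its mechanism is the following. By l'Hôpital, $\lim_{w\to0^+}\mathcal{RS}_{stat}=\frac{\delta_1}{\delta_2}\frac{u'(x_0^-)}{u'(x_0^+)}$, which exceeds $\delta_1/\delta_2$. Monotone increase in $w$ then keeps $\mathcal{RS}_{stat}$ above this limit. Chaining this with the previous inequality and cancelling $\delta_1/\delta_2$ yields
\[
u'(x_0-w\delta_1)>u'(x_0+w\delta_2).
\]

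Finally, I use the derivative of the utility premium already computed in the proof of the previous lemma:
\[
\frac{\partial\mathcal{U}(x_0,\delta_1,\delta_2,w)}{\partial w}=-\frac{\delta_1\delta_2}{\delta_1+\delta_2}\bigl(u'(x_0+w\delta_2)-u'(x_0-w\delta_1)\bigr).
\]
This is strictly positive by the inequality just obtained, for every $w$ at which the hypotheses hold.

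The main obstacle is the quantifier issue in invoking the preceding lemma. That step needs $\partial\mathcal{RS}_{stat}/\partial w>0$ on the whole interval $(0,w]$, not only at the single point $w$, because it integrates monotonicity from the limit at $0^+$. I will therefore read the hypothesis as holding on $(0,\bar w)$ and state the conclusion on that same range. This matches the phrase ``for all such $w>0$'' in the statement.
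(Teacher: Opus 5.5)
Your proposal is correct and follows the paper's own proof. Positivity of the numerator of $\partial\mathcal{RS}_{stat}/\partial w$ gives $\frac{u'(x_0-w\delta_1)}{u'(x_0+w\delta_2)}>\frac{\delta_2}{\delta_1}\mathcal{RS}_{stat}$, the preceding lemma ($\mathcal{RS}_{stat}>\delta_1/\delta_2$) pushes this ratio above $1$, and the explicit formula for $\partial\mathcal{U}/\partial w$ then gives the sign; your remark that the monotonicity hypothesis must hold on all of $(0,w]$ for the preceding lemma to apply makes explicit a point the paper leaves implicit in ``for all such $w>0$''.
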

\begin{proof}
    By taking the differentiability rule and set the nominator positive, we have that
    \[
        \frac{u'(x_0 - w \cdot \delta_1)}{u'(x_0 + w \cdot \delta_2)} >  \mathcal{RS}_{stat}(x_0,\delta_1,\delta_2,w) \cdot \frac{\delta_2}{\delta_1}
    \]
    By applying previous lemma, we have that 
    \[
        \frac{u'(x_0 - w \cdot \delta_1)}{u'(x_0 + w \cdot \delta_2)} > 1 \Rightarrow \frac{\partial \mathcal{U}(x_0,\delta_1,\delta_2,w)}{\partial w} > 0
    \]
\end{proof}
\begin{lemma}
    The condition
    $\dfrac{\partial \mathcal{U}(x_0,\delta_1,\delta_2,w)}{\partial w} > 0$
    does not, in general, imply that \newline
    $\dfrac{\partial \mathcal{RS}_{stat}(x_0,\delta_1,\delta_2,w)}{\partial w} > 0.$
\end{lemma}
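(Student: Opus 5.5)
The statement only asks for non-implication, so I will prove it with an explicit counterexample. I reuse the Neilson-type K\"{o}bberling--Wakker utility from the symmetric case, with $\alpha=0.5$, $\beta=2.5$, $\lambda_1=1$, $\lambda_2=1.5$. The strategy is to show two things:
\begin{itemize}
\item $\partial \mathcal{U}/\partial w>0$ for every stake $w>0$ and every direction $(\delta_1,\delta_2)$;
\item $\partial \mathcal{RS}_{stat}/\partial w<0$ for small $w$ in a suitable direction, using the limit formula for $\lim_{w\to 0^+}\mathcal{RS}'_{stat}(x_0,\delta_1,\delta_2,w)$ derived just before this lemma.
\end{itemize}

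\textbf{Step 1: the utility premium increases for all $w$.} I use the expression for $\partial \mathcal{U}/\partial w$ from the proof of the earlier lemma:
\[
\frac{\partial \mathcal{U}(x_0,\delta_1,\delta_2,w)}{\partial w}=\frac{\delta_1\delta_2}{\delta_1+\delta_2}\bigl(u'(x_0-w\delta_1)-u'(x_0+w\delta_2)\bigr).
\]
For this utility, $u'(x_0+s)=e^{-\alpha s}\le 1$ on the gain side. On the loss side, $u'(x_0-t)=\lambda_1e^{-\beta t}+\lambda_2\ge \lambda_2=1.5$. Hence the bracket is at least $0.5$, and $\partial\mathcal{U}/\partial w>0$ for every $w>0$ and every direction.

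\textbf{Step 2: the one-sided data at $x_0$.} On the gain side, $u'(x_0^+)=1$ and $u''(x_0^+)=-\alpha$, so $A(x_0^+)=0.5$. On the loss side, $u'(x_0^-)=\lambda_1+\lambda_2=2.5$ and $u''(x_0^-)=\lambda_1\beta=2.5$. Therefore $A(x_0^-)=-1$. All of these are finite, which is the regularity assumption needed for the limit formula.

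\textbf{Step 3: the threshold ratio decreases near $w=0$.} Substituting into the limit formula gives
\[
\lim_{w\to0^+}\mathcal{RS}'_{stat}=\frac{2.5}{2}\cdot\frac{\delta_1}{\delta_2}\bigl(-\delta_1+0.5\,\delta_2\bigr).
\]
This is strictly negative whenever $\delta_1>\delta_2/2$, for example $\delta_1=\delta_2=1$. Since $u$ is smooth on each side of $x_0$, $\mathcal{RS}'_{stat}(x_0,\delta_1,\delta_2,w)$ is continuous in $w>0$. A negative limit therefore gives some $w_0>0$ with $\mathcal{RS}'_{stat}<0$ on $(0,w_0)$. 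Combined with Step 1, $\partial\mathcal{U}/\partial w>0$ holds there while $\partial\mathcal{RS}_{stat}/\partial w>0$ fails, which proves the lemma.

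The main obstacle is justifying the limit formula for this particular $u$. That formula came from two applications of de l'H\^opital's rule, so I will check the hypotheses directly: the numerators and denominators vanish as $w\to0^+$, and the one-sided first and second derivatives are finite, which Step 2 confirms. If that route proves delicate, the fallback is a first-order Taylor expansion of $\mathcal{RS}_{stat}$ in $w$, which yields the sign of the derivative directly.
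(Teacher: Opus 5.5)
Your proposal is correct and follows the paper's own argument. It uses the same Neilson-type K\"{o}bberling--Wakker utility with $\alpha=0.5$, $\beta=2.5$, $\lambda_1=1$, $\lambda_2=1.5$ and the same $w\to0^+$ limit formula for $\mathcal{RS}'_{stat}$, and your Step~1 bound $u'(x_0-w\delta_1)-u'(x_0+w\delta_2)>0.5$ makes explicit the monotonicity of $\mathcal{U}$ that the paper only asserts.

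Your restriction $\delta_1>\delta_2/2$ is essential. For the direction $(\delta_1,\delta_2)=(1,2)$ quoted in the paper, $\delta_1A(x_0^-)+\delta_2A(x_0^+)=-1+1=0$, so the limit is zero rather than negative. Expanding to second order there gives $\mathcal{RS}_{stat}=1.25+0.3125\,w^2+O(w^3)$, which is increasing near $w=0$, so the paper's quoted example does not actually produce a negative derivative. Your choice $\delta_1=\delta_2=1$ gives the strictly negative limit $-0.625$, and hence a valid counterexample.
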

Finally, we provide an example for better understanding of the last lemma. Based on the Neilson-type construction for the K\"{o}bberling-Wakker utility function, following ~\cite{Neilson_2002_A,Kobberling_Wakker} and \cite{vaidanis2025Theoretical_Part_1_Neilson_extension_KW}, we assume the following S-shape utility function
\[
    u(x) = \left\{
        \begin{array}{ll}
            \dfrac{1 - \exp\left( - \alpha (x - x_0)  \right)}{\alpha} & \text{for}~ x_0 \leq x \\
            - \lambda_1 \dfrac{1 - \exp\left( - \beta (x_0 - x)  \right)}{\beta} + \lambda_2 (x - x_0) & \text{for}~ x < x_0 \\
        \end{array} 
    \right. 
\]
which satisfies the \emph{Neilson's strong aversion} at the reference point for $\lambda_2 > 1$. If we assume $\alpha = 0.5$, $\beta = 2.5$, $\lambda_1 = 1$, $\lambda_2 = 1.5$, $\delta_1 = 1$ and $\delta_2 = 2$, we have that $\lim_{\delta \to 0^+} \mathcal{RS}'_{stat}(x_0,\delta_1,\delta_2,w) < 0$, which means that despite the fact that the utility premium version of \emph{increasing non-symmetric bet aversion} holds true, the probabilistic version of \emph{increasing non-symmetric bet aversion} is not satisfied.

\subsection{Analysis of Probability Premium and Comparison with Probabilistic Risk Sensitivity} \label{section:Probabilistic_Probability_Premium}

In this subsection, we are going to delve in depth to the conceptual and mathematical relation between the probabilistic risk sensitivity metric and the probability premium framework. From the one hand, the probabilistic risk sensitivity describes the decision strategy, while on the other hand the probability premium gives construction direction about how to build fair decision setups under the effect of the utility function. We are going to reveal that there is an injective mathematical relation between the two metrics.

\subsubsection{Extension of Probability Premium Framework}

To start with, for the case of a symmetric binary gamble around a stationary state, we are going to reformulate the mathematical expression solving for the probability premium $\gamma$ as following:
\begin{equation} \label{eq:probability_premium_symmetric}
\begin{split}
    \gamma(x,\delta) = & \dfrac{u(x) - \left[\frac{1}{2}u(x-\delta) + \frac{1}{2}u(x+\delta)\right]}{u(x+\delta) -  u(x-\delta)} \\
    = & \frac{u(x) - \mathbb{E}(x,\delta)}{u(x+\delta) -  u(x-\delta)}
\end{split}
\end{equation}
The first derivative of $\gamma$ with respect to $\delta$ is defined by the relation:
\begin{equation}
    \frac{\partial \gamma(x,\delta)}{\partial \delta} = \frac{- \frac{\partial \mathbb{E}(x,\delta)}{\partial \delta} \cdot [u(x+\delta) -  u(x-\delta)] - [u'(x+\delta) +  u'(x-\delta)] \cdot [u(x) - \mathbb{E}(x,\delta)]}{[u(x+\delta) -  u(x-\delta)]^2}
\end{equation}
From the expression of derivative, we can extract the following lemma:
\begin{lemma}
    If the probability premium $\gamma(x,\delta)$ is an increasing function of $\delta$, then the utility premium $\mathcal{U}(x,\delta)$ is an increasing function of $\delta$ under the condition of $\mathcal{U}(x,\delta) > 0$. The opposite direction does not hold true in general.
\end{lemma}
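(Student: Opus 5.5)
The plan is to read the statement off the sign of the numerator in the displayed formula for $\partial\gamma/\partial\delta$, using $\mathcal{U}(x,\delta)=u(x)-\mathbb{E}(x,\delta)$ and hence $\partial\mathcal{U}/\partial\delta=-\partial\mathbb{E}/\partial\delta$. Write $D(\delta)=u(x+\delta)-u(x-\delta)$, which is strictly positive because $u$ is strictly increasing. Then $\gamma=\mathcal{U}/D$, and the derivative formula above reads
\[
\frac{\partial \gamma(x,\delta)}{\partial \delta}=\frac{\frac{\partial \mathcal{U}(x,\delta)}{\partial \delta}\,D(\delta)-\bigl[u'(x+\delta)+u'(x-\delta)\bigr]\,\mathcal{U}(x,\delta)}{D(\delta)^2}.
\]

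For the forward direction, assume $\partial\gamma/\partial\delta>0$. The numerator must then be positive, so
\[
\frac{\partial \mathcal{U}}{\partial\delta}\,D>\bigl[u'(x+\delta)+u'(x-\delta)\bigr]\mathcal{U}.
\]
Strict monotonicity gives $u'>0$ away from the reference point, and one-sided derivatives are used at $x_0$, exactly as in the earlier lemmas. Under the hypothesis $\mathcal{U}(x,\delta)>0$, the right-hand side is therefore strictly positive. Dividing by $D>0$ gives $\partial\mathcal{U}/\partial\delta>0$. This is the same "numerator positive" device already used in the proofs of the lemmas relating $\mathcal{RS}_{stat}$ to $\mathcal{U}$.

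For the failure of the converse, note that $\partial\mathcal{U}/\partial\delta>0$ together with $\mathcal{U}>0$ only makes the first term of the numerator positive. Monotonicity of $\gamma$ needs the stronger inequality
\[
\frac{\partial\mathcal{U}/\partial\delta}{\mathcal{U}}>\frac{u'(x+\delta)+u'(x-\delta)}{D},
\]
that is, the logarithmic growth rate of the utility premium must exceed that of the utility spread.

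To show this can fail, I would give an explicit counterexample inside one subdomain with a CARA branch, $u(y)=-e^{-ay}$ with $a>0$, because it keeps the algebra short. It gives $\mathcal{U}=e^{-ax}(\cosh a\delta-1)>0$, increasing in $\delta$, and $D=2e^{-ax}\sinh a\delta$. Hence
\[
\gamma(x,\delta)=\frac{\cosh a\delta-1}{2\sinh a\delta}=\tfrac12\tanh\!\left(\frac{a\delta}{2}\right).
\]
This $\gamma$ is increasing, so this candidate does not give a counterexample. I would therefore switch to a utility whose premium grows but saturates relative to the spread. One option is a piecewise-linear concave $u$ with a kink at $x$ and slopes $s_1>s_2$. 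There $\mathcal{U}=\tfrac{\delta}{2}(s_1-s_2)$ and $D=\delta(s_1+s_2)$, so $\gamma$ is constant, which is the borderline case. A further bend that flattens the loss side for large $\delta$ should then make $\gamma$ decrease while $\mathcal{U}$ still increases. An alternative is the Neilson-type Köbberling--Wakker example at $x_0$ already used above, evaluated numerically.

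The main obstacle is this counterexample: choosing parameters so that $\mathcal{U}$ increases while $\gamma$ strictly decreases. I would first try the Neilson--Köbberling--Wakker function with the paper's parameters and compute the sign of the numerator as $\delta\to0^+$ via l'Hôpital's rule, as in the $\mathcal{RS}_{stat}$ computation. The piecewise-linear perturbation is the fallback.
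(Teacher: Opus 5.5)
Your forward direction is correct and is the argument the paper intends; it gives nothing beyond ``from the expression of derivative''. Since $\partial\mathcal{U}/\partial\delta=-\partial\mathbb{E}/\partial\delta$, a positive numerator gives $\frac{\partial\mathcal{U}}{\partial\delta}\,D>\bigl[u'(x+\delta)+u'(x-\delta)\bigr]\mathcal{U}\ge 0$, hence $\partial\mathcal{U}/\partial\delta>0$. Only $u'\ge 0$ is needed here.

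The gap is the second sentence of the lemma. Your proposal ends in an unresolved search, and the claim that ``a further bend should make $\gamma$ decrease'' is never checked. In fact you already have a counterexample. The paper's convention is that ``increasing'' means a strictly positive derivative. Under it, your kinked piecewise-linear $u$ with slopes $s_1>s_2$ has $\mathcal{U}=\tfrac{\delta}{2}(s_1-s_2)>0$ and $\partial\mathcal{U}/\partial\delta>0$, yet $\partial\gamma/\partial\delta=0$. That already refutes the converse as stated, so it is not a ``borderline'' case to be perturbed. It is simply the Kahneman--Tversky function with $\alpha=\beta=1$, $\lambda>1$ at $x_0$.

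If you want $\gamma$ strictly decreasing, you do not need a fresh l'H\^{o}pital computation. With $L=u(x)-u(x-\delta)$ and $G=u(x+\delta)-u(x)$, one has $\gamma=\tfrac12(L-G)/(L+G)=\tfrac12(\mathcal{RS}_{stat}-1)/(\mathcal{RS}_{stat}+1)$, which is strictly increasing in $\mathcal{RS}_{stat}$. So the paper's Neilson--K\"{o}bberling--Wakker example ($\alpha=0.5$, $\beta=2.5$, $\lambda_1=1$, $\lambda_2=1.5$) transfers directly:
\begin{itemize}
\item $\partial\mathcal{U}(x_0,\delta)/\partial\delta=\tfrac12\bigl(\lambda_1e^{-\beta\delta}+\lambda_2-e^{-\alpha\delta}\bigr)>0$ for all $\delta>0$, because $\lambda_2>1$; hence also $\mathcal{U}>0$.
\item $\lim_{\delta\to0^+}\mathcal{RS}_{stat}(x_0,\delta)=u'(x_0^-)/u'(x_0^+)=2.5$.
\item $\lim_{\delta\to0^+}\partial\mathcal{RS}_{stat}/\partial\delta=\tfrac{2.5}{2}\bigl(A(x_0^-)+A(x_0^+)\bigr)=\tfrac{2.5}{2}(-1+0.5)=-0.625$.
\end{itemize}
Therefore $\partial\gamma/\partial\delta\to-0.625/3.5^2<0$ for small $\delta$, while the utility premium is increasing.
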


In addition, we are going to make trivial extension of the definition of probability premium for non-symmetric binary gambles around a stationary state as following:
\begin{definition}
    In a case of a decision setup between a stationary state $x$ and a symmetric binary gamble around it $\{x-\delta,\frac{1}{2};x+\delta,\frac{1}{2}\}$, the probability premium $\gamma$ is the excess in winning probability over a fair condition that makes the agent indifferent between the stationary state and the binary gamble.
    \begin{equation}
    \begin{split}
        u(x) = &\left(\frac{\delta_1}{\delta_1 + \delta_2} + \gamma(x,\delta_1,\delta_2,w)\right)u(x+w\delta_2) \\
        &+\left(\frac{\delta_2}{\delta_1 + \delta_2} - \gamma(x,\delta_1,\delta_2,w)\right)u(x-w\delta_1)
    \end{split}
    \end{equation}
\end{definition}
We are going to reformulate the mathematical expression solving for the probability premium $\gamma$ as following:
\begin{equation} \label{eq:probability_premium_non_symmetric}
    \gamma(x,\delta_1,\delta_2,w) = \frac{u(x) - \mathbb{E}(x,\delta_1,\delta_2,w)}{u(x+w\delta_2) - u(x-w\delta_1)}
\end{equation}
The first derivative of $\gamma$ with respect to $w$ is defined by the relation:
\begin{subequations}
    \begin{equation}
        \frac{\partial \gamma(x,\delta_1,\delta_2,w)}{\partial w} = \frac{B(x,\delta_1,\delta_2,w)}{[u(x+w\delta_2) - u(x-w\delta_1)]^2}
    \end{equation}
    \begin{equation}
    \begin{split}
        B(x,\delta_1,\delta_2,w) = & -\frac{\partial \mathbb{E}(x,\delta_1,\delta_2,w)}{\partial w} \cdot [u(x+w\delta_2) - u(x-w\delta_1)] \\
        & - [\delta_2 u'(x+w\delta_2) + \delta_1 u'(x-w\delta_1)] \cdot [u(x) - \mathbb{E}(x,\delta_1,\delta_2,w)]
    \end{split}  
    \end{equation}
\end{subequations}
From the expression of derivative, we can extract the following lemma:
\begin{lemma}
    If the probability premium $\gamma(x,\delta_1,\delta_2,w)$ is an increasing function of $w$, then the utility premium $\mathcal{U}(x,\delta_1,\delta_2,w)$ is an increasing function of $\delta$ under the condition of $\mathcal{U}(x,\delta_1,\delta_2,w) > 0$. The opposite direction does not hold true in general.
\end{lemma}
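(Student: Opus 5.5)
The plan is to read the sign of $\partial\gamma/\partial w$ directly from the numerator $B(x,\delta_1,\delta_2,w)$, since the denominator $[u(x+w\delta_2)-u(x-w\delta_1)]^2$ is positive by strict monotonicity of $u$. First I record the two identities linking the objects. By definition of the utility premium for the scaled gamble, $\mathcal{U}(x,\delta_1,\delta_2,w)=u(x)-\mathbb{E}(x,\delta_1,\delta_2,w)$. Hence $\partial\mathcal{U}/\partial w=-\partial\mathbb{E}/\partial w$. Also set
\[
D(w):=u(x+w\delta_2)-u(x-w\delta_1)>0
\]
and
\[
S(w):=\delta_2u'(x+w\delta_2)+\delta_1u'(x-w\delta_1)=D'(w)>0 .
\]
Positivity of $S$ holds since $u'>0$ away from the reference point. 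With these, $B=\frac{\partial \mathcal{U}}{\partial w}\,D-S\,\mathcal{U}$.

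For the forward implication, assume $\partial\gamma/\partial w>0$, i.e. $B>0$, and $\mathcal{U}>0$. Then
\[
\frac{\partial\mathcal{U}}{\partial w}\,D > S\,\mathcal{U}>0 .
\]
Dividing by $D>0$ gives $\partial\mathcal{U}/\partial w>0$, which is the claim; here the statement's "increasing function of $\delta$" is read as increasing in the stake scale $w$. Equivalently, $\gamma=\mathcal{U}/D$ with $D$ increasing, so $\log\gamma=\log\mathcal{U}-\log D$. Thus $(\log\mathcal{U})'=(\log\gamma)'+(\log D)'>0$. This is the same argument as for the symmetric lemma, with $\frac12$ replaced by the weights $\frac{\delta_1}{\delta_1+\delta_2},\frac{\delta_2}{\delta_1+\delta_2}$. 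Note that the hypothesis $\mathcal{U}>0$ is genuinely used: it makes the subtracted term $S\,\mathcal{U}$ nonnegative, or equivalently allows the logarithm.

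For the failure of the converse, it suffices to exhibit $u$ and a range of $w$ with $\mathcal{U}>0$ and $\partial\mathcal{U}/\partial w>0$ but
\[
\frac{\partial\mathcal{U}}{\partial w}<\frac{S}{D}\,\mathcal{U},
\]
so that $B<0$. The natural choice is a smooth concave utility whose premium grows sublinearly relative to the utility spread, for instance $u(x)=\log x$ or the CARA function $u(x)=-e^{-\alpha x}$ on large stakes. For CARA at large $w$, $\mathcal{U}$ saturates towards a positive constant while $D$ still grows. Then $\mathcal{U}$ increases ever more slowly, while $S\,\mathcal{U}/D$ stays comparable to $\mathcal{U}\,(\log D)'$. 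A cleaner route, which I would try first, is the Köbberling–Wakker type example already used in the paper: the loss branch contains the linear term $\lambda_2(x-x_0)$ and the gain branch is bounded. As $w\to\infty$, $\mathcal{U}\sim c\,w$ with $c>0$ and $D\sim\lambda_2 w\delta_1$, so $\gamma$ tends to a constant. One then checks the sign of $\gamma'$ at a specific $w$.

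The main obstacle is this last verification. I must pick parameters so that $\partial\mathcal{U}/\partial w>0$ holds while $\gamma$ is strictly decreasing somewhere, and confirm the sign by an explicit computation, either asymptotic in $w$ or numerical at one point. I expect the asymptotic expansion of $B$ for large $w$ in the exponential-plus-linear example to settle it. Its leading terms cancel, so the next-order terms must be tracked carefully.
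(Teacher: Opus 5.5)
Your forward implication is the paper's own argument: read the sign of $\partial\gamma/\partial w$ off $B=\frac{\partial\mathcal{U}}{\partial w}D-S\,\mathcal{U}$ and use $\mathcal{U}>0$. Reading ``increasing in $\delta$'' as ``increasing in $w$'' is correct.

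The gap is in the converse: both counterexample routes you propose provably fail. Put $N=u(x)-u(x-w\delta_1)$ and $M=u(x+w\delta_2)-u(x)$, with primes denoting $\partial/\partial w$. Then $\mathcal{U}=\frac{\delta_2}{\delta_1+\delta_2}N-\frac{\delta_1}{\delta_1+\delta_2}M$ and $D=N+M$, so
\[
\gamma=\frac{\delta_2}{\delta_1+\delta_2}-\frac{1}{\mathcal{RS}_{stat}+1},\qquad B=N'M-NM'=M^2\,\frac{\partial\mathcal{RS}_{stat}}{\partial w}.
\]
Hence $\gamma$ increases exactly where $\mathcal{RS}_{stat}$ does. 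For any strictly concave differentiable $u$, including your $\log x$ and CARA, the argument of Lemma~\ref{EUT_non_symmetric} gives $\partial\mathcal{RS}_{stat}/\partial w>0$ for every $w$. So $\gamma$ is always increasing, and no counterexample can come from concave utilities. Your saturation heuristic is also off: for $-e^{-\alpha x}$ the loss branch is unbounded, so $\mathcal{U}$ grows exponentially. In the exponential-plus-linear example at large $w$, $N'M\to\lambda_2\delta_1/\alpha>0$ while $NM'\to0$, so $B>0$. There is no leading-order cancellation, and $\gamma$ increases to its limit $\frac{\delta_2}{\delta_1+\delta_2}$ from below.

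The violation has to come from the kink at the reference point, with small stakes. Take $x=x_0$ in that example with $\alpha=0.5$, $\beta=2.5$, $\lambda_1=1$, $\lambda_2=1.5$. Since $u'(x_0-w\delta_1)>\lambda_2>1>u'(x_0+w\delta_2)$, you get $\frac{\partial\mathcal{U}}{\partial w}=\frac{\delta_1\delta_2}{\delta_1+\delta_2}\bigl(u'(x_0-w\delta_1)-u'(x_0+w\delta_2)\bigr)>0$, and hence $\mathcal{U}>0$, for all $w>0$. On the other hand, $A(x_0^+)=\alpha=0.5$ and $A(x_0^-)=-\lambda_1\beta/(\lambda_1+\lambda_2)=-1$. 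The paper's limit formula then gives
\[
\lim_{w\to0^+}\frac{\partial\mathcal{RS}_{stat}}{\partial w}=\frac{u'(x_0^-)}{2u'(x_0^+)}\,\frac{\delta_1}{\delta_2}\bigl(\delta_1A(x_0^-)+\delta_2A(x_0^+)\bigr)<0
\]
whenever $\delta_1>\delta_2/2$; for example, $\delta_1=\delta_2=1$ gives $-0.625$. So $\gamma$ strictly decreases on some interval $(0,w_0)$ even though $\mathcal{U}$ is positive and increasing there.

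Do not copy the paper's non-symmetric choice $\delta_1=1$, $\delta_2=2$: it makes the bracket vanish, so the first-order test is inconclusive. The paper itself gives no separate argument for the converse. It relies implicitly on the equal monotonicity of $\gamma$ and $\mathcal{RS}_{stat}$ together with this kinked example, and that is the route you need.
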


From the above mathematical expression for symmetric and non-symmetric case, it can be observed that the mathematical expression of $\gamma$ is a normalized version of quantitative intensity framework for symmetric and non-symmetric gambles around a stationary state.

Moreover, we are going to extend the framework the probability premium definition for two nested binary gambles around a reference (stationary) state with the same probability distribution.
\begin{definition}
    In a case of a decision setup between two binary gambles with the same probability distribution around a reference (stationary) state $x$, $\{x-\delta_1,1-p;x+\delta_4,p\}$ and $\{x-\delta_3,1-p;x+\delta_2,p\}$ with $0\leq\delta_3\leq\delta_1$ and $0\leq\delta_2\leq\delta_4$, the probability premium $\gamma$ is the excess in winning probability from the fair condition that makes the agent indifferent between the stationary state and the binary gamble.
    \begin{subequations}
        \begin{equation}
            (p+\gamma) \cdot u(x+\delta_4) + (1-p-\gamma) \cdot u(x-\delta_1) = (p+\gamma) \cdot u(x+\delta_2) + (1-p-\gamma) \cdot u(x-\delta_3)
        \end{equation}
        \begin{equation}
            p = \dfrac{\delta_1 - \delta_3}{(\delta_4 - \delta_2) + (\delta_1 - \delta_3)}
        \end{equation}
    \end{subequations}
\end{definition}
We are going to reformulate the mathematical expression solving for the probability premium $\gamma$ as following:
\begin{subequations} \label{eq:probability_premium_equal_distribution}
    \begin{equation}
        \gamma(x,\delta_1,\delta_2,\delta_3,\delta_4) = \dfrac{(1-p) \cdot \left[ \dfrac{u(x-\delta_3) - u(x-\delta_1)}{u(x+\delta_4) - u(x+\delta_2)} - \dfrac{p}{1-p}\right]}{\dfrac{u(x-\delta_3) - u(x-\delta_1)}{u(x+\delta_4) - u(x+\delta_2)} + 1}
    \end{equation}
    \begin{equation}
        p = \dfrac{\delta_1 - \delta_3}{(\delta_4 - \delta_2) + (\delta_1 - \delta_3)}
    \end{equation}
\end{subequations} 

\subsubsection{Comparison between Probabilistic Risk Sensitivity and Probability Premium}

In the literature, the probability premium describes the construction process of creating fair gambles under the presence of the utility function for a specific setup of outcomes and probability distribution. On the other hand, the probabilistic risk sensitivity metric describes the decision strategy for a mean value criterion under a given setup of outcomes. If we observe more carefully the mathematical expression for the probability premium of the previous section and reformulate them appropriately, we can extract the following interesting results:
\begin{itemize}
    \item \textit{Symmetric binary gambles around a stationary state}
        \begin{equation}
            \mathcal{RS}_{stat}(x,\delta) = \dfrac{\frac{1}{2} + \gamma}{\frac{1}{2} - \gamma} \Leftrightarrow \gamma(x,\delta) = \frac{1}{2} \cdot \dfrac{\mathcal{RS}_{stat} - 1}{\mathcal{RS}_{stat} + 1}
        \end{equation}
    \item \textit{Non-symmetric binary gambles around a stationary state}
        \begin{equation}
            \mathcal{RS}_{stat}(x,\delta_1,\delta_2,w) = \dfrac{\frac{\delta_1}{\delta_1 + \delta_2} + \gamma}{\frac{\delta_2}{\delta_1 + \delta_2} - \gamma} \Leftrightarrow \gamma(x,\delta) = \frac{\delta_2}{\delta_1 + \delta_2} \cdot \dfrac{\mathcal{RS}_{stat} - \frac{\delta_1}{\delta_2}}{\mathcal{RS}_{stat} + 1}
        \end{equation}
    \item \textit{Nested binary gambles with the same probability distribution around a reference (stationary) state}
    \begin{subequations} 
        \begin{equation}
        \begin{split}
            & \mathcal{RS}^{equal}_{gamble}(x,\delta_1,\delta_2,\delta_3,\delta_4) = \dfrac{\gamma + p}{(1-p) - \gamma} \Leftrightarrow \\ & \gamma(x,\delta_1,\delta_2,\delta_3,\delta_4) = \dfrac{(1-p) \cdot \left[ \mathcal{RS}^{equal}_{gamble} - \frac{p}{1-p} \right]}{\mathcal{RS}^{equal}_{gamble} + 1}
            \end{split}
        \end{equation}
        \begin{equation}
            p = \dfrac{\delta_1 - \delta_3}{(\delta_4 - \delta_2) + (\delta_1 - \delta_3)}
        \end{equation}
    \end{subequations}
\end{itemize}
It is very insightful the observation that \textit{if we now the probability distribution of the decision problem under fairness condition without the effect of the utility function}, \textbf{we can extract information about the decision strategy $\mathcal{RS}$ by knowing the construction parameter $\gamma$ and vice versa}. In addition, \textbf{$\mathcal{RS}$ and $\gamma$ have the same monotonicity with respect to $\delta$ for the case of symmetric binary gambles around a stationary state and with respect to $w$ for the case of non-symmetric binary gambles around a stationary state}.

Nevertheless, considering the cumulative probability weighting under the framework of CPT, the calculation of probability premium follows the following process:
\begin{itemize}
    \item \textit{Symmetric binary gambles around a stationary state}:
    \[
    \begin{split}
        & u(x) = \omega\left( \frac{1}{2} + \gamma \right) u(x+\delta) + \left( 1 - \omega\left( \frac{1}{2} + \gamma \right) \right) u(x-\delta) \Leftrightarrow \\
        & \omega\left( \frac{1}{2} + \gamma \right) = \frac{u(x) - u(x-\delta)}{u(x+\delta) - u(x-\delta)} \equiv \mathbb{P}_{gain}(x,\delta)
    \end{split}
    \]
    \item \textit{Non-symmetric binary gambles around a stationary state}:
    \[
    \begin{split}
        & u(x) = \omega\left( \frac{\delta_1}{\delta_1 + \delta_2} + \gamma \right) u(x+w\delta_2) + \left( 1 - \frac{\delta_1}{\delta_1 + \delta_2} + \gamma \right) u(x-w\delta_1) \Leftrightarrow \\
        & \omega\left( \frac{\delta_1}{\delta_1 + \delta_2} + \gamma \right) = \frac{u(x) - u(x-w\delta_1)}{u(x+w\delta_2) - u(x-w\delta_1)} \equiv \mathbb{P}_{gain}(x,\delta_1,\delta_2,w)
    \end{split}
    \]
    \item \textit{Nested binary gambles with the same probability distribution around a reference (stationary) state}:
    \[
    \begin{split}
        & \omega(p+\gamma) \cdot u(x+\delta_4) + (1-\omega(p+\gamma)) \cdot u(x-\delta_1) = \\  & \omega(p+\gamma) \cdot u(x+\delta_2) + (1-\omega(p+\gamma)) \cdot u(x-\delta_3) \Leftrightarrow \\
        & \omega(p+\gamma) = \dfrac{u(x-\delta_3) - u(x-\delta_1)}{(u(x+\delta_4) - u(x+\delta_2)) + (u(x-\delta_3) - u(x-\delta_1))} \text{, where} \\ 
        & \mathbb{P}_{gain}(x,\delta_1,\delta_2,\delta_3,\delta_4) \equiv \dfrac{u(x-\delta_3) - u(x-\delta_1)}{(u(x+\delta_4) - u(x+\delta_2)) + (u(x-\delta_3) - u(x-\delta_1))}  \text{ and } \\
        & p = \dfrac{\delta_1 - \delta_3}{(\delta_4 - \delta_2) + (\delta_1 - \delta_3)}
    \end{split}
    \]
\end{itemize}
As we can observe, \textbf{if we have available the decision setup (value and probability of outcomes) and the utility function $u(x)$ under the absence of probability distortion, the calculation of $\mathcal{RS}$ is indifferent from the calculation of $\gamma$}. On the other hand, \textbf{if we have available the decision setup (value and probability of outcomes) and the utility function $u(x)$ under the presence of probability distortion, the calculation of $\mathcal{RS}$ precedes the calculation of $\gamma$}.

\subsection{Generic Qualitative Analysis between Risk Premium, Probability Premium, Utility Premium and Probabilistic Risk Sensitivity}

After the presentation of the quantitative analysis about the probabilistic risk sensitivity and the utility premium in subsection \ref{section:Probabilistic_Utility_Premium} and probabilistic risk sensitivity and probability premium in subsection \ref{section:Probabilistic_Probability_Premium}, we are going to provide a qualitative analysis between the four different risk sensitivity metrics.

To start with, the risk premium and the utility premium are defined for gambles with an arbitrary number of outcomes. However, in the literature, the applications of utility premium in loss aversion are expressed to binary outcomes. In contrast, the proposed probabilistic metric and the probability premium are defined and applied to binary gambles in this work, but extending the framework to gambles with an arbitrary number of outcomes is left for future research.

Regarding the calculation of the risk sensitivity metrics, the risk premium analysis relies on first- and second-order Taylor approximations to compute ARA and RRA. In contrast, neither the utility premium, the probability premium nor the proposed probabilistic metric relies on approximation methods.

The risk premium framework describes the impact of the utility function with respect to its domain, while the utility premium describes the impact of the utility function with respect to its co-domain. On the other hand, the probability premium and the probabilistic risk sensitivity describe the effect of the utility function at the probability distribution. In addition, all four metrics provide information about the curvature of the utility function.

This divergence leads to the observation that the risk and probability premium are defined with goal to present the effect of the utility function on the construction of indifferent decision alternatives and at the same time give information about the decision strategy, which direction is less promoted in the literature. More precisely, the risk premium is related with the modification of the stationary state, while the probability premium refers to the tuning of the probability distribution of the gamble. The probabilistic risk sensitivity metric is created in order to provide a better insight to the decision strategy based on the distribution of the gamble providing also a rigorous graphical interpretation. Regarding the utility premium, only the sign gives information about the decision strategy of the agent.

Regarding the process of selection the parameters of the CPT utility function, it is non-rigorous to use different risk sensitivity metrics independently in order to proceed to this selection. For instance, it is inconsistent to use the risk premium framework (ARA/RRA) to define parameters across subdomains of the utility function while simultaneously employing the utility premium framework to define loss aversion. We have to underline that the risk premium framework, despite being approximated in practice at the calculation of ARA/RRA, describe the decision strategy quantitative under a decision setup between a stationary state and a gamble but there is a lack in literature about the use of risk premium to describe the risk behavior when there is interaction between the subdomains, which is the case of loss aversion. The ARA and RRA approaches cannot be applied in loss aversion definitions due to the lack of differentiability at the reference point. On the other hand, the utility premium framework describes quantitatively the intensity of acceptance or rejection and the definitions of loss aversion in the literature are deployed in inequation formulation under the fairness condition. It is important to underline that the probabilistic risk sensitivity or probability premium are more stringent risk-sensitivity metrics than the risk premium for the a decision setup between a stationary state and a gamble. Indeed, we are going to give an example based on the case of a non-symmetric gamble around a stationary state under fairness condition. More precisely, the risk and utility premium are defined as following:
\begin{subequations}
    \begin{equation*}
        u(x-\pi) = \frac{\delta_1}{\delta_1+\delta_2}u(x+w\delta_2) + \frac{\delta_2}{\delta_1+\delta_2}u(x-w\delta_1)
    \end{equation*}
    \begin{equation*}
        \mathcal{U}(x,\delta_1,\delta_2,w) = u(x) - u(x-\pi)
    \end{equation*}
\end{subequations}
We assume that the utility function satisfies the \emph{non-symmetric bet aversion} at the reference point $x=x_0$, $\pi > 0$. By taking the derivative of $\mathcal{U}$ with respect to $w$ at the reference point, we have the following by applying the chain rule of derivatives: 
\[
    \left. \frac{\partial \mathcal{U}(x,\delta_1,\delta_2,w)}{\partial w} \right|_{x=x_0} = - \left. \frac{\partial u(x-\pi)}{\partial (x - \pi)} \right|_{x=x_0} \left. \frac{\partial (x-\pi)}{\partial w} \right|_{x=x_0} = \left. \frac{\partial u(y)}{\partial (y)} \right|_{x=x_0} \left. \frac{\partial \pi}{\partial w} \right|_{x=x_0}
\]
We have to underline that as $x_0-\pi < 0$ under the assumption of \emph{non-symmetric bet aversion}, the derivative $\left. \frac{\partial u(x-\pi)}{\partial (x - \pi)} \right|_{x=x_0}$ exists. Due to the strictly increasing monotonicity of the utility function, \textit{the derivative of the utility premium with respect to $w$ has the same sign with the derivative of the risk premium with respect to $w$}. Nevertheless, by taking into consideration the results of subsections \ref{section:Probabilistic_Utility_Premium} and \ref{section:Probabilistic_Probability_Premium}, we can conclude to the following lemma:
\begin{lemma}
    We assume the fundamental condition for loss aversion about the utility function, $\frac{u'(x_0^-)}{u'(x_0^+)} > 1$. If $\frac{\partial \mathcal{RS}(x_0,\delta_1,\delta_2,w)}{\partial w} > 0$, then $\frac{\partial \mathcal{U}(x_0,\delta_1,\delta_2,w)}{\partial w} > 0 \Leftrightarrow \left. \frac{\partial \pi}{\partial w} \right|_{x=x_0} > 0$. On the other hand, if $\left. \frac{\partial \pi}{\partial w} \right|_{x=x_0} > 0 \Leftrightarrow \frac{\partial \mathcal{U}(x_0,\delta_1,\delta_2,w)}{\partial w} > 0$, then it is not, in general, implied that $\frac{\partial \mathcal{RS}(x_0,\delta_1,\delta_2,w)}{\partial w} > 0$.
\end{lemma}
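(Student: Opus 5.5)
The lemma has two parts, and I would handle them separately: a positive implication, and a non-implication shown by counterexample.

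\textbf{Positive direction.} I would chain three earlier results.

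First, assume $\frac{u'(x_0^-)}{u'(x_0^+)}>1$ and $\frac{\partial \mathcal{RS}_{stat}(x_0,\delta_1,\delta_2,w)}{\partial w}>0$. The earlier lemma for the non-symmetric stationary case then gives $\frac{\partial \mathcal{U}(x_0,\delta_1,\delta_2,w)}{\partial w}>0$ for all such $w$. Its mechanism is that positivity of the numerator of $\partial_w\mathcal{RS}_{stat}$ forces
\[
\frac{u'(x_0-w\delta_1)}{u'(x_0+w\delta_2)}>\mathcal{RS}_{stat}\,\frac{\delta_2}{\delta_1}>1,
\]
where the last inequality uses $\mathcal{RS}_{stat}>\frac{\delta_1}{\delta_2}$.

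Second, I would invoke the chain-rule identity derived just before the statement,
\[
\frac{\partial \mathcal{U}}{\partial w}\Big|_{x=x_0}=u'(x_0-\pi)\,\frac{\partial \pi}{\partial w}\Big|_{x=x_0}.
\]
The point $x_0-\pi$ lies strictly in the loss subdomain, because non-symmetric bet aversion, which follows from $\mathcal{RS}_{stat}>\frac{\delta_1}{\delta_2}$, gives $\pi>0$. So $u'(x_0-\pi)$ exists, and it is positive by strict monotonicity. Hence $\frac{\partial\mathcal{U}}{\partial w}$ and $\frac{\partial\pi}{\partial w}$ have the same sign, which yields the stated equivalence and, with the first step, positivity of both.

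\textbf{Non-implication.} I would reuse the Neilson-type Köbberling--Wakker function from the preceding subsection, with $\alpha=0.5$, $\beta=2.5$, $\lambda_1=1$, $\lambda_2=1.5$, $\delta_1=1$, $\delta_2=2$. For this function $u'(x_0^+)=1$ and $u'(x_0^-)=\lambda_1+\lambda_2=2.5$, so the fundamental loss-aversion condition holds.

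I would first check that $\frac{\partial\mathcal{U}}{\partial w}=\frac{\delta_1\delta_2}{\delta_1+\delta_2}\bigl(u'(x_0-w\delta_1)-u'(x_0+w\delta_2)\bigr)>0$ for all $w>0$. This holds because
\[
u'(x_0-w\delta_1)=\lambda_1e^{-\beta w\delta_1}+\lambda_2\ge\lambda_2=1.5>1\ge e^{-\alpha w\delta_2}=u'(x_0+w\delta_2).
\]
By the equivalence above, this also gives $\frac{\partial\pi}{\partial w}>0$ near $x_0$. To be careful, I would verify $\pi>0$ directly from $\mathcal{U}(x_0,\cdot,w)>0$, which follows from $\mathcal{U}(w\to0)=0$ together with $\mathcal{U}$ increasing.

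Then I would plug into the limit formula
\[
\lim_{w\to0^+}\mathcal{RS}'_{stat}=\frac{u'(x_0^-)}{2u'(x_0^+)}\,\frac{\delta_1}{\delta_2}\bigl(\delta_1A(x_0^-)+\delta_2A(x_0^+)\bigr).
\]
Here $A(x_0^+)=\alpha=0.5$ and $A(x_0^-)=-u''(x_0^-)/u'(x_0^-)=-\lambda_1\beta/2.5=-1$. This gives $\delta_1A(x_0^-)+\delta_2A(x_0^+)=-1+1=0$, a borderline value.

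This borderline outcome is the main obstacle. The example may need retuning, for instance taking $\delta_2=1$, which gives $-1+0.5<0$, or increasing $\beta$, so that the limit is strictly negative. Once it is strictly negative, continuity gives $\partial_w\mathcal{RS}_{stat}<0$ for small $w>0$, completing the counterexample.
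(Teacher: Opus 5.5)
Your proposal is correct and follows the paper's own route. You chain the earlier non-symmetric lemmas: loss aversion together with $\partial_w\mathcal{RS}_{stat}>0$ forces $u'(x_0-w\delta_1)>u'(x_0+w\delta_2)$, hence $\partial_w\mathcal{U}>0$. You combine this with the chain-rule identity $\partial_w\mathcal{U}=u'(x_0-\pi)\,\partial_w\pi$ at the loss-domain point $x_0-\pi$, and you reuse the Neilson-type K\"{o}bberling--Wakker function for the non-implication.

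Your worry about that example is justified, and your fix is needed. With $\delta_1=1$, $\delta_2=2$, the limit of $\partial_w\mathcal{RS}_{stat}$ as $w\to0^+$ is exactly $0$, and in fact $\mathcal{RS}_{stat}$ is increasing for every $w>0$, so the paper's stated parameters do not give a counterexample. Taking $\delta_2=1$, or $\beta>2.5$, gives a strictly negative limit while preserving $u'(x_0-w\delta_1)\ge\lambda_2>1\ge u'(x_0+w\delta_2)$, which completes the argument.
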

The same analysis holds true for the proportional or relative risk premium, $\pi^* = \frac{\pi}{x}$.

By contrast, the proposed probabilistic metric identifies threshold probability distributions that determine the preference of one decision entity over another giving information for both fair and non-fair gambles. These entities may be stationary states or gambles with equal or unequal probability distributions. In this work, we focus on binary gambles as the elementary case of the proposed framework; however, the method can be generalized to $N$-outcome gambles. 

To sum up, probabilistic risk sensitivity provides a complementary and directly interpretable tool for describing risk behavior as a function of the probability distribution. It should be used {primary and at the same time coherently} with, rather than independently from, risk-premium and utility-premium measures when determining utility-function parameters.

\section{Conclusion}\label{Section5}

This work uses a generalized framework for interpreting risk sensitivity, which is visualized through utility curvature, and loss aversion through the lens of binary gambles, offering a structured perspective on subjective risk preferences beyond the scope of traditional models based on the utility premium. By reframing key behavioral concepts, such as symmetric and non-symmetric bet aversion along with their increasing definitions and the general cases of binary gambles with equal or unequal probability distributions, within a binary gamble framework, we not only clarified the foundational definitions but also addressed theoretical ambiguities in the existing literature. In addition, we compare different risk-sensitivity metrics and clarify how they should be used when setting value-function parameters. 

Our analysis demonstrated that binary gambles serve not merely as illustrative tools, but as a unifying methodological foundation capable of recovering and extending canonical concepts in CPT. 

The proposed generalized framework opens several promising research directions. First, it calls for empirical validation across diverse decision-making contexts, such as intertemporal choice, environmental risk, or AI-human interaction systems. Second, it offers potential applications in algorithmic design for decision support systems that must account for nuanced and context-dependent human risk preferences. Finally, it raises foundational questions about the normative versus descriptive role of utility in behavioral modeling, especially as decision systems evolve to be more adaptive and personalized.

In conclusion, by enriching the theoretical foundations of CPT with structurally grounded and behaviorally consistent extensions, this work advances the ongoing development of risk-aware decision theory, offering a framework that more faithfully reflects the complexity of human judgment and choice under uncertainty.

\section*{Acknowledgment}
This work is supported by the European Research Council (ERC) under the EU’s Horizon 2020 research and innovation programme (Grant agreement No. 101003431). S. Vaidanis is partially supported by the Onassis Foundation (Scholarship ID: F ZU 076-1/2024-2025). 

\begin{appendices}

\section{Proofs of the Theorems}
In this section, we present the proofs of the theorems referenced in the main text. Before proceeding, we introduce several useful propositions that will serve as foundational tools for the subsequent proofs.

\subsection{Proof of Proposition \ref{concave_direct}}
\begin{proposition} \label{concave_direct}
If the utility function $u$ is strictly increasing and strictly concave over a subdomain, then for any $x$ in that subdomain and any $\delta_1, \delta_2 > 0$, it holds that
\[
\mathbb{P}_{\text{gain}}(x,\delta_1,\delta_2) > \frac{\delta_1}{\delta_1 + \delta_2}.
\]
\end{proposition}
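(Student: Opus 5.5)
The plan is to write the current state $x$ as a convex combination of the two gamble outcomes and apply strict concavity directly. Set $\lambda := \frac{\delta_1}{\delta_1+\delta_2}\in(0,1)$. Then
\[
\lambda\,(x+\delta_2) + (1-\lambda)\,(x-\delta_1) = x + \frac{\delta_1\delta_2 - \delta_2\delta_1}{\delta_1+\delta_2} = x .
\]
This is the fairness weight already identified in the chord construction of Section~\ref{Section3}. Because $x-\delta_1 < x+\delta_2$ and both endpoints lie in the subdomain, strict concavity gives the strict Jensen inequality
\[
u(x) > \lambda\, u(x+\delta_2) + (1-\lambda)\, u(x-\delta_1).
\]
Geometrically, the point $A$ on the chord $(\epsilon)$ lies strictly below the graph at $x$.

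Next I will rearrange this inequality into the threshold form. Subtract $u(x-\delta_1)$ from both sides to obtain
\[
u(x)-u(x-\delta_1) > \lambda\,\bigl(u(x+\delta_2)-u(x-\delta_1)\bigr).
\]
Strict monotonicity of $u$ gives $u(x+\delta_2)-u(x-\delta_1)>0$, so dividing by this quantity preserves the inequality. The left-hand side then becomes exactly the expression for $\mathbb{P}_{\text{gain}}(x,\delta_1,\delta_2)$ derived in Section~\ref{Section3}, which yields $\mathbb{P}_{\text{gain}}(x,\delta_1,\delta_2)>\frac{\delta_1}{\delta_1+\delta_2}$.

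There is no substantive obstacle here. The only care needed is bookkeeping:
\begin{itemize}
\item The statement must be read as requiring the whole interval $[x-\delta_1,x+\delta_2]$ to lie in the subdomain where $u$ is strictly concave, so the qualifier ``any $\delta_1,\delta_2>0$'' is restricted accordingly, as in Theorem~\ref{concave_equivalent}.
\item The denominator $u(x+\delta_2)-u(x-\delta_1)$ is nonzero, and positive, which is what allows the division without flipping the inequality.
\end{itemize}

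As an equivalent sanity check, the same inequality can be expressed in the ratio form $\mathcal{RS}_{stat}(x,\delta_1,\delta_2)>\frac{\delta_1}{\delta_2}$. This follows from the chordal slope inequality for strictly concave functions,
\[
\frac{u(x)-u(x-\delta_1)}{\delta_1} > \frac{u(x+\delta_2)-u(x)}{\delta_2},
\]
combined with positivity of $u(x+\delta_2)-u(x)$.
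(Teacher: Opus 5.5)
Your proof is correct, but it is not the route the paper takes for this proposition. The paper cross-multiplies to reduce the claim to the slope comparison $\frac{u(x)-u(x-\delta_1)}{\delta_1} > \frac{u(x+\delta_2)-u(x)}{\delta_2}$. It then applies the Mean Value Theorem on $[x-\delta_1,x]$ and $[x,x+\delta_2]$ to get points $\xi<\rho$ whose derivatives equal these slopes, and concludes because $u'$ is strictly decreasing. You instead apply the defining chord inequality of strict concavity at the fairness weight $\lambda=\frac{\delta_1}{\delta_1+\delta_2}$ and rearrange. This is exactly the argument the paper uses later for Theorem~\ref{concave_equivalent}, so your proof shows the proposition is simply one direction of that theorem.

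Your route is more general. It needs no differentiability, whereas the paper's Mean Value Theorem step tacitly assumes $u$ is differentiable on the open intervals. The stated hypotheses (strictly increasing and strictly concave) do not guarantee this: for example, $\min(x,2x)-x^2$ near $0$ is strictly increasing and strictly concave but has a kink. Your final ``sanity check'' essentially recovers the paper's reduction, but you justify the slope inequality by the three-chord property of concave functions rather than by the Mean Value Theorem, which again avoids derivatives.

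What the paper's framing buys is an interpretation of $\mathcal{RS}_{stat}$ as a ratio of average marginal utilities on the loss and gain sides. It reuses this device in the appendix proofs of Lemmas~\ref{EUT_symmetric}--\ref{EUT_Neilson}, where comparing $u'(x\mp\delta)$ with the mean-value points is needed for the monotonicity claims.

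Your explicit caveat that $[x-\delta_1,x+\delta_2]$ must lie inside the subdomain is a real correction to the statement as written (``any $\delta_1,\delta_2>0$''). The paper's own proof needs this containment too but never states it.
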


\begin{proof}
We begin with the expression for the gain threshold probability:
\[
\mathbb{P}_{\text{gain}}(x,\delta_1,\delta_2) = \frac{u(x) - u(x - \delta_1)}{u(x + \delta_2) - u(x - \delta_1)}.
\]
We want to show that $\frac{u(x) - u(x - \delta_1)}{u(x + \delta_2) - u(x - \delta_1)} > \frac{\delta_1}{\delta_1 + \delta_2}$.
This is equivalent to $\delta_2 \cdot \big(u(x) - u(x - \delta_1)\big) > \delta_1 \cdot \big(u(x + \delta_2) - u(x)\big)$,
which can be rewritten as $\frac{u(x) - u(x - \delta_1)}{\delta_1} > \frac{u(x + \delta_2) - u(x)}{\delta_2}$.

Applying now the Mean Value Theorem to the intervals $[x - \delta_1, x]$ and $[x, x + \delta_2]$, there exist points $\xi \in (x - \delta_1, x)$ and $\rho \in (x, x + \delta_2)$ such that:
\[
u'(\xi) = \frac{u(x) - u(x - \delta_1)}{\delta_1}, \quad u'(\rho) = \frac{u(x + \delta_2) - u(x)}{\delta_2}.
\]
Since $u$ is strictly concave, its derivative is strictly decreasing. Hence, $\xi < \rho$ implies $u'(\xi) > u'(\rho)$, and therefore $\frac{u(x) - u(x - \delta_1)}{\delta_1} > \frac{u(x + \delta_2) - u(x)}{\delta_2}$.
\end{proof}

\subsection{Proof of Proposition \ref{convex_direct}}
\begin{proposition} \label{convex_direct}
If the utility function $u$ is strictly increasing and strictly convex over a subdomain, then for any $x$ in that subdomain and any $\delta_1, \delta_2 > 0$, it holds that
\[
\mathbb{P}_{\text{gain}}(x,\delta_1,\delta_2) < \frac{\delta_1}{\delta_1 + \delta_2}.
\]
\end{proposition}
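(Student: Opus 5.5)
The argument mirrors the proof of Proposition~\ref{concave_direct}, with the inequality reversed. Since $u$ is strictly increasing, $u(x+\delta_2)-u(x-\delta_1)>0$, so we may cross-multiply without changing the direction of the inequality. The claim $\mathbb{P}_{\text{gain}}(x,\delta_1,\delta_2)<\frac{\delta_1}{\delta_1+\delta_2}$ is then equivalent to
\[
(\delta_1+\delta_2)\bigl(u(x)-u(x-\delta_1)\bigr)<\delta_1\bigl(u(x+\delta_2)-u(x-\delta_1)\bigr).
\]
Subtracting $\delta_1\bigl(u(x)-u(x-\delta_1)\bigr)$ from both sides gives $\delta_2\bigl(u(x)-u(x-\delta_1)\bigr)<\delta_1\bigl(u(x+\delta_2)-u(x)\bigr)$. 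Dividing by $\delta_1\delta_2>0$, this becomes
\[
\frac{u(x)-u(x-\delta_1)}{\delta_1}<\frac{u(x+\delta_2)-u(x)}{\delta_2}.
\]
In words: the slope of the left chord must be smaller than the slope of the right chord.

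The proof therefore reduces to this chord-slope inequality. I would prove it directly from the definition of strict convexity rather than through the Mean Value Theorem, so that no differentiability of $u$ is needed. Write $x$ as a convex combination of the endpoints,
\[
x=\lambda(x-\delta_1)+(1-\lambda)(x+\delta_2),\qquad \lambda=\frac{\delta_2}{\delta_1+\delta_2}\in(0,1).
\]
Strict convexity then gives
\[
u(x)<\frac{\delta_2}{\delta_1+\delta_2}\,u(x-\delta_1)+\frac{\delta_1}{\delta_1+\delta_2}\,u(x+\delta_2).
\]
Rearranging this inequality yields exactly the chord-slope inequality above. In fact, solving it for the normalized threshold reproduces the original claim immediately, so the cross-multiplication step can be bypassed altogether.

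If one instead follows the paper's Mean Value Theorem route, there are points $\xi\in(x-\delta_1,x)$ and $\rho\in(x,x+\delta_2)$ at which $u'$ equals the two chord slopes. Since $u'$ is strictly increasing for a strictly convex differentiable $u$, and $\xi<\rho$, we get $u'(\xi)<u'(\rho)$. The only point needing care in that route is justifying that differentiability is available and that $u'$ is strictly monotone. This is why I would present the convex-combination argument as the main proof and mention the Mean Value Theorem version only as an alternative.
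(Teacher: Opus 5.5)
Your proof is correct, and its decisive step differs from the paper's. Both arguments first reduce the claim to the chord-slope inequality $\frac{u(x)-u(x-\delta_1)}{\delta_1}<\frac{u(x+\delta_2)-u(x)}{\delta_2}$. From there the paper applies the Mean Value Theorem on $[x-\delta_1,x]$ and $[x,x+\delta_2]$ and uses that $u'$ is strictly increasing. That tacitly assumes $u$ is differentiable on the whole interval, which the statement does not grant, since a strictly convex function may have kinks. You instead apply the definition of strict convexity at the interior point $x=\frac{\delta_2}{\delta_1+\delta_2}(x-\delta_1)+\frac{\delta_1}{\delta_1+\delta_2}(x+\delta_2)$. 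This needs no differentiability and gives the threshold inequality after one division by the positive quantity $u(x+\delta_2)-u(x-\delta_1)$. It is precisely the reasoning the paper uses later in its proof of Theorem~\ref{convex_equivalent}, so your route shows the proposition is one direction of that theorem, valid for every strictly convex $u$. The Mean Value Theorem version buys only an interpretation in terms of marginal utilities ($u'(\xi)<u'(\rho)$). That matches the derivative-based style of the EUT lemmas, where comparisons such as $u'(x-\delta)$ against a chord slope are genuinely needed. You also correctly flagged the differentiability issue in that route yourself. One caveat applies equally to both proofs: ``any $\delta_1,\delta_2>0$'' must be read as requiring $[x-\delta_1,x+\delta_2]$ to lie inside the subdomain of convexity, because your convexity inequality is invoked at $x-\delta_1$ and $x+\delta_2$. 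For an S-shaped $u$ the conclusion can fail once the gamble straddles the reference point.
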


\begin{proof}
Recall that the gain probability threshold is defined as:
\[
\mathbb{P}_{\text{gain}}(x,\delta_1,\delta_2) = \frac{u(x) - u(x - \delta_1)}{u(x + \delta_2) - u(x - \delta_1)}.
\]
We aim to show that $\frac{u(x) - u(x - \delta_1)}{u(x + \delta_2) - u(x - \delta_1)} < \frac{\delta_1}{\delta_1 + \delta_2}$. 
Cross-multiplying gives the equivalent inequality:
\[
\delta_2 \cdot \big(u(x) - u(x - \delta_1)\big) < \delta_1 \cdot \big(u(x + \delta_2) - u(x)\big).
\]
Dividing both sides by $\delta_1 \delta_2$ yields $\frac{u(x) - u(x - \delta_1)}{\delta_1} < \frac{u(x + \delta_2) - u(x)}{\delta_2}$.

Now, apply the Mean Value Theorem to each interval:
\begin{itemize}
    \item On $[x - \delta_1, x]$, there exists $\xi \in (x - \delta_1, x)$ such that $u'(\xi) = \frac{u(x) - u(x - \delta_1)}{\delta_1}$.
    \item On $[x, x + \delta_2]$, there exists $\rho \in (x, x + \delta_2)$ such that $u'(\rho) = \frac{u(x + \delta_2) - u(x)}{\delta_2}$.
\end{itemize}

Because $u$ is strictly convex, its derivative $u'$ is strictly increasing. Since $\xi < \rho$, it follows that $u'(\xi) < u'(\rho)$, and thus, $\frac{u(x) - u(x - \delta_1)}{\delta_1} < \frac{u(x + \delta_2) - u(x)}{\delta_2}$.
\end{proof}

\subsection{Proof of Proposition \ref{linear_direct}}
\begin{proposition} \label{linear_direct}
If the utility function $u$ is strictly increasing and linear over a subdomain, then for any $x$ in that subdomain and any $\delta_1, \delta_2 > 0$, it holds that
\[
\mathbb{P}_{\text{gain}}(x,\delta_1,\delta_2) = \frac{\delta_1}{\delta_1 + \delta_2}.
\]
\end{proposition}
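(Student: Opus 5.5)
The plan is a direct computation, with no need for the Mean Value Theorem used in Propositions \ref{concave_direct} and \ref{convex_direct}. On the subdomain, write the linear, strictly increasing utility as $u(y) = \mu + \kappa y$ with $\kappa > 0$. Here ``linear'' is read as affine, which is consistent with the paper's remark that affine transformations leave the probabilistic metrics unchanged. Strict monotonicity forces $\kappa > 0$, and this is exactly what guarantees that the denominator in $\mathbb{P}_{\text{gain}}$ is nonzero.

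Next, substitute into the definition
\[
\mathbb{P}_{\text{gain}}(x,\delta_1,\delta_2) = \frac{u(x) - u(x - \delta_1)}{u(x + \delta_2) - u(x - \delta_1)}.
\]
The numerator becomes $\kappa\delta_1$ and the denominator becomes $\kappa(\delta_1+\delta_2)$, since the constant $\mu$ cancels in both differences. Dividing by $\kappa > 0$ gives $\frac{\delta_1}{\delta_1+\delta_2}$, which is the claim.

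An equivalent route mirrors the earlier proofs. It reduces the claim to the equality of difference quotients
\[
\frac{u(x)-u(x-\delta_1)}{\delta_1} = \frac{u(x+\delta_2)-u(x)}{\delta_2},
\]
which holds because both quotients equal the constant slope $\kappa$. I would mention this route only as a consistency check.

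The only point needing care is that $x-\delta_1$ and $x+\delta_2$ must lie in the subdomain where $u$ is affine. This is implicit in the statement, as in Theorem \ref{linear_equivalent}, where $[x-\delta_1,x+\delta_2]\subset I$. I expect no real obstacle beyond stating that hypothesis explicitly.
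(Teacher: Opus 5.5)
Your proposal is correct and matches the paper's proof. The paper likewise assumes $u(z)=\alpha z+\beta$ with $\alpha>0$ on $[x-\delta_1,x+\delta_2]$, substitutes into $\mathbb{P}_{\text{gain}}$, and cancels $\alpha$ to obtain $\frac{\delta_1}{\delta_1+\delta_2}$. Your explicit requirement that both outcomes lie in the affine region is the same interval hypothesis the paper adopts.
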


\begin{proof}
Assume that $u$ is linear on the interval $[x - \delta_1, x + \delta_2]$, so it can be written as $u(z) = \alpha z + \beta$ for some constants $\alpha > 0$ and $\beta \in \mathbb{R}$.

Then:
\[
\begin{aligned}
u(x - \delta_1) &= \alpha(x - \delta_1) + \beta, \\
u(x) &= \alpha x + \beta, \\
u(x + \delta_2) &= \alpha(x + \delta_2) + \beta.
\end{aligned}
\]

Substituting into the definition of $\mathbb{P}_{\text{gain}}$: $\mathbb{P}_{\text{gain}}(x,\delta_1,\delta_2) = \frac{u(x) - u(x - \delta_1)}{u(x + \delta_2) - u(x - \delta_1)}$.

Using the expressions above:
\[
\begin{aligned}
u(x) - u(x - \delta_1) &= \alpha x + \beta - \left[\alpha(x - \delta_1) + \beta\right] = \alpha \delta_1, \\
u(x + \delta_2) - u(x - \delta_1) &= \alpha(x + \delta_2 - x + \delta_1) = \alpha(\delta_1 + \delta_2).
\end{aligned}
\]

Therefore: $\mathbb{P}_{\text{gain}}(x,\delta_1,\delta_2) = \frac{\alpha \delta_1}{\alpha(\delta_1 + \delta_2)} = \frac{\delta_1}{\delta_1 + \delta_2}$.
\end{proof}

\subsection{Proof of Lemma \ref{EUT_symmetric}}

To start with, we will prove the \emph{probabilistic symmetric bet aversion}: $\mathcal{RS}_{stat}(x,\delta) > 1 \Leftrightarrow \frac{\frac{u(x) - u(x-\delta)}{x-(x-\delta)}}{\frac{u(x+\delta)-u(x)}{(x+\delta)-x}}>1$ which holds true because by applying the Mean Value Theorem at the intervals $[x-\delta,x]$ and $[x,x+\delta]$ we have that $\exists p \in (x-\delta,x)$ and $\exists \xi \in (x,x+\delta)$ such that $u'(p) = \frac{u(x) - u(x-\delta)}{x-(x-\delta)},\; u'(\xi)=\frac{u(x+\delta)-u(x)}{(x+\delta)-x}$ and due to the curvature of u, we have that $u'(p)>u'(\xi)$.

Next, we will prove the \emph{probabilistic increasing symmetric bet aversion}: $\frac{\partial \mathcal{RS}_{stat}(x,\delta)}{\partial \delta} > 0 \Leftrightarrow \frac{u'(x-\delta)}{u'(x+\delta)} > \frac{u(x) - u(x-\delta)}{u(x+\delta)-u(x)} \Leftrightarrow \frac{u'(x-\delta)}{u'(x+\delta)} > \frac{\frac{u(x) - u(x-\delta)}{x-(x-\delta)}}{\frac{u(x+\delta)-u(x)}{(x+\delta)-x}}$ which holds true because by applying the Mean Value Theorem at the intervals $[x-\delta,x]$ and $[x,x+\delta]$ we have that $\exists p \in (x-\delta,x)$ and $\exists \xi \in (x,x+\delta)$ such that $u'(p) = \frac{u(x) - u(x-\delta)}{x-(x-\delta)},\; u'(\xi)=\frac{u(x+\delta)-u(x)}{(x+\delta)-x}$ and due to the monotonicity and the curvature of u, we have that $u'(x-\delta)>u'(p), \; u'(x+\delta)<u'(\xi)$.

\subsection{Proof of Lemma \ref{EUT_non_symmetric}}

To start with, we will prove the \emph{probabilistic non-symmetric bet aversion}: $\mathcal{RS}_{stat}(x,\delta_1,\delta_2,w) > \frac{\delta_1}{\delta_2} \Leftrightarrow \frac{\frac{u(x) - u(x-w\cdot\delta_1)}{x-(x-w\cdot\delta_1)}}{\frac{u(x+w\cdot\delta_2)-u(x)}{(x+w\cdot\delta_2)-x}}>1 \Leftrightarrow $ which holds true because by applying the Mean Value Theorem at the intervals $[x-w\cdot\delta_1,x]$ and $[x,x+w\cdot\delta_2]$ we have that $\exists p \in (x-w\cdot\delta_1,x)$ and $\exists \xi \in (x,x+w\cdot\delta_2)$ such that $u'(p) = \frac{u(x) - u(x-w\cdot\delta_1)}{x-(x-w\cdot\delta_1)},\; u'(\xi)=\frac{u(x+w\cdot\delta_2)-u(x)}{(x+w\cdot\delta_2)-x}$ and due to the curvature of u, we have that $u'(p)>u'(\xi)$.

Next, we will prove the \emph{probabilistic increasing symmetric bet aversion}: $\frac{\partial \mathcal{RS}_{stat}(x,\delta_1,\delta_2,w)}{\partial w} > 0 \Leftrightarrow \frac{u'(x-w\cdot\delta_1)}{u'(x+w\cdot\delta_2)} > \frac{\delta_2}{\delta_1} \cdot \frac{u(x) - u(x-w\cdot\delta_1)}{u(x+w\cdot\delta_2)-u(x)} \Leftrightarrow \frac{u'(x-w\cdot\delta_1)}{u'(x+w\cdot\delta_2)} > \frac{\frac{u(x) - u(x-w\cdot\delta_1)}{x-(x-w\cdot\delta_1)}}{\frac{u(x+w\cdot\delta_2)-u(x)}{(x+w\cdot\delta_2)-x}}$ which holds true because by applying the Mean Value Theorem at the intervals $[x-w\cdot\delta_1,x]$ and $[x,x+w\cdot\delta_2]$ we have that $\exists p \in (x-w\cdot\delta_1,x)$ and $\exists \xi \in (x,x+w\cdot\delta_2)$ such that $u'(p) = \frac{u(x) - u(x-w\cdot\delta_1)}{x-(x-w\cdot\delta_1)},\; u'(\xi)=\frac{u(x+w\cdot\delta_2)-u(x)}{(x+w\cdot\delta_2)-x}$ and due to the monotonicity and the curvature of u, we have that $u'(x-w\cdot\delta_1)>u'(p), \; u'(x+w\cdot\delta_2)<u'(\xi)$.

\subsection{Proof of Lemma \ref{EUT_Neilson}}

Regarding the \emph{Neilson' strong aversion}, by applying the Mean Value Theorem at the intervals $[x-\delta_1,x-\delta_3]$ and $[x+\delta_2,x+\delta_4]$ we have that $\exists p \in (x-\delta_1,x-\delta_3)$ and $\exists \xi \in (x+\delta_2,x+\delta_4)$ such that $u'(p) = \frac{u(x-\delta_3) - u(x-\delta_1)}{(x-\delta_3)-(x-\delta_1)},\; u'(\xi)=\frac{u(x+\delta_4)-u(x+\delta_2)}{(x+\delta_4)-(x+\delta_2)}$ and due to the curvature of u, we have that $u'(p)>u'(\xi)$.

Regarding the \emph{Probabilistic increasing symmetric nested bet aversion with equal probability distribution}, we have the following: $\frac{\partial \mathcal{RS}^{equal}_{gamble}(x,\delta_1,\delta_2,\delta_2,\delta_1)}{\partial \delta_1} > 0 \Leftrightarrow \frac{u'(x-\delta_1)}{u'(x+\delta_1)} > \frac{u(x-\delta_2) - u(x-\delta_1)}{u(x+\delta_1)-u(x+\delta_2)} \Leftrightarrow \frac{u'(x-\delta_1)}{u'(x+\delta_1)} > \frac{\frac{u(x-\delta_2) - u(x-\delta_1)}{(x-\delta_2)-(x-\delta_1)}}{\frac{u(x+\delta_1)-u(x+\delta_2)}{(x+\delta_1)-(x+\delta_2)}}$ which holds true because by applying the Mean Value Theorem at the intervals $[x-\delta_1,x-\delta_2]$ and $[x+\delta_2,x+\delta_1]$ we have that $\exists p \in (x-\delta_1,x-\delta_2)$ and $\exists \xi \in (x+\delta_2,x+\delta_1)$ such that $u'(p) = \frac{u(x-\delta_2) - u(x-\delta_1)}{((x-\delta_2)-(x-\delta_1))},\; u'(\xi)=\frac{u(x+\delta_1)-u(x+\delta_2)}{(x+\delta_1)-(x+\delta_2)}$ and due to the monotonicity and the curvature of u, we have that $u'(x-\delta_1)>u'(p), \; u'(x+\delta_1)<u'(\xi)$.

\subsection{Proof of Theorem \ref{concave_equivalent}}
Let $a=x-\delta_1$ and $b=x+\delta_2$, so that $a<x<b$ and
\[
    x=\frac{\delta_2}{\delta_1+\delta_2}a+\frac{\delta_1}{\delta_1+\delta_2}b.
\]
Since $u$ is strictly increasing, $u(b)-u(a)>0$. The inequality
\[
\mathbb{P}_{\text{gain}}(x,\delta_1,\delta_2)>\frac{\delta_1}{\delta_1+\delta_2}
\]
is equivalent to
\[
    u(x)>\frac{\delta_2}{\delta_1+\delta_2}u(a)+\frac{\delta_1}{\delta_1+\delta_2}u(b),
\]
which is exactly the strict concavity inequality at the interior point $x$. Because every triple $(a,x,b)$ with $a<x<b$ can be represented by $\delta_1=x-a$ and $\delta_2=b-x$, the condition holds for all admissible triples if and only if $u$ is strictly concave on $I$.

\subsection{Proof of Theorem \ref{convex_equivalent}}
With $a=x-\delta_1$ and $b=x+\delta_2$, the inequality
\[
\mathbb{P}_{\text{gain}}(x,\delta_1,\delta_2)<\frac{\delta_1}{\delta_1+\delta_2}
\]
is equivalent to
\[
    u(x)<\frac{\delta_2}{\delta_1+\delta_2}u(a)+\frac{\delta_1}{\delta_1+\delta_2}u(b).
\]
This is precisely the strict convexity inequality at each interior point of each interval $[a,b]\subset I$. Since all such interior points are generated by admissible $x,\delta_1,\delta_2$, the stated condition is equivalent to strict convexity on $I$.

\subsection{Proof of Theorem \ref{linear_equivalent}}
Again set $a=x-\delta_1$ and $b=x+\delta_2$. The equality
\[
\mathbb{P}_{\text{gain}}(x,\delta_1,\delta_2)=\frac{\delta_1}{\delta_1+\delta_2}
\]
is equivalent to
\[
    u(x)=\frac{\delta_2}{\delta_1+\delta_2}u(a)+\frac{\delta_1}{\delta_1+\delta_2}u(b).
\]
Thus $u$ coincides with its chord at every interior point of every interval contained in $I$. This chord-linearity property is equivalent to affinity on $I$. Conversely, any affine strictly increasing function satisfies the equality immediately by substitution.

\end{appendices}

\bibliographystyle{IEEEtran}
\bibliography{my_bibliography}

\end{document}